\documentclass[a4paper,10pt]{article}

\usepackage[english]{babel}
\usepackage[T1]{fontenc}
\usepackage[dvips]{graphicx}
\usepackage{amsmath, amsfonts,amsthm, mathrsfs, amssymb}
\usepackage{verbatim}

\usepackage{enumerate}
\usepackage[latin1]{inputenc}
\usepackage{bbm}
\usepackage{geometry}

\def\bq{{\bf q}}
\def\bee{{\bf e}}
\def\bp{{\bf p}}

\newcommand{\wtr}{{\widetilde{r}}}
\newcommand{\wts}{{\widetilde{s}}}


\newcommand{\wtP}{\widetilde{P}}
\newcommand{\wtQ}{\widetilde{Q}}

\newcommand{\wtS}{\widetilde{S}}

\def\wcK{\widetilde{\cK}}

\def\norma#1{\left\|#1\right\|}
\def\sleq{\preceq}


\newcommand{\wtcH}{{\widetilde{\mathcal H}}}

\def\U{{\mathcal U}}



\def\vphi{{\tt f}}




\newcommand{\C}{{\mathbb C}}

\newcommand{\N}{{\mathbb N}}

\newcommand{\R}{{\mathbb R}}

\newcommand{\T}{{\mathbb T}}

\newcommand{\cA}{{\mathcal A}}

\newcommand{\cD}{{\mathcal D}}
\newcommand{\cE}{{\mathcal E}}
\newcommand{\cF}{{\mathcal F}}

\newcommand{\cH}{{\mathcal H}}

\newcommand{\cJ}{{\mathcal J}}
\newcommand{\cK}{{\mathcal K}}

\newcommand{\cO}{{\mathcal O}}
\newcommand{\cP}{{\mathcal P}}

\newcommand{\cR}{{\mathcal R}}
\newcommand{\cS}{{\mathcal S}}
\newcommand{\cT}{{\mathcal T}}
\newcommand{\cU}{{\mathcal U}}
\newcommand{\cV}{{\mathcal V}}
\newcommand{\cW}{{\mathcal W}}


\newcommand{\fh}{{\mathfrak{h}}}

\newcommand{\fr}{{\mathfrak{r}}}

\renewcommand{\d}{\partial}
\newcommand{\dt}{\partial_t}
\newcommand{\lap}{\Delta}
\newcommand{\grad}{\nabla}

\newcommand{\norm}[1]{\| #1 \|}
\newcommand{\normk}[2]{\| #1 \|_{\mathcal{H}^{#2}}}
\newcommand{\mmod}[1]{\left| #1 \right|}

\newcommand{\la}{\left\langle}
\newcommand{\ra}{\right\rangle}

\newcommand{\id}{\mathbbm{1}}
\newcommand{\im}{{\rm i}}
\newcommand{\jap}[1]{\langle #1 \rangle}


\newcommand{\be}{b_{\mathcal{E}}}

\newcommand{\Pip}{\varPi_p}

\newcommand{\Pipj}{\frac{\partial \varPi_p}{\partial p_j}}
\newcommand{\Pipk}{\frac{\partial \varPi_p}{\partial p_k}}

\newcommand{\Pipz}{\varPi_{0}}

\newcommand{\etap}{\eta_p}

\newcommand{\etapj}{\frac{\partial \eta_p}{\partial p_j}}
\newcommand{\etapk}{\frac{\partial \eta_p}{\partial p_k}}

\newcommand{\SH}{\mathcal{S}}
\newcommand{\SR}{\mathcal{R}}

\newcommand{\ALS}{\mathcal{A}\ell \mathcal{S}}

\newcommand{\Tr}{\mathscr{T}}

\def\ap{ {\mathfrak{A}} }

\def\V{{\mathcal V}}
\def\e{{\rm e}}
\def\di{{\rm d}}
\def\uno{\mathbbm{1}}
\def\dep#1{\frac{\partial\Pip}{\partial p_{#1}}}
\def\depv#1#2{\frac{\partial #1}{\partial p_{#2}}}
\def\dip#1#2{\frac{\partial#1}{\partial p_{#2}}}
\numberwithin{equation}{section}
\newtheorem{theorem}{Theorem}[section]
\newtheorem{lemma}[theorem]{Lemma}
\newtheorem{corollary}[theorem]{Corollary}
\newtheorem{proposition}[theorem]{Proposition}
\newtheorem{definition}[theorem]{Definition}
\newtheorem{remark}[theorem]{Remark}

\def\intre{\int_{\R^3}}
\def\resto{{\mathcal R}}
\def\Sc{{\mathcal S}}
\def\coS{{\mathcal F}}
\def\ex#1{e^{#1 JA_j}}
\def\ps{{\mathfrak{p}}}
\def\Piz{\varPi_0}
\def\Pis{\varPi_{\ps}}
\def\bN{{\bf N}}


\title{Freezing of energy of a soliton in an external potential}

\author{D. Bambusi\footnote{Dipartimento di Matematica, Universit\`a degli Studi di Milano, Via Saldini 50, I-20133
Milano. \newline
 \textit{Email: } \texttt{dario.bambusi@unimi.it}},
 A. Maspero\footnote{Dipartimento di Matematica, Universit\`a degli Studi di Milano, Via Saldini 50, I-20133
Milano. \newline
 \textit{Email: } \texttt{alberto.maspero@unimi.it}}}
\def\bv{{\bf v}}

\begin{document}

\maketitle

\begin{abstract}
In this paper we study the dynamics of a soliton in the generalized
NLS with a small external potential $\epsilon V$ of Schwartz class.
We prove that there exists an effective mechanical system describing
the dynamics of the soliton and that, for any positive integer $r$,
the energy of such a mechanical system is almost conserved up to times
of order $\epsilon^{-r}$. In the rotational invariant case we deduce
that the true orbit of the soliton remains close to the mechanical one
up to times of order $\epsilon^{-r}$.
\end{abstract}

\section{Introduction and Statement of the Main Result }

\subsection{Introduction}
Consider the equation
\begin{equation}
 \label{NLSV}
 \im \dt \psi = -\lap \psi  - \beta'(|\psi|^2)\psi + \epsilon V( x)
 \psi \ ,\quad x\in\R^3\ ,
\end{equation}
where $V$ is a potential of Schwartz class, $\beta\in
C^{\infty}(\R,\R)$ is a function fulfilling
\begin{equation}
\label{num}
\mmod{\beta^{(k)}(u)} \leq C_k \la u \ra^{1+p-k} \ , \quad \beta'(0) = 0 \, \quad p<2 \ ,
\end{equation}
and $\epsilon$ is a small parameter.

In the case $\epsilon=0$, under suitable assumptions on $\beta$,
equation \eqref{NLSV} admits solitary wave solutions, namely solutions
which travel with uniform velocity (solitons, for short). Such
solutions form an 8 dimensional {\it soliton manifold} $\Tr$ (see
\eqref{soliton.manifold} for a precise definition) parametrized by the
mass $m$ of the soliton, by its linear momentum $\bp$, by a Gauge
angle $q^4$ and by the barycentre $\bq$.

Take now $\epsilon\not=0$, then, up to higher order corrections, the
restriction of the Hamiltonian \eqref{NLSV} to the soliton manifold
$\Tr$ takes the form of an $m$ dependent constant plus
\begin{equation}
\label{hamsol}
H^\epsilon_{mech}(\bp,\bq)=\frac{\left|\bp\right|^2}{2m}+\epsilon
V^{eff}_m(\bq)\ ,
\end{equation}
where $V^{eff}_m$ is an effective potential (see \eqref{veff}), which
for large mass $m$ is close to $V$ (see e.g. \cite{frohlich1}).
Formally \eqref{hamsol} is the Hamiltonian of a particle subject to
the force due to the effective potential. However, the soliton
manifold is not invariant under the dynamics: the soliton and the rest
of the field are coupled, so the soliton is expected to move according
to the Hamilton equations of \eqref{hamsol} only approximately. In
particular the coupling is expected to lead to radiation of energy and
to an effective dissipation on the dynamics of the soliton.

The main result of the present paper is that the coupling between the
soliton and the rest of the field is not effective up to very long
times.  Precisely, if the initial datum $\psi_0$ is
$\cO(\epsilon^{1/2})$-close to the soliton manifold and the mechanical
energy $H^\epsilon_{mech}$ is $\cO(\epsilon)$, then one has
\begin{equation}
\label{estim}
|H^\epsilon_{mech}(t)-H^\epsilon_{mech}(0)|< C\epsilon^{3/2}\ ,\quad
|t|\leq \epsilon^{-r}\ , \quad \forall r\in\N \ ,
\end{equation}
for $\epsilon$ small enough.  

A particularly interesting corollary can be deduced if the initial
datum and the potential $V$ are axially symmetric. Indeed in such a
case the soliton's motion is one dimensional, and if its initial
energy is not a critical value of $\epsilon V^{eff}_m$, then the orbit
is $\cO(\epsilon^{3/2})$-close to the orbit of the mechanical
system. This is true for times of order $\epsilon^{-r}$, $\forall
r$. The most interesting case is the one in which the orbit of the
system \eqref{hamsol} is periodic: in such a case the true motion of
the soliton is also approximately periodic for very long times. Of
course the approximate period of the true motion is different from the
period of the orbit of \eqref{hamsol}.

\vskip 10pt

The problem of the dynamics of a soliton in an external potential has
been widely studied, and the results obtained so far can be
essentially divided into 2 groups: in the first group of papers, the
authors describe the dynamics of the soliton up to long, but finite
times
\cite{frohlich1,holmer_zworski,jonsson2006,salem2009,holmer2007,holmer2011},
while in the second group of papers the authors exploit dispersive
properties of the equations (in the case of potentials going to zero
at infinity) in order to study the asymptotic behaviour of the
soliton \cite{gustafson_nakanishi_tsai,gang_sigal2005,gang_sigal,
  gang_weinstein,deift_park,cuccagna_maeda2, cuccagna_maeda1}.

The results of the papers of the first group deal mainly with the case
of potentials of the form $V(\epsilon x)$ (no $\epsilon$ in front of
the potential) and in the most favorable cases (in particular when the
potential is confining) they show that, up to a small error and for a
time scale of order $\epsilon^{-3/2}$, the variables $(\bp,\bq)$
evolve according to the equations of the effective Hamiltonian
\eqref{hamsol} (see \cite{jonsson2006}). Some numerical computations
done in the case of localized potentials show that the true motions of
the soliton are actually different from the mechanical ones and that
the difference becomes macroscopic after a quite short time scale (see
\cite{holmer_zworski}).  We point
out that this is not surprising, since even in the case of classical
integrable finite dimensional systems, motions starting nearby get far
away after quite short time scales.

The classical way to get control of the dynamics for longer times
consists in renouncing to control the evolution of all the coordinates
and to keep control only on some relevant quantities, e.g. the actions
or the energy of some subsystem. In the case of the soliton's dynamics
in NLS, this is possible since the system turns out to be composed by
two subsystems whose evolution occurs over different time scales: the
time scale of the soliton's dynamics is of order $\epsilon^{-1}$,
while the time scale of the field is of order 1. The situation is
analogous to that met in the classical problem of realization of
holonomic constraints (see \cite{BGG1,BGG2,BG93,BGPP}), from which we
borrow ideas and techniques.

Coming to the results of the second group we first recall
\cite{gang_sigal}, in which the authors consider a potential of
the form $V(\epsilon x)$ with a nondegenerate minimum at $x=0$ and
prove that, for $\epsilon$ small, the solution with the
soliton at rest at the bottom of the well is asymptotically stable.

Our result pertains mainly initial data which are not close to the
minimum, and prove that the soliton dynamics is conservative up to
very long times, so that, up to such times it does not display phenomena of
asymptotic stability. In the axially symmetric case we conclude that
the soliton's orbit remains close to a mechanical orbit for the times
we are considering. We remark that the result of \cite{gang_sigal} is
obtained exactly under the assumptions of axially symmetric potential
and initial data in which we control the orbit of the soliton in the
present paper.

We recall now the result of \cite{deift_park} in which
the following equation is considered
$$
i \psi_t= -\psi_{xx}-q\delta_0 (x) \psi-\left|\psi\right|^2\psi \ ,
$$ with even initial data; here $\delta_0(x)$ is a Dirac delta
function playing the role of a potential. The authors exploit the fact
that such an equation is equivalent to an integrable system on the
half line and they describe the long time asymptotics of the dynamics,
in particular they show that the solution converges to a soliton at
rest at the origin plus radiation.

Finally we come to \cite{cuccagna_maeda2,cuccagna_maeda1}. The
authors consider equation \eqref{NLSV} with a soliton having positive
energy of order 1 and prove that, for $\epsilon$ small, the soliton
asymptotically behaves as a solution of the free NLS.  They also
consider a case in which $\epsilon=1$, but in this case they either
assume that
the soliton is far from the region where the potential is
significantly different from zero or has a large velocity.

\subsection{Main result}\label{mainres}

Equation \eqref{NLSV} is Hamiltonian with Hamiltonian function given
by
\begin{align}
\label{H}
H(\psi)& := H_0(\psi) + H_P(\psi)+\epsilon H_V(\psi)  \ ,
\\
 H_0(\psi)&:= \int_{\R^3} \mmod{\nabla \psi(x)}^2 \, d^3x \ , \qquad
  H_P(\psi):=  - \int_{\R^3} \beta(|\psi(x)|^2) \, d^3 x \ .
\\
H_V(\psi)&:= \int_{\R^3}  V(  x) |\psi(x)|^2 \, d^3x \ ,
\end{align}
To start with we will study the system in the energy space $H^1$
endowed by the scalar product and the symplectic form
\begin{equation}
\label{scalsympNLS}
\la \psi_1, \psi_2 \ra := 2 \mbox{ Re } \int_{\R^3} \psi_1(x)\,
\overline{\psi_2(x)} \, dx \ ; \quad \omega(\psi_1, \psi_2):= \la E
\psi_1, \psi_2 \ra \ ,
\end{equation}
where we denoted by $E:= \im$ the symplectic operator. In the
following we will denote by $J:= E^{-1} \equiv -\im$ the standard
Poisson tensor. For a $C^1$ function $H:H^{1} \to \R$ we denote by
$\nabla H$ its gradient, defined by the equation
$$ \di H(\psi)\Phi = \langle \nabla H, \Phi \rangle, \quad \forall
\Phi \in H^{1} \ .
$$
The Hamiltonian vector field $X_H$ of a Hamiltonian function $H$ is thus  given by
$X_{H}:= J\nabla H$ and the corresponding Hamiltonian equations are given by $\dot{\psi} = J \nabla H$.

The Hamiltonian \eqref{H} is invariant under the Gauge
transformation and, in the case $\epsilon=0$, it is also invariant
under translations. We denote by $\cP_j(\psi)$, $j=1,...,4$, the
corresponding conserved quantities, which are explicitly given by
\begin{align}
\label{momenta}
\cP_j(\psi) &:= \int_{\R^3} \overline{\psi(x)}\, \im \partial_{x_j}
\psi(x) \, dx \equiv \frac{1}{2} \la A_j \psi, \psi \ra \ , \qquad
j=1,2,3\ , 
\\
\label{mass1}
\cP_4(\psi) &:= \int_{\R^3} \mmod{\psi(x)}^2 \, dx  \equiv \frac{1}{2} \la A_4 \psi, \psi \ra \ ,
\end{align}
where $A_j:= \im \partial_{x_j}$, $1 \leq j \leq 3$ and $A_4 := \uno$.
The Hamiltonian flows of the $\cP_j$'s will be denoted by
\begin{align}
\label{trans.boost}
\left[ e^{{ q}  J A_j}\psi\right](x):= \psi(x - {q}{\bf e}_j)
\ ,\ j=1,2,3\ ,
\\
\nonumber
e^{q J A_4}\psi:= e^{-i q} \psi \ ,
\end{align}
and of course they are the symmetries of the Hamiltonian when
$\epsilon=0$. Here we denoted $\bee_1:=(1,0,0)$ and similarly
$\bee_2$, $\bee_3$.

We recall that the solitons are the critical points of
$H|_{\epsilon=0}$ at fixed values of the momenta $\cP_j$. They can be
constructed starting from the ground state $b_{\cE}$ with zero
velocity, which is the minimum of $H|_{\epsilon=0}$ constrained to a
surface of constant $\cP_4$. In order to ensure existence of the
ground state we assume:

\begin{enumerate}[(H1)]
 \item There exists an open interval $\mathcal{I} \subset \R$ such that,  $\forall \, \cE \in \mathcal{I}$, the equation
\begin{equation}
 \label{solitonNLS}
 -\lap b_{\cE} -  \beta'(b^2_{\cE})b_{\cE} + \cE b_{\cE}=0 \
\end{equation}
admits a $C^{\infty}$ family of real, positive, radially symmetric functions
$b_{\cE}$ belonging to the Schwartz space.

\end{enumerate}

The quantity
\begin{equation}
\label{mass}
m=m(\cE):=\cP_4(b_{\cE})/2
\end{equation}
will play the role of mass of the soliton. Defining
$$\widetilde\eta(\bv,\cE):=e^{- \im \frac{\bv\cdot x}{2}}b_{\cE}$$
one gets the initial datum for a soliton
moving with velocity $\bv\equiv(v_1,v_2,v_3)$.

We also assume that
\begin{itemize}
\item[(H2)] One has $\frac{d}{d \cE}\norm{b_{\cE}}^2_{L^2}>0 $, $\forall \,
  \cE \in \mathcal{I}$ ,
\end{itemize}
so that $b$ can be parametrized by the mass $m$ instead of $\cE$. An
explicit computation gives
$$
\cP_j(\widetilde\eta(\bv,\cE) )= m v_j\ ,\quad j=1,2,3\ ,
$$ which shows the analogy with the momentum of a particle. In order
to state our main theorem it is useful to consider $m$ as a
parameter and to take into account also the translations of the states
$\widetilde\eta$. We will denote
\begin{align}
\label{eta1}
\eta_m(\bp,\bq):=e^{\sum_{j=1}^3q^jJA_j} \ \widetilde\eta (\bp/m,\cE(m))\ ,
\\
\label{we}
\bp:=(p_1,p_2,p_3)\ ,\quad \bq:=(q^1,q^2,q^3)\ .
\end{align}

\begin{remark}
\label{rem1}
Fix an initial value $(\bp_0,\bq_0)$ for momentum and position, then
the solution of \eqref{NLSV} with $\epsilon=0$, corresponding to the
initial datum $\eqref{eta1}$ has the form
\begin{equation}
\label{sol0}
\psi(x,t)=e^{-\im
  (\cE+\frac{|v|^2}{4})t}\null\hskip2pt\eta_m(\bp_0,\bq_0+\frac{\bp_0}{m}
t) \ .
\end{equation}
Consider the linearization of eq. \eqref{NLSV}, with $\epsilon=0$, at such
solution: in terms of real and imaginary parts of $\psi$ it can be
written in the form $\dot \psi=L_0\psi$ with
\begin{equation}
\label{L0}
L_0:= \left[
\begin{matrix}
0& -L_-
\\
L_+&0
\end{matrix}
\right]\ ,
\end{equation}
and
\begin{equation}
 L_+:= -\lap + \cE -\beta'(b^2_{\cE}) \ , \qquad L_{-}:=-\lap + \cE -\beta'(b^2_{\cE}) - 2\beta^{''}(b^2_{\cE})b^2_{\cE} \ .
\end{equation}
It is classical that (due to the symmetries of the system) $0$ is an
eigenvalue of $L_0$ with multiplicity at least $8$. Furthermore $L_0$
has a purely imaginary continuous spectrum given by
$\bigcup_{\pm}\pm\im[\cE,+\infty)$.
\end{remark}

We assume
\begin{itemize}
\item[(H3)]  The Kernel of the operator $L_+$ is generated by $b_{\cE}$ and
the Kernel of the operator $L_-$ is generated $\partial _jb_\cE$,
$j=1,2,3$. 
\item[(H4)] $\pm i\cE$ are not resonances of $L_0$.
\item[(H5)] The pure point spectrum of $L_0$ contains only zero.
\end{itemize}

\begin{remark}
\label{exis}
Under assumptions (H2,H3) above, the solutions \eqref{sol0} are
orbitally stable when $\epsilon=0$ and, under (H4) and the further
assumption that the so called nonlinear Fermi Golden Rule holds they
are also asymptotically stable (see \cite{buslaev_perelman,cuccagna_mizumachi, bambusi2013, cuccagna2014}).
\end{remark}
\begin{remark}
\label{h5}
Assumption (H5) is here required only for the sake of simplicity: we
expect that using the methods of \cite{bambusi_cuccagna,bambusi2013}
(see also \cite{cuccagna_maeda2,cuccagna_maeda1}) one can remove
such an assumption.
\end{remark}

In order to state the main theorem we consider the mechanical
Hamiltonian system \eqref{hamsol} with
\begin{equation}
\label{veff}
V^{eff}_m(\bq):=\intre V(x+\bq) b^2_{\cE(m)}(x) \, d^3x\ .
\end{equation}

Our main result is the following theorem.

\begin{theorem}
\label{main}
Fix a positive integer $r \in \N$ and positive constants
$K_1,K_2,T_0$. Then there exist positive constants $\epsilon_r$, $C_1$
s.t. if $0\leq\epsilon<\epsilon_r$, then the following holds true:
consider an initial datum $\psi_0\in H^1$ s.t. there exist $(\bar
m,\bar \alpha)$ and $(\bar \bp,\bar\bq)$ with
\begin{equation}
\begin{aligned}
\label{epsilon0}
   \norm{\psi_0 - e^{\im \bar \alpha} \eta_{\bar m }(\bar\bp,\bar\bq)}_{H^1} \leq
   K_1\epsilon^{1/2} \\ H^\epsilon_{mech}(\bar\bp,\bar\bq)<K_2 \epsilon\ ,
\end{aligned}
\end{equation}
then, for $|t|\leq T_0 \epsilon^{-r}$,
the solution $\psi(t)$ of \eqref{NLSV} exists in $H^1$ and admits
the decomposition
\begin{equation}
\label{dec1}
\psi(t) :=e^{\im\alpha(t)} \eta_{m}(\bp(t),\bq(t)) + \phi(t)   \ ,
\end{equation} with a constant $m$ and smooth functions
$\bp(t),\bq(t), \alpha(t)$ s.t.
\begin{align}
\label{sti1}
\left|H^\epsilon_{mech}(\bp(t),\bq(t))-
H^\epsilon_{mech}(\bp(0),\bq(0))\right|\leq
C_1\epsilon^{3/2}\ ,\quad|t|\leq\frac{T_0}{\epsilon^r}\ .
\end{align}
 Furthermore, for the same times one has
\begin{equation}
\label{stif}
\norma{\phi(t)}_{H^1}\leq C_1\epsilon^{1/2}\ .
\end{equation}
\end{theorem}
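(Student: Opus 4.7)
The plan is to follow the classical slow--fast Hamiltonian approach, in the spirit of the realization of holonomic constraints \cite{BGG1, BG93}, adapted to the present infinite-dimensional context. The soliton parameters $(\bp, \bq)$ evolve on the slow time scale $\epsilon^{-1}$ while the field $\phi$ evolves on the fast scale $O(1)$, and it is this separation that allows the energy of the effective mechanical system to be frozen up to times $\epsilon^{-r}$ for every $r$.

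First I would construct symplectic coordinates $(m, \alpha, \bp, \bq, \phi)$ in a neighborhood of the soliton orbit, writing $\psi = e^{\im\alpha}\eta_m(\bp, \bq) + e^{\im\alpha}\phi$ with $\phi$ symplectically orthogonal to the tangent space of the orbit of $\eta_m$ under the symmetry group; by (H3) the orbit is transversal and the implicit function theorem yields smooth modulation parameters. Since $\cP_4$ is exactly preserved by the full dynamics and $\cP_4(e^{\im\alpha}\eta_m(\bp,\bq)) = 2m$, the value of $m$ can be fixed constant along the motion, which is the choice appearing in the statement. Expanding the Hamiltonian in these coordinates, using that $b_{\cE}$ satisfies \eqref{solitonNLS} so that the linear part in $\phi$ is of order $\epsilon$, one obtains
\begin{equation*}
H = E_0(m) + H^\epsilon_{mech}(\bp, \bq) + \tfrac{1}{2}\langle L(m)\phi, \phi\rangle + \epsilon \langle \Lambda(\bp, \bq), \phi \rangle + \cR,
\end{equation*}
where $JL(m)$ has purely continuous spectrum $\bigcup_\pm \pm\im[\cE(m), \infty)$ on the symplectic complement by (H4) and (H5), and $\cR$ gathers cubic and higher terms in $\phi$ together with contributions of order $\epsilon^2$.

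Next I would perform a Birkhoff-type averaging: a sequence of canonical transformations generated by $\chi_k$ of order $\epsilon^k$, each chosen to solve a homological equation of the form $\{\tfrac{1}{2}\langle L(m)\phi, \phi\rangle, \chi_k\} = R_k$, successively eliminates the coupling between the slow variables and the fast variable at each order. Solvability with a bounded generating function follows from the spectral gap between the kernel and the continuous spectrum of $JL(m)$ guaranteed by (H3)--(H5), so no small denominators appear. After $r$ iterations the transformed Hamiltonian takes the form
\begin{equation*}
\widetilde H = E_0(m) + H^\epsilon_{mech}(\bp', \bq') + \cZ(m, \phi') + \cR_r,
\end{equation*}
where $\cZ$ Poisson-commutes with $H^\epsilon_{mech}$ and the Hamiltonian vector field of $\cR_r$ is $O(\epsilon^{r+1})$ in $H^1$. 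Hence $|\dot H^\epsilon_{mech}| = |\{H^\epsilon_{mech}, \cR_r\}| = O(\epsilon^{r+1}\|\phi'\|_{H^1})$, and combining the initial condition $\|\phi(0)\|_{H^1} = O(\epsilon^{1/2})$ with a bootstrap argument that maintains $\|\phi(t)\|_{H^1} = O(\epsilon^{1/2})$ throughout the time interval (via conservation of $H$ and of $m$, together with the coercivity of $\langle L(m)\phi, \phi\rangle$ on the symplectic complement) yields $|H^\epsilon_{mech}(t) - H^\epsilon_{mech}(0)| = O(\epsilon^{3/2})$ for $|t| \leq T_0 \epsilon^{-r}$.

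The principal obstacle is the infinite-dimensional nature of the fast subsystem: since $\phi$ lives in the continuous spectrum of $L_0$ one cannot invoke classical Diophantine conditions, and care is needed to control the normal form iterations uniformly in the slowly-varying parameter $m$ so that the spectral gap of $JL(m)$ remains uniform over the relevant $m$-interval. A delicate further point is maintaining the $H^1$ bound on $\phi$ during the iterations without appealing to dispersive estimates; this requires exploiting the coercivity of the Hessian of $H|_{\epsilon=0}$ at the soliton modulo the zero modes, a consequence of (H3) and the fact that $b_{\cE}$ is a constrained minimum.
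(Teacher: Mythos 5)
Your overall setup (modulation coordinates, normal form, slow--fast separation in the spirit of \cite{BGG1,BG93}) is the right starting point, but the central step of your argument is exactly the step that fails in this problem, and the paper's scheme is designed around that failure.

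You propose a full Birkhoff-type averaging: a sequence of canonical transformations generated by $\chi_k = O(\epsilon^k)$ solving homological equations $\{\tfrac{1}{2}\langle L(m)\phi,\phi\rangle,\chi_k\}=R_k$, so that after $r$ steps \emph{all} coupling between the slow variables $(\bp,\bq)$ and the fast variable $\phi$ is pushed to order $\epsilon^{r+1}$. This cannot be carried out here. The homological equation is solvable without small denominators only for the part of $R_k$ that is \emph{linear} in $\phi$: there the Lie derivative along the flow of $H_L$ acts as $L_0$, and (H3)--(H5) give invertibility of $L_0$ on the continuous spectral subspace. For the parts of $R_k$ that are quadratic or higher in $\phi$, the homological operator involves sums and differences of points in the continuous spectrum $\pm\im[\cE,\infty)$, which accumulate at zero; there is no spectral gap and no bounded inverse. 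The paper states this obstruction explicitly in the ``Scheme of the proof'': because $\sigma(L_0)$ is continuous, one can only remove the terms linear in $\phi$, and that is precisely what Theorem~\ref{normal.form} does (normal form ``at order $\fr$'' means vanishing of $d_\phi$ of the Taylor coefficients at $\phi=0$). Your claim that ``no small denominators appear'' is only true for the linear-in-$\phi$ terms; for the rest it is false.

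Because the full averaging is impossible, the remaining coupling terms (quadratic and cubic in $\phi$, with $(\bp,\bq)$-dependent coefficients, appearing in $H_R$ in \eqref{rest}) cannot be made $O(\epsilon^{r+1})$, and your subsequent estimate $|\dot H^\epsilon_{mech}| = O(\epsilon^{r+1}\|\phi'\|_{H^1})$ does not hold. A mere $H^1$ bound $\|\phi\|_{H^1}=O(\epsilon^{1/2})$ (which is all that coercivity of the Hessian plus energy conservation can provide, as you suggest) is not enough: the surviving coupling terms would then contribute $O(\epsilon\cdot t)$ to the drift of $H_L$ over a time interval of length $t$, which is disastrous for $t\sim\epsilon^{-r}$. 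The paper's substitute for the missing normal form is exactly the dispersive machinery you propose to avoid: once the linear-in-$\phi$ terms are removed, the equation for $\phi$ has the form $\dot\phi = L(t)\phi + (\text{quadratic}+\text{higher}) + O(\mu^{\fr})$, and the Strichartz estimates of Lemma~\ref{pere} give $\|\phi\|_{L^2_t W^{1,6}_x}\lesssim\mu$ uniformly over $[0,T]$ with $T\le T_0\mu^{3-\fr}$. It is this time-integrated $L^2_tW^{1,6}_x$ decay --- not a pointwise-in-time $H^1$ bound --- that lets one estimate $\int_0^t\langle EL_0\phi,\dot\phi\rangle\,d\tau$ (Lemmas~\ref{lem:HL0.est}--\ref{lem:HLO.diff}) and conclude $|H_L(t)-H_L(0)|\lesssim\mu^4$, hence $|\fh_m(t)-\fh_m(0)|\lesssim\mu^2=\epsilon^{1/2}$ in the rescaled variables, i.e.\ $O(\epsilon^{3/2})$ after undoing the scaling. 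Your final paragraph, which singles out the avoidance of dispersive estimates as a virtue, thus points in precisely the wrong direction: the Strichartz input is not an optional refinement but the mechanism that replaces the impossible higher-order averaging.

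Two further, more minor, discrepancies. First, the paper's modulation coordinates $\coS$ in \eqref{p.q.phi.coord} are not canonical; a Darboux theorem (Theorem~\ref{darboux}) is needed before any symplectic normal form can be attempted, and non-trivial care (the classes of ``almost smoothing'' maps, the regularizing projector $\Pip$) is required because $q\mapsto e^{q^jJA_j}$ is not smooth in the operator topology. Your construction simply posits symplectic modulation coordinates; in this setting that is itself a theorem. Second, the scaling $\mu=\epsilon^{1/4}$ with $p=\mu^2\tilde p$, $\phi=\mu\tilde\phi$ in \eqref{second} is what produces the structure $H_L + \mu^2\fh_m + H_R$ with $H_R$ small, and the normal-form order and the time horizon are both most naturally expressed in $\mu$; without this rescaling the bookkeeping of orders in your proposal does not line up with the final estimate \eqref{sti1}.
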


\begin{remark}
\label{correz}
In the above statement, the quantities $\bar m,\bar \alpha, \bar\bp,
\bar\bq,$ do not coincide with $m$ and with the initial values of
$\alpha(t),\bp(t),\bq(t)$. This is due to the fact that $\psi_0-e^{\im
  \bar \alpha} \eta_{\bar m }(\bar\bp,\bar\bq)$ could have some
``nontrivial component along the soliton manifold'', so one has to add
a small correction to $(\bar m,\bar\alpha,\bar\bp,\bar\bq)$ in order to
avoid this. To give a precise meaning to the above loose statement is
non trivial and we will show in Section \ref{ada} how this has to be
done.
\end{remark}

As anticipated above, in the axially symmetric case one can deduce a
particularly interesting corollary. To come to its statement consider
the case where the potential is symmetric under rotations about the
$z$ axis (of course the choice of such an axis is arbitrary) and take
an initial datum symmetric under rotations about the same axis, assume
it fulfills \eqref{epsilon0}, then, from the proof, one has that the functions
$(\bp(t),\bq(t))$ also belong to the $z$-axis.  Consider the solution
$(\bp_m(t),\bq_m(t))$ of the Hamiltonian system $H^\epsilon_{mech}$
with initial data $(\bp(0),\bq(0))$. Denote by
$$
\Gamma_m:=\bigcup_{t\in\R} \left\{ (\bp_m(t),\bq_m(t))\right\}
$$ the mechanical orbit (which of course is a level set of
$H^\epsilon_{mech}$ restricted to the $z$-axis), and introduce in
$\R^6$ the weighted norm
\begin{equation}
\label{weight}
\left\|(\bp,\bq)\right\|^2_{\epsilon}:=\sum_{k=1}^3(p_k^2+\epsilon
q_k^2)\ ,
\end{equation}
then one has the following corollary.
\begin{corollary}
\label{1d}
With the above notations, and under the assumptions of Theorem
\ref{main}, assume also that
$\frac{1}{\epsilon}H^\epsilon_{mech}(\bp(0),\bq(0))$ is not a critical
value of $V^{eff}_m$, then there exists a positive constant $C_2$ such that
the functions $(\bp(t),\bq(t))$ of equation \eqref{dec1} fulfill
\begin{equation}
\label{sti3}
d_\epsilon(( \bp(t),\bq(t) );\Gamma_m)\leq C_2 \epsilon^{3/2}\ , \qquad  \forall
\left|t\right|\leq T_0 \epsilon^{-r}\ ,
\end{equation}
where $d_\epsilon(.;.)$ is the distance in the norm \eqref{weight}.
\end{corollary}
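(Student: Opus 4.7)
Proof plan. Since $V$ and the initial datum are invariant under rotations about the $z$-axis, and this symmetry is preserved both by the flow of \eqref{NLSV} and by all the manipulations producing the decomposition \eqref{dec1}, the proof of Theorem \ref{main} can be arranged so that $\bp(t),\bq(t)$ stay on the $z$-axis for every $t$ in the relevant time interval; this is the invariance statement anticipated just before the corollary. The problem therefore reduces to a purely one-dimensional question for the scalar pair $p:=p_z(t), q:=q_z(t)$, governed by the mechanical Hamiltonian $H^\epsilon_{mech}(p,q)=p^2/(2m)+\epsilon V^{eff}_m(q)$ on $\R^2$ together with the almost-conservation estimate \eqref{sti1}. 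The orbit $\Gamma_m$ is then the connected component through $(p(0),q(0))$ of the relevant level set.

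The strategy is to couple \eqref{sti1} with a quantitative lower bound on $\nabla H^\epsilon_{mech}$ in a neighbourhood of $\Gamma_m$. To expose the natural scaling, I would rescale the momentum, setting $P:=p/\sqrt\epsilon$, so that $h(P,q):=H^\epsilon_{mech}(p,q)/\epsilon = P^2/(2m)+V^{eff}_m(q)$ is an $\epsilon$-independent mechanical Hamiltonian; in these variables $\Gamma_m$ becomes a level curve of $h$ at value $E_0:=H^\epsilon_{mech}(p(0),q(0))/\epsilon$, and \eqref{sti1} reads $|h(P(t),q(t))-E_0|\leq C_1\epsilon^{1/2}$. The critical points of $h$ are precisely the pairs $(0,q_*)$ with $(V^{eff}_m)'(q_*)=0$, and their critical values coincide with those of $V^{eff}_m$; hence the hypothesis on $E_0$ gives $\nabla h \ne 0$ on $\Gamma_m$. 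Combined with the a priori confinement discussed below and continuity, this yields a uniform lower bound $|\nabla h|\geq c_0>0$ on a fixed tubular neighbourhood of $\Gamma_m$ containing the trajectory. A standard level-set argument---for instance, flowing from $(P(t),q(t))$ along $\nabla h/|\nabla h|^2$ for ``time'' $h(P(t),q(t))-E_0$, or applying the implicit function theorem to realize nearby level sets as graphs over $\Gamma_m$---then produces a point $(P_*,q_*)\in\Gamma_m$ whose Euclidean distance from $(P(t),q(t))$ is bounded by a constant times $\epsilon^{1/2}$. Setting $p_*:=\sqrt\epsilon P_*$ gives $(p_*,q_*)\in\Gamma_m$ in the original coordinates, and unwinding the definition of the weighted norm \eqref{weight} yields \eqref{sti3}.

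The main obstacle is the a priori confinement of $(P(t),q(t))$ to the tube on which the gradient bound holds, uniformly on $|t|\leq T_0\epsilon^{-r}$. Boundedness of $P(t)$ is immediate from the bound on $h$ and the fact that $V^{eff}_m$ is bounded on $\R$, since it is the convolution of the Schwartz function $V$ with $b_{\cE(m)}^2$. If $\Gamma_m$ is compact---for instance a periodic orbit trapped in a well of $V^{eff}_m$, the case highlighted in the introduction---this confinement is automatic. If $\Gamma_m$ is unbounded the mechanical soliton escapes to infinity along one or both of its branches; one then exploits that $V^{eff}_m$ and all its derivatives vanish at infinity, so that outside a large ball the two branches of $\Gamma_m$ approach the horizontal lines $P=\pm\sqrt{2mE_0}$ along which $\nabla h$ does not degenerate, and the level-set argument extends to the whole unbounded tube.
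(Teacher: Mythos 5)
Your plan coincides with the paper's own proof: reduce to the one--dimensional $z$-axis motion, pass to the rescaled momentum $P=p/\sqrt\epsilon$ (the paper's $\tilde p$), invoke the almost-conservation of $\fh_m=H^\epsilon_{mech}/\epsilon$ (i.e.\ \eqref{esm}) together with the non-criticality hypothesis to conclude that $(P(t),q(t))$ stays $O(\mu^2)$-close to the level curve $\Gamma_m$ in the standard $\R^2$ metric, then scale back to recover the weighted distance \eqref{weight}. You in fact supply more detail than the paper (which dismisses the step as ``trivial'') on the confinement needed for a uniform gradient bound, in particular distinguishing compact from unbounded $\Gamma_m$; the only thing you leave implicit is the final ``unwinding'' of the weighted norm, which is worth writing out since $d_\epsilon=\sqrt{\epsilon}\,d_{\mathrm{std,rescaled}}$ and the level-set estimate gives $d_{\mathrm{std,rescaled}}\lesssim\mu^2=\sqrt\epsilon$.
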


Of course the most interesting case is the one in which $\Gamma_m$ is
a closed curve and thus the soliton essentially performs a periodic
orbit for the considered times.

\subsection{Scheme of the proof}\label{scpro}

The proof proceeds essentially in three steps: first we introduce a
system of coordinates $(p,q,\phi)$ close to the soliton manifold in
which the $p$'s are the momenta $\cP_j$ of a soliton, the $q$'s the
coordinates of the barycentre and a Gauge angle and $\phi$ represents
the free field (see eq. \eqref{p.q.phi.coord}). Such coordinates are
not canonical, so we have to prove a Darboux theorem in order to show
that it is possible to deform the coordinates into canonical
ones. This is obtained along the lines of the Darboux theorem
of \cite{bambusi2013} (see also \cite{cuccagna2014}).

Then we write the Hamiltonian in such canonical coordinates. After a
suitable rescaling of the variables, it turns out that $H$ has the
structure
\begin{equation}
\label{stuH}
H=\frac{1}{2}\left\langle
EL_0\phi;\phi\right\rangle+\epsilon^{1/2}
\left[\frac{|\bp|^2}{2m}+V^{eff}_m(\bq) \,\right]+\cO(\epsilon) \ ,
\end{equation}
where $\phi$ belongs to the spectral subspace corresponding to the
continuous spectrum of $L_0$ (defined by \eqref{L0}). Since
$\sigma_c(L_0)=\pm\im[\cE,+\infty)$, this implies that the typical
  frequency of the field is of order 1, while the typical frequency of
  the soliton is of order $\epsilon^{1/2}$ so that we are in the same
  framework met in the problem of realization of holonomic constraints
  in classical mechanics \cite{BGG1,BGG2,BG93,BGPP}. The classical
  methods used in that context consist in developing a normal form
  theory in which one eliminates from the Hamiltonian all the terms
  which are nonresonant with respect to the frequencies of the fast
  system, the field $\phi$, in our case. In classical mechanics, this
  is possible up to a remainder which is of arbitrary order or
  exponentially small in $\epsilon$. However, in the present case this
  is impossible since the spectrum of $L_0$ is continuous. So the only
  thing we can do and we actually do, is to remove from the
  Hamiltonian all the terms which are linear in the field $\phi$. This
  is the second step of the proof.

As a third and final step we exploit the so obtained normal form in order to get a control
of the dynamics. The main step in order to do that consists in showing
that the free field $\phi$ fulfills Strichartz estimates (as in the
linear NLS) and to exploit the Hamiltonian structure in order to
deduce that $H_L(\phi):= \frac{1}{2}\left\langle
EL_0\phi;\phi\right\rangle $ changes at most by $\cO(\epsilon^{3/2})$ up
to times of order $\epsilon^{-r}$. To this end we use some Strichartz
type estimates for time dependent linear operators which were already
obtained in \cite{beceanu,bambusi2013,perelman}. Finally we exploit
conservation of the Hamiltonian in order to conclude the proof.

The rest of the paper is organized as follows. In Sect. \ref{ada} we
introduce Darboux coordinates close to the soliton manifold. In that
section we also introduce some classes of maps that will play an
important role in the paper and that substitute standard smooth
maps. In Sect.~\ref{coordiHam} we rewrite the Hamiltonian in the
Darboux coordinates. Actually, the Hamiltonian has the same form also
after applying any change of coordinates belonging to a suitable class
of maps, which in particular will be the one used to put the system in
normal form. In Sect.~\ref{section_normal_form} we construct the
transformation putting the system in normal form. In
Sect. \ref{disphi} we use the normal form and dispersive properties of
NLS in order to get estimates of the solution and the proof of the
main theorem. Finally, in the appendixes we prove two auxiliary
results.

\vskip10pt
\noindent
{\it Acknowledgements.} This research was founded by the Prin project
2010-2011 ``Teorie geometriche e analitiche dei sistemi Hamiltoniani
in dimensioni finite e infinite''. The second author has been
partially supported by the Swiss National Science Foundation.

\section{Adapted coordinates and the Darboux theorem}\label{ada}

In the course of the paper we will need the standard Lebesgue spaces
$L^p$, the standard Sobolev spaces $W^{s,p}$ of  functions whose
weak derivatives of order $s$ are of class $L^p$, and the
corresponding Hilbert spaces $H^s:=W^{s,2}$. Furthermore, we need the
scale of Hilbert spaces $\cH^{s,k}(\R^3, \C) $ defined by
$$
\cH^{s,k}(\R^3, \C) := \{ \psi \mbox{ s.t. } \norm{\psi}_{\cH^{s,k}}:= \norm{\jap{x}^k (-\Delta +1)^{s/2} \psi}_{L^2(\R^3, \C)} < \infty \} \ ,
$$ and we will denote $\cH^{\infty} := \bigcap_{s,k} \cH^{s,k}$,
$\cH^{-\infty} := \bigcup_{s,k} \cH^{s,k}$.

Finally we will use the notation $a\sleq b$ to mean ``there exists a
constant $C$, independent of all the relevant parameters, such that
$a\leq Cb$''.

It is convenient to change slightly the notation concerning the
soliton: first we fix once for all a positive value of the mass
corresponding to which a ground state $b_{\cE(m)}$ exists.  We will
work close to the manifold
\begin{equation}
\label{t0}
\Tr_0:=\bigcup_{q\in\R^4}e^{q^jJA_j}b_{\cE(m)}\ ,\quad
q\equiv(q_1,q_2,q_3,q_4) \ ,
\end{equation}
where sum over repeated indexes is understood.

{\bf From now on we will denote by $\eta_p$ the following ground
  state}:
\begin{equation}
\label{grp}
\etap:= e^{- \im  \sum_{k=1}^{3}\frac{p_k}{2(m+p_4/2)} x_k} b_{\cE(m+p_4/2)} \ ,
\end{equation}
so that $\eta_0=b_{\cE(m)}$. The ground state $\eta_p$ fulfills the
equation
\begin{equation}
\label{gro}
-\Delta \eta_p+\nabla H_P(\eta_p)-\lambda^j(p)A_j\eta_p=0\ ,
\end{equation}
with
\begin{equation}
\label{lambda}
\lambda^j(p):= \frac{ p_j}{m+p_4/2}, \ \ \ j=1,2,3 \ , \quad
\lambda^4(p):= - \left( \cE(m+p_4/2) +\frac{|p|^2}{4(m+p_4/2)^2} \right)
\ .
\end{equation}
Furthermore, one has that
\begin{align*}
\cP_j(\eta_p) = p_j \ , \ j=1,2,3, \qquad \cP_4(\eta_p)= 2m + p_4 \ .
\end{align*}
Having fixed a small neighbourhood $\cJ\subset \R^4$ of $0$ we define
the soliton manifold by
\begin{equation}
\label{soliton.manifold}
 \Tr:= \bigcup_{q \in \R^4, \, p  \in \cJ} e^{q^j J A_j} \etap \ .
 \end{equation}
The tangent space to $\Tr$ at the point $\etap$ is generated by
$$ T_{\etap} \Tr:= \mbox{span} \left\{JA_j \etap, \etapk \right\}  $$
while  its symplectic orthogonal space $T^{\angle}_{\etap}\Tr $ is given by
\begin{equation}
\begin{aligned}
\label{symplectic.ort}
T^{\angle}_{\etap}\Tr= \Big\{ \Psi \in \cH^{-\infty}:\, & \  \omega(JA_j\etap, \Psi)= \la A_j\etap, \Psi \ra =0, \qquad \omega\left(\etapk, \Psi\right)= \langle E \etapk , \Psi \rangle =0
 \Big\} \ .
\end{aligned}
\end{equation}
We decompose the space $\cH^{-\infty}$ in the direct sum of the tangent
space of $\Tr$ at $\etap$ and its symplectic orthogonal. More
precisely we have the following lemma, whose proof is obtained by
taking the scalar product of \eqref{2.13} with $JA_j\eta_p$ or
with $\etapk$.
\begin{lemma}
 One has $\cH^{-\infty} = T_{\etap} \Tr \oplus T^{\angle}_{\etap}\Tr$. Explicitly the  decomposition of a vector $\Psi \in \cH^{-\infty}$ is given by
\begin{equation}\label{2.13}
 \Psi = P_j \etapj + Q^j JA_j \etap +\varPhi_{p}
\end{equation}
with
\begin{equation}
 P_j = \la A_j \etap, \Psi \ra \  , \quad Q^j =  -  \la E \etapj, \Psi \ra
\end{equation}
and $\varPhi_{p} \in T^{\angle}_{\etap}\Tr$ given by
\begin{equation}
\label{ortvect}
 \varPhi_{p}= \Psi - \la A_j \etap, \Psi \ra \etapj + \la E \etapj, \Psi \ra JA_j \etap \ .
\end{equation}
\end{lemma}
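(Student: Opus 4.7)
My plan is to reduce the claim to a routine linear-algebra check, once the table of symplectic pairings of the eight generators $\{\etapj\}_{j=1}^{4}\cup\{JA_j\eta_p\}_{j=1}^{4}$ of $T_{\eta_p}\Tr$ against the eight test functionals $\omega(JA_j\eta_p,\cdot)=\langle A_j\eta_p,\cdot\rangle$ and $\omega(\etapj,\cdot)=\langle E\etapj,\cdot\rangle$ cutting out $T^\angle_{\eta_p}\Tr$ is computed. Because this table turns out to be block-triangular with $\pm 1$ on the diagonal, both the direct sum decomposition and the explicit formulas for $P_j$ and $Q^j$ follow at once.

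The four identities I would prove are
\begin{equation*}
\omega(JA_j\eta_p,\etapk)=\delta_{jk},\quad\omega(JA_j\eta_p,JA_k\eta_p)=0,\quad\omega(\etapj,JA_k\eta_p)=-\delta_{jk},\quad\omega(\etapj,\etapk)=0.
\end{equation*}
The first follows from the general identity $\omega(X_H,\cdot)=dH(\cdot)$ combined with $\cP_j(\eta_p)=p_j$ for $j=1,2,3$ and $\cP_4(\eta_p)=2m+p_4$, applied to $H=\cP_j$. The second expresses the vanishing Poisson bracket $\{\cP_j,\cP_k\}(\eta_p)=0$, which holds because the operators $A_1,\dots,A_4$ mutually commute. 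The third is obtained from the first by antisymmetry of $\omega$. The fourth is the only identity with real content: from the explicit formula \eqref{grp} one has $\etapj=-\frac{\im x_j}{2(m+p_4/2)}\eta_p$ for $j\in\{1,2,3\}$, so $\etapj\overline{\etapk}$ is real and its imaginary part integrates to zero; in the mixed case $k=4$, after expanding $\partial_{p_4}\eta_p$ and cancelling the phase factor in $\etapj\,\overline{\partial_{p_4}\eta_p}$, the non-real contribution reduces to a multiple of $\int_{\R^3}x_j\,b_{\cE(m+p_4/2)}\,\partial_{p_4}b_{\cE(m+p_4/2)}\,d^3x$, which vanishes by the radial symmetry of $b_{\cE}$ asserted in (H1).

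With the pairing table at hand, existence is immediate: one defines $P_j$ and $Q^j$ by the stated formulas and $\varPhi_p$ by \eqref{ortvect}, and substitutes to verify $\langle A_k\eta_p,\varPhi_p\rangle=\langle E\etapk,\varPhi_p\rangle=0$. Uniqueness, equivalently $T_{\eta_p}\Tr\cap T^\angle_{\eta_p}\Tr=\{0\}$, comes from applying the same pairings to a hypothetical element $P_j\etapj+Q^jJA_j\eta_p$ of $T^\angle_{\eta_p}\Tr$: testing against $A_k\eta_p$ forces $P_k=0$, and then testing against $E\etapk$ forces $Q^k=0$. The only step requiring more than bookkeeping is the fourth pairing identity, where the radial structure of the ground state is essential; everything else is linear algebra on a triangular matrix.
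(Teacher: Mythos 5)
Your proposal is correct and follows essentially the same approach as the paper: the paper's one-line proof is precisely to pair the ansatz \eqref{2.13} with $A_j\eta_p$ and with $E\eta_{p_j}$, and the "isotropic" pairings you derive in your fourth identity are recorded separately in the paper (as a Remark, with the same justification via the reality and radial symmetry of $b_\cE$) and feed into that calculation exactly as in your block-triangular table.
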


\begin{remark}
\label{pi}
A key object for the whole theory we will develop is the projector
$\Pip $ on $T^{\angle}_{\eta_p}\Tr$ defined by
\begin{equation}
\label{pip}
\Pip \Psi:=\Psi - \la A_j \etap, \Psi \ra \etapj + \la E \etapj, \Psi \ra JA_j
\etap\ .
\end{equation}
Its most important property is that it is a smoothing
perturbation of the identity, namely $\id-\Pip$ maps smoothly
$\cH^{-s_1, -k_1}$ into $\cH^{s_2, k_2}$ for every $s_1, k_1, s_2, k_2 \in \R$.
\end{remark}

The reference space that we will  use in order to parametrize the free
field will be
\begin{equation}
\cV^{s,k}:= \Pipz \cH^{s,k}
\end{equation}
which we endow by the topology of $\cH^{s,k}$.

In order to describe a neighborhood of $\Tr_0$ we will use coordinates
\begin{equation}
\label{ccorK}
(p,q,\phi)\in \cK^{s,k}:= \R^4\times\R^4\times \cV^{s,k}\ ,
\end{equation}
s.t. $\Tr_0$ coincides with $p=\phi=0$ (actually $q$ varies in
$\R^3\times\T$, but we work in the covering space $\R^4$).
We endow the scale $\cK\equiv \{\cK^{s,k}\}$ with the norm
$$
\norm{(p,q,\phi)}_{\cK^{s,k}}^2 := \norm{p}_{\R^4}^2 + \norm{q}_{\R^4}^2 + \norm{\phi}_{\cV^{s,k}}^2 \ .
$$
 By abuse of language, when
dealing with the scale $\cK$, we will mention
$\Tr_0$ in order to mean the manifold $p=\phi=0$.

The coordinates we will use are defined by the map
\begin{equation}
\label{p.q.phi.coord}
 \coS(p, q, \phi):= e^{q^j J A_j}
 \left( \eta_p +\Pip\phi \right)\ .
\end{equation}
The map $\coS$ does not depend smoothly on $q$
(due to the unboundedness of $JA_j$, $j=1,2,3$) and this will be the
source of all the difficulties. Nevertheless we have the following
lemma.

\begin{lemma}
\label{invcoo}
In a neighbourhood of $\Tr_0$ there exists a unique inverse map
$\coS^{-1}$ of $\coS$, with the following properties: denote
$(p(\psi),q(\psi),\phi(\psi)):=\coS^{-1}(\psi)$, then $\forall r,s$
there exists an open neighbourhood $\cU_{r,s}\subset \cH^{r,s}$ of
$\Tr_0$ s.t.
$$
\cU_{r,s}\ni\psi\mapsto (p(\psi),q(\psi))\in\R^8
$$
is $C^{\infty}$; the map
$$
\cU_{r,s}\ni\psi\mapsto \phi(\psi) \in \cV^{r,s}
$$
is continuous and maps bounded sets in bounded sets.
\end{lemma}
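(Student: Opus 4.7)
The strategy is to split the inversion of $\coS$ in two: first extract the eight parameters $(p,q)$ via the implicit function theorem applied to the symplectic orthogonality conditions, and then recover $\phi$ by a linear near-identity inversion. Indeed $\coS(p,q,\phi)=\psi$ is equivalent to requiring that $e^{-q^j JA_j}\psi-\eta_p$ lie in $T^{\angle}_{\eta_p}\Tr$---eight scalar orthogonality relations, cf.~\eqref{symplectic.ort}---together with the linear equation $\Pip\phi=e^{-q^j JA_j}\psi-\eta_p$.

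For the first part, define $F\colon \R^4\times\R^4\times\cH^{r,s}\to\R^8$ by
\[
F(p,q;\psi):=\bigl(\la A_j\eta_p,\,e^{-q^k JA_k}\psi-\eta_p\ra,\ -\la E\eta_{p,k},\,e^{-q^j JA_j}\psi-\eta_p\ra\bigr)_{j,k=1,\dots,4}.
\]
Using the unitarity identity $\la A_j\eta_p,e^{-q^k JA_k}\psi\ra=\la e^{q^k JA_k}A_j\eta_p,\psi\ra$, together with the fact that $A_j\eta_p$ and $E\eta_{p,k}$ are Schwartz and depend smoothly on $p$, each entry of $F$ appears as the pairing of $\psi$ with an $\cH^{\infty}$-valued function that is smooth in $(p,q)$; hence $F$ is jointly $C^{\infty}$ on $\R^8\times\cH^{r,s}$, for every $r,s$. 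A direct calculation at $(0,0;\eta_0)$ shows $D_{(p,q)}F=-\mathrm{Id}$: the $p$-diagonal block equals $-\partial_{p_l}\cP_j(\eta_p)|_0=-\delta_{jl}$; the $q$-diagonal block equals $-\la A_l\eta_0,\eta_{0,k}\ra=-\delta_{lk}$ via the identity $\la EX,JY\ra=-\la X,Y\ra$; and the off-diagonal blocks vanish using the reality and radial symmetry of $\eta_0$, which make all relevant integrands odd or purely imaginary. The implicit function theorem produces unique $C^{\infty}$ functions $(p(\psi),q(\psi))$ on a neighbourhood of $\eta_0$ in $\cH^{r,s}$, extended to a neighbourhood of all of $\Tr_0$ by equivariance under the abelian group $\{e^{q^j JA_j}\}$.

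With $(p,q)$ determined, set $\chi(\psi):=e^{-q(\psi)^j JA_j}\psi-\eta_{p(\psi)}$, which by construction lies in $T^{\angle}_{\eta_{p(\psi)}}\Tr$. We seek $\phi\in\cV^{r,s}$ with $\Pip\phi=\chi$; writing $\phi=\chi+w$ forces $w\in T_{\eta_p}\Tr$ (the $8$-dimensional tangent direction), while the constraint $\phi\in\cV^{r,s}=\ker(\id-\Pipz)$ pins down $w$ uniquely via an $8\times 8$ linear system whose matrix is a smoothing (hence $o(1)$) perturbation of the identity for $p$ small, cf.~Remark~\ref{pi}. Continuity of $\phi$ in $\psi$ and the bounded-sets-to-bounded-sets property then reduce to the corresponding properties of $(q,\psi)\mapsto e^{-q^j JA_j}\psi$, which follow from strong continuity of the translation-gauge group on $\cH^{r,s}$ and the uniform operator bound $\|e^{-q^j JA_j}\|_{\cH^{r,s}\to\cH^{r,s}}\lesssim\jap{q}^{|s|}$ on compact subsets of $\R^4$.

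The main obstacle---and precisely the reason the lemma asserts only continuity (not smoothness) of $\phi$---is that the map $\psi\mapsto e^{-q(\psi)^j JA_j}\psi$ is \emph{not} differentiable in the $\cH^{r,s}$ topology: differentiating in $q$ brings in the unbounded generator $JA_j=-\im\d_{x_j}$ for $j=1,2,3$, costing one spatial derivative of $\psi$. By contrast, $(p,q)(\psi)$ remains $C^{\infty}$ because the IFT equations pair $\psi$ only with Schwartz test functions, on which the translation-gauge group acts smoothly. This dichotomy is the structural reason motivating the classes of non-smooth maps adapted to $\Tr_0$ that are introduced in the rest of Section~\ref{ada}.
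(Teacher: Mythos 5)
Your proposal follows essentially the same route as the paper's Appendix A proof: extract $(p,q)$ by applying the implicit function theorem to the eight symplectic-orthogonality conditions (the paper's $f_l=g^l=0$ become your vector equation $F=0$; your Jacobian computation at $(0,0;\eta_0)$ matches the paper's $\partial f_j/\partial p_k=-\delta_j^k$, $\partial g^j/\partial q^k=\delta_k^j$, the off-diagonal blocks vanishing by reality and radial symmetry, up to your sign convention which makes the whole block $-\mathrm{Id}$), and then recover $\phi$ by a near-identity inversion of $\Pip$ (the paper uses the Neumann series on $\widetilde\Pi_p$ from Remark~\ref{smoothing.proj2}, you phrase the same fact as an $8\times 8$ linear system for the tangential correction $w$ — these are equivalent). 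Your extension from a neighbourhood of $\eta_0$ to a neighbourhood of $\Tr_0$ by group equivariance is a slight streamlining of the paper's "repeat the IFT at each point of $\Tr_0$", and works because $\Tr_0$ is exactly the group orbit of $\eta_0$; the one detail worth a word that you omit is the identification $q^4\sim q^4+2\pi$ needed for uniqueness. Your closing paragraph explaining why $\phi(\psi)$ is merely continuous — unboundedness of $JA_j$ obstructing differentiation of $\psi\mapsto e^{-q(\psi)^jJA_j}\psi$, while $(p,q)(\psi)$ stays smooth because the IFT equations only pair $\psi$ against Schwartz functions — correctly captures the structural point the paper emphasizes around \eqref{p.q.phi.coord}.
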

The proof of this lemma, which is a small variant of Lemma 22 of
\cite{bambusi2013} is reported in appendix for the sake of completeness.

\begin{corollary}
\label{bar}
If $\psi\in H^1$ is s.t.
\begin{equation}
\label{bar1}
\norma{\psi-e^{\bar q^jJA_j}\eta_{\bar p}}_{H^1}\leq K_1 \sqrt \epsilon\ ,
\end{equation}
for some $(\bar p,\bar q)\in \R^8$, then there exist $(p,q,\phi)$ such
that  $\psi=\coS(p,q,\phi)$
and
\begin{equation}
\label{bar2}
\left\| p-\bar p\right\|\sleq  K_1 \sqrt \epsilon\ ,\quad \left\|
q-\bar q\right\|\sleq K_1 \sqrt \epsilon\ ,\quad
\left\|\phi\right\|_{\cV^{1,0} }\sleq K_1\sqrt\epsilon\ .
\end{equation}
\end{corollary}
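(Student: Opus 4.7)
The plan is to reduce the problem to Lemma~\ref{invcoo} via the translation/gauge invariance built into $\coS$, and then to extract the quantitative $\sqrt\epsilon$ estimates directly from the defining formula \eqref{p.q.phi.coord}. First, I would set $\tilde\psi:=e^{-\bar q^j JA_j}\psi$. Since each $e^{a^j JA_j}$ is an $H^1$-isometry (translations and gauge rotations preserve $H^1$), the hypothesis \eqref{bar1} rewrites as $\|\tilde\psi-\eta_{\bar p}\|_{H^1}\leq K_1\sqrt\epsilon$. For $\bar p\in\cJ$ and $\epsilon$ sufficiently small, $\tilde\psi$ lies in the $H^1$-neighborhood $\cU_{1,0}$ of $\Tr_0$ provided by Lemma~\ref{invcoo}, hence $\tilde\psi=\coS(\tilde p,\tilde q,\tilde\phi)$ for a unique triple $(\tilde p,\tilde q,\tilde\phi)=\coS^{-1}(\tilde\psi)$. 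By the group law $e^{\bar q^j JA_j}\coS(\tilde p,\tilde q,\tilde\phi)=\coS(\tilde p,\tilde q+\bar q,\tilde\phi)$, so I would set $p:=\tilde p$, $q:=\tilde q+\bar q$, $\phi:=\tilde\phi$.

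For the bounds on $(p,q)$, note that $\eta_{\bar p}=\coS(\bar p,0,0)$, so by uniqueness $(p(\eta_{\bar p}),q(\eta_{\bar p}))=(\bar p,0)$. Lemma~\ref{invcoo} asserts that $(p,q)\colon\cU_{1,0}\to\R^8$ is $C^\infty$, hence Lipschitz on bounded sets, and therefore
$$
|p-\bar p|+|q-\bar q|=|\tilde p-\bar p|+|\tilde q|\sleq\|\tilde\psi-\eta_{\bar p}\|_{H^1}\leq K_1\sqrt\epsilon.
$$

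For the $\phi$ bound I would bypass the mere continuity of $\psi\mapsto\phi$ granted by Lemma~\ref{invcoo} (which cannot by itself furnish a quantitative rate) and compute directly from the identity $\Pi_{\tilde p}\tilde\phi=e^{-\tilde q^j JA_j}\tilde\psi-\eta_{\tilde p}$, obtained by rearranging $\tilde\psi=e^{\tilde q^j JA_j}(\eta_{\tilde p}+\Pi_{\tilde p}\tilde\phi)$. Since $\Pi_{0}$ restricted to $\cV^{1,0}$ is the identity, $\Pi_{\tilde p}|_{\cV^{1,0}}$ is a small perturbation of it for $\tilde p$ near $0$ and hence boundedly invertible with uniformly bounded inverse; combined with the $H^1$-isometry of $e^{-\tilde q^j JA_j}$ this gives
$$
\|\phi\|_{\cV^{1,0}}\leq C\|\tilde\psi-e^{\tilde q^j JA_j}\eta_{\tilde p}\|_{H^1}\leq C\|\tilde\psi-\eta_{\bar p}\|_{H^1}+C\|\eta_{\bar p}-e^{\tilde q^j JA_j}\eta_{\tilde p}\|_{H^1}.
$$
The first summand is $\leq CK_1\sqrt\epsilon$; the second is bounded by $C(|\tilde p-\bar p|+|\tilde q|)\sleq K_1\sqrt\epsilon$, using that $\eta_p\in\cH^\infty$ makes $(p,q)\mapsto e^{q^j JA_j}\eta_p$ Lipschitz $\R^8\to H^1$ near the origin (the derivatives $JA_j\eta_p$ lie in $H^1$ uniformly in small $p$).

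The only real subtlety is precisely this last estimate: the loss of smoothness of $q\mapsto e^{q^j JA_j}\psi$ caused by the unboundedness of $JA_j$ is what forces Lemma~\ref{invcoo} to produce merely continuity of $\phi$, but the identity above absorbs that loss into an $H^1$-isometry, so the quantitative estimate goes through without needing any improved regularity of the inverse map.
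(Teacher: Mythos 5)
Your proof is correct and is essentially the argument the paper leaves implicit (the corollary is stated without proof, immediately after Lemma~\ref{invcoo}). You reduce to a neighbourhood of $\eta_0$ via the $H^1$-isometry $e^{-\bar q^j JA_j}$, invoke Lemma~\ref{invcoo}, and use the group law $e^{\bar q^j JA_j}\coS(\tilde p,\tilde q,\tilde\phi)=\coS(\tilde p,\tilde q+\bar q,\tilde\phi)$; the $(p,q)$ bounds then follow from the $C^\infty$ (hence locally Lipschitz) dependence asserted by the lemma. Your observation that Lemma~\ref{invcoo} only gives continuity of $\psi\mapsto\phi$ and hence no rate is a genuine point worth making; the fix you give --- writing $\Pi_{\tilde p}\tilde\phi=e^{-\tilde q^j JA_j}\tilde\psi-\eta_{\tilde p}$, inverting $\Pi_{\tilde p}$ on $\cV^{1,0}$ via Remark~\ref{smoothing.proj2}, absorbing the unboundedness of $JA_j$ into the $H^1$-isometry, and estimating $\|\eta_{\bar p}-e^{\tilde q^j JA_j}\eta_{\tilde p}\|_{H^1}$ by the smooth dependence of $(p,q)\mapsto e^{q^j JA_j}\eta_p$ (valid since $\eta_p\in\cH^\infty$) --- is exactly the explicit formula used in the appendix proof of Lemma~\ref{invcoo}, so the estimate closes correctly. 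The only implicit hypothesis you use, and should keep in mind, is that $\bar p\in\cJ$ (small), which is how the paper uses the corollary.
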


We introduce now (following \cite{bambusi2013}) some classes of maps
that will be used in all the rest of the paper. In the corresponding
definitions we will use different scales of Hilbert
spaces. Essentially, besides the ones already introduced we will use
the trivial one composed by one space, namely $\R^m$ or $\C^m$ or the
scale $\wcK:=\R^4\times\cK$, in which the first factor $\R^4$ is the
space in which varies a 4 dimensional vector $N=(N_j)$ that will
eventually be set equal to $\cP(\phi)\equiv (\cP_j(\phi))$. This is
needed since we will meet functions which depend in a smoothing way on
$\phi$ except for the special dependence through the functions
$\cP_j$. Finally, we will also consider scales with one additional
component, this is needed in order to add a small parameter.

\begin{definition}
Given two scales of Hilbert spaces $\cH\equiv \{\cH^{s_1,k_1} \}$ and
$\wtcH\equiv \{\wtcH^{s_1,k_1} \}$, we will say that a map
$f$ is of class $\ALS(\cH, \wtcH)$ if $\forall r, s_2, k_2 \geq 0$
there exists $s_1, k_1$ and an open neighborhood
$\cU_{rs_1k_1} \subset \cH^{s_1,k_1}$ of $\Tr_0$, such that
\begin{equation}
f \in C^r\left(\cU_{rs_1k_1}, \widetilde{\cH}^{s_2,k_2} \right) \ .
\end{equation}
Such maps will be called {\em almost smooth}.
\end{definition}

\begin{definition}
A map $f$ will be said to be {\em regularizing} or {\em of class
  $C_R(\wcK , \cK )$} if $\forall r, s_1, k_1, s_2, k_2 \geq 0$ there
exists an open neighbourhood $\cU_{rs_1k_1s_2k_2} \subset
\wcK^{-s_1,-k_1}$ of $\Tr_0$, such that
\begin{equation}
f \in C^r\left(\cU_{rs_1 k_1s_2k_2}, \cK^{s_2,k_2} \right) \ .
\end{equation}
\end{definition}

\begin{definition}
\label{smoothing}
For $i, j \geq 0$, a map $S$ will be said to be of class $\SH^{i}_{j}$
if there exists a regularizing map $\widetilde S \in C_R(\wcK, \cH)$,
with the property that $S(p, q, \phi) = \wtS(\cP(\phi),p, q, \phi)$
and such that, $\forall s_1, k_1, s_2, k_2 \geq 0 $ there exists
$C\geq 0$
s.t.
\begin{equation}
 \label{smoothingequation}
\normk{\widetilde S(N,p, q, \phi)}{s_2, k_2} \leq C\left(\sum_{l_1+l_2=i} |p|^{l_1}|N|^{l_2}\right) \norm{\phi}_{\cV^{-s_1, -k_1}}^{j}
\end{equation}
$\forall (N,p, q, \phi)$ in some neighborhood of $\{0\}\times\Tr_0$.

Functions belonging to the classes $\SH^{i}_{j}$ will be called {\em
  smoothing}.
\end{definition}

We will often consider the case of smoothing maps taking values in
$\R^n$ or $\C^n$. In this case we will denote the corresponding
classes with the special notation $\SR^i_j$. In the following we will
identify a smoothing function $S$ with the corresponding function
$\widetilde S$. We will also consider the case of time dependent
smoothing maps, in which the dependence on time is also assumed to be
smooth.

\begin{remark}
\label{notation}
In what follows the specific form of smoothing functions in the
classes $\cS^{i}_j$ or $\cR^i_j$ is not important, for this reason we
will almost always denote such functions simply by $S^i_j$ or $R^i_j$,
and the same letter will denote different objects.  For example we
will meet equalities of the form
 $$
 S^1_1 + S^1_2 = S^1_1 \ ,
 $$ where obviously the function $S^1_1$ at r.h.s. is different from
 that at l.h.s.
\end{remark}

\begin{remark}
\label{pishi}
By explicit computation one has that, if $s^j\in \cR^{k}_l$ then
$$ \Pipz\ex{s^j}\phi=\ex{s^j}( \phi+S^k_{l+1})\ .
$$
\end{remark}

The last class of maps that we will need is the following one
\begin{definition}
\label{almost}
A map $\cA$ is said to be an almost smoothing perturbation of the
identity of class $\ap^k_{l,i}$ if there exist smoothing functions
$\alpha,P,Q\in \SR^{k}_l$ for some $l,k\geq 0$ and $S_i^k\in\cS^k_i$
for some $i \geq 0$, s.t.
\begin{equation}
\label{alm}
\cA(N, p, q, \phi)=\Big(p + P(N, p, q, \phi),q + Q(N, p, q, \phi),\Pipz
e^{\alpha^j(N, p, q, \phi)JA_j}(\phi+S^k_i(N, p, q, \phi))\Big)\ .
\end{equation}
\end{definition}
\begin{remark}
\label{pj}
If $\cA\in\ap^k_{l,i}$ is an almost smoothing perturbation of the identity, then
one has
$$
\cP_j(\cA(N,p,q,\phi))=\cP_j(\phi)+R^k_{i+1}+ R^{2k}_{2i}\ .
$$
\end{remark}

Almost smoothing perturbations of the identity appear mainly as flows
of Hamiltonian vector fields of smoothing Hamiltonians. Precisely one
has the following lemma.
\begin{lemma}
\label{l.6}
Let $s^l,P,Q\in \resto^a_j$, $X\in\Sc^a_i $ $j\geq i\geq 1$, $a\geq 1$
be smoothing functions, and consider the system
\begin{equation}
\label{l.5.12}
\dot p=P(N, p, q, \phi)\ ,\quad \dot q =Q(N, p, q, \phi)\ ,\quad
\dot \phi= s^l(N, p, q, \phi)\Pipz JA_l\phi+X(N, p, q, \phi)\ .
\end{equation}
Then for $|t|\leq 1$, the corresponding flow $\cA_t$ exists in a
sufficiently small neighborhood of $\Tr_0$ in $\cK^{1,0}$, and for any
$|t|\leq 1$ it is an almost smoothing perturbation of the identity in the class $\ap^a_{j,i} $, namely it has the form
\begin{equation}
\label{flt}
\cA_t=\Big(p+\wtP(t,N, p, q, \phi),q+ \wtQ(t,N, p, q, \phi), \Pipz\e^{\alpha^l(t,N, p, q, \phi)JA_l}(\phi+S(t,N, p, q, \phi)) \Big)
\end{equation}
with $\wtP,\wtQ,\alpha^l\in\resto^a_j$ and
$S\in\Sc^a_{i}$.
\end{lemma}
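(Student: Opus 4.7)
My strategy is to recast the ODE \eqref{l.5.12} as a fixed-point problem for the quadruple $(\wtP,\wtQ,\alpha^l,S)$ parametrizing the flow in the form \eqref{flt}; I will write $\phi_0\equiv\phi$ for the initial datum, reserving $\phi(t)$ for the solution. The key observation is that the transport-type term $s^l\Pipz JA_l\phi(t)$ in the $\phi$-equation is, modulo a smoothing correction, generated by the commuting isometries $JA_1,\ldots,JA_4$. I therefore make the ansatz
$$\phi(t)=\Pipz e^{\alpha^l(t)JA_l}(\phi_0+S(t)),\qquad \dot\alpha^l=s^l(N(t),p(t),q(t),\phi(t)),$$
which absorbs the transport exactly. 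Substituting into the $\phi$-equation and using Remark \ref{pishi} to commute $\Pipz$ past the exponential at the cost of a $\Sc^a_{j+1}$-correction $S'$, the equation for $\tilde S:=S+S'$ reduces to $\dot{\tilde S}=e^{-\alpha^l JA_l}[s^l(\Pipz-\id)JA_l\phi(t)+X]$, whose right-hand side lies in $\Sc^a_i$: the $X$-term by hypothesis, while the other term is in $\Sc^a_{j+1}\subset\Sc^a_i$ because $\Pipz-\id$ is smoothing by Remark \ref{pi}, and the conjugation by $e^{\pm\alpha^l JA_l}$ introduces only a polynomial loss in weighted norms.

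Combined with $\dot{\wtP}=P$ and $\dot{\wtQ}=Q$, evaluated along the trajectory through the ansatz, this yields an autonomous integral system whose right-hand sides factor through smoothing maps in the target classes $\resto^a_j$ and $\Sc^a_i$; Remark \ref{pj} lets one express $N(t)=\cP(\phi(t))$ as a smoothing perturbation of $N=\cP(\phi_0)$ so that the composition is well-defined. I would then solve the integral system by Picard iteration in a family of Banach spaces modelled on the seminorms in \eqref{smoothingequation}: for each fixed quadruple $(s_1,k_1,s_2,k_2)$, the iteration map preserves a ball in the corresponding smoothing seminorm, and on $|t|\le 1$ it is a contraction provided the $\cK^{1,0}$-neighborhood of $\Tr_0$ is sufficiently small. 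Uniqueness of the fixed point across all scales forces $\wtP,\wtQ,\alpha^l\in\resto^a_j$ and $S\in\Sc^a_i$ as claimed. Local existence in $\cK^{1,0}$ is automatic in this scheme, since the unbounded part of the dynamics is absorbed into the explicit exponential $e^{\alpha^l JA_l}$, which is isometric on $\cH^{s,0}$, while $\tilde S$ satisfies an ODE with smoothing (hence bounded) right-hand side, so no derivatives of $\phi$ are ever needed.

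The main obstacle I anticipate is the bookkeeping of smoothing exponents rather than any sharp estimate. The hypothesis $j\ge i\ge 1$ enters precisely to guarantee that the $S$-dependent contribution $s^l(\Pipz-\id)JA_l(e^{\alpha^l JA_l}S)$ to $\dot{\tilde S}$ does not lower the power of $\phi_0+S$ below $i$; similarly one must check that composing $P,Q,s^l$ with the non-trivial dependence of $\phi(t)$ on the initial $\phi_0$ through the translation $e^{\alpha^l JA_l}\phi_0$ preserves the class $\resto^a_j$, using that translations are bounded on $\cH^{s,k}$ with only polynomial weight loss in $|\alpha^l|$. Once this bookkeeping is systematized, the fixed-point argument delivers the representation \eqref{flt} with the claimed classes uniformly for $|t|\le 1$.
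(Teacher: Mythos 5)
Your argument follows essentially the same route as the paper's (given for the rescaled variant, Lemma \ref{l.6.1}, in Appendix \ref{flow.chi}): factor the unbounded transport $s^l\Pipz JA_l\phi$ into an explicit conjugation by $e^{\alpha^l JA_l}$ with $\dot\alpha^l=s^l$, observe that the residual vector field is smoothing because $\Pipz-\id$ is regularizing, and solve the resulting smooth system by Picard iteration on $|t|\le 1$, reading off the class membership of $\wtP,\wtQ,\alpha^l,S$ from the integral representation. The one cosmetic difference worth flagging: the paper adjoins $N$ and $\alpha$ as genuinely independent auxiliary coordinates and computes $\dot N_b=\la A_b\phi;\dot\phi\ra$ directly, showing it is in $\cR^a_{i+1}$, before appealing to Cauchy--Lipschitz; you instead lean on Remark~\ref{pj} to control $N(t)=\cP(\phi(t))$, which is slightly circular since that remark presupposes $\cA_t\in\ap^k_{l,i}$, precisely what is being proved. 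The fix is exactly the paper's direct computation of $\dot N$, or equivalently controlling $\cP$ along each Picard iterate; with that substitution your argument matches the intended proof.
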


In Appendix \ref{flow.chi} we will give the proof of Lemma
\ref{l.6.1} which is a small variant of the above lemma. Actually the
proof is a small variant of the proof of Lemma 3 of
\cite{bambusi2013}.

\begin{remark}
\label{invers}
Since $\cA_t $ is the flow of a vector field, the time $-t$ flow,
namely $\cA_{-t}$ is its inverse. Thus we have that, at least in this
case the inverse of the map \eqref{flt} exists and has the same
structure.
\end{remark}

\begin{remark}
\label{timedep}
Lemma \ref{l.6} holds also for time dependent vector fields with the
structure \eqref{l.5.12}. Also the so constructed almost smoothing
perturbations of the identity are invertible since the inverse is also
constructed as a flow.
\end{remark}

The coordinates \eqref{p.q.phi.coord} are not canonical. Let
$\Omega:=\coS^*\omega$ be the symplectic form in the variables
$(p,q,\phi)$ and define the reference symplectic form $\Omega_0$ by
\begin{equation}
\label{sym.ref}
\Omega_0\Big((P_1,Q_1,\Phi_1);(P_2,Q_2,\Phi_2)\Big)=
\sum_{j}\left(Q_{1j}P_{2j}-Q_{2j}P_{2j} \right)+ \langle
E  \Phi_1; \Phi_2\rangle\ ,
\end{equation}
then the following theorem holds.

\begin{theorem}
\label{darboux}
(Darboux theorem) There exists an almost smoothing perturbation of the
identity $\cD\in\ap^1_{0,1}$ of the form
\begin{equation}
\label{t.1.1}
\cD(p, q, \phi ) = \left( p -N + R^1_2, \ q + R^1_2, \ \Pipz e^{\alpha^j
  J A_j} (\phi + S^1_1) \right)
\end{equation}
with $\alpha \in \cR^{1}_2$, such that
$\cD^*\Omega=\Omega_0$.
Furthermore the maps $R^1_2,S^1_1,$ and $\alpha^j$ are independent of
the $q$ variables.

Finally $\cD $ is invertible and its inverse has the same structure.
\end{theorem}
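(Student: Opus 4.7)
My plan is to use the Moser deformation (path) method adapted to the class of almost smoothing maps. Set $\Omega_t := \Omega_0 + t(\Omega - \Omega_0)$ for $t\in[0,1]$. I seek a $1$-form $\beta$ with $\Omega - \Omega_0 = d\beta$ and a time-dependent vector field $X_t$ defined by $\iota_{X_t}\Omega_t = -\beta$; its flow $\cD_t$ then satisfies $\frac{d}{dt}(\cD_t^*\Omega_t) = \cD_t^*(d\iota_{X_t}\Omega_t + \partial_t\Omega_t) = \cD_t^*(-d\beta + (\Omega-\Omega_0)) = 0$, so $\cD := \cD_1$ is the sought transformation. The whole proof reduces to checking three points: the explicit structure of $\Omega-\Omega_0$, exactness with a $\beta$ that has the correct smoothing/regularity, and the nondegeneracy of $\Omega_t$ near $\Tr_0$, after which Lemma \ref{l.6} will automatically produce the claimed form of $\cD$.

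First I would compute $\Omega = \coS^*\omega$ on the basis $(P_1,Q_1,\Phi_1),(P_2,Q_2,\Phi_2)$ by applying $\omega$ to the images of $d\coS$. Using the fact that the $A_j$ commute and $e^{q^jJA_j}$ is symplectic, $\Omega$ is independent of $q$, and so is $\Omega_0$; this reduces the computation to the point $q=0$, where $d\coS\cdot(P,Q,\Phi) = P_i\partial_{p_i}\eta_p + Q^j JA_j(\eta_p+\Pip\phi) + \Pip\Phi + (\partial_{p_i}\Pip)\phi\,P_i$. Expanding $\omega$ and using the orthogonality $\langle A_j\eta_p,\Pip\phi\rangle=\langle E\partial_{p_k}\eta_p,\Pip\phi\rangle=0$ built into the definition of $\Pip$, together with the fact that $\id-\Pip$ is smoothing and $p$-smooth (Remark \ref{pi}), the constant-in-$(\phi,p)$ piece of $\Omega$ reproduces $\Omega_0$ (up to an already-canonical choice of pairing between $p$ and $q$, coming from $\cP_j(\eta_p)=p_j$, $\cP_4(\eta_p)=2m+p_4$), while the remainder is quadratic in $(p,\phi)$ with coefficients involving only smoothing operators and the distinguished non-smoothing contribution $\tfrac12\langle A_j\Pip\phi,\Pip\phi\rangle = \cP_j(\phi) + R^0_2 = N_j + R^0_2$.

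Since $\Omega-\Omega_0$ is closed, $q$-independent, and defined on a convex neighborhood of $\Tr_0$ in $\R^4\times\cV^{s,k}$, the Poincar\'e homotopy operator $\beta = \int_0^1 \iota_{r(p,\phi)}(\Omega-\Omega_0)\big|_{(sp,s\phi)}\,ds$ (contracting along the radial vector field $r=(p,0,\phi)$) produces a primitive which is $q$-independent and whose coefficients inherit the structure of $\Omega-\Omega_0$: a smoothing part in $\cR^1_2\oplus\cS^1_1$ together with the one special contribution carrying $N = \cP(\phi)$ in the $dq$-component. Nondegeneracy of $\Omega_t$ for all $t\in[0,1]$ in a neighborhood of $\Tr_0$ is immediate since $\Omega_t-\Omega_0$ vanishes on $\Tr_0$ and the correction operators are compact/smoothing relative to $E$.

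Inverting $\Omega_t$ to recover $X_t$ from $\iota_{X_t}\Omega_t=-\beta$ then yields a vector field of exactly the form $(P,Q,s^l\Pipz JA_l\phi + X)$ required in the hypotheses of Lemma \ref{l.6} with $a=1$, $j=2$, $i=1$: the $s^l JA_l\phi$ piece originates from inverting the block of $\Omega_t$ against the $N_j$-component of $\beta$ (this is the source of the unbounded, non-smooth $JA_j$ generator in the exponential of \eqref{t.1.1}), while $P,Q\in\cR^1_2$ and $X\in\cS^1_1$ reflect the smoothing remainder. Applying Lemma \ref{l.6} to $X_t$ at $t=1$ gives $\cD\in\ap^1_{0,1}$ of the exact form \eqref{t.1.1}, with the $-N$ shift in the first coordinate coming from integrating the $N$-carrying component of the vector field along the deformation. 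Invertibility and the same structure for $\cD^{-1}$ follow from Remark \ref{invers} by flowing backwards. I expect the main obstacle to be the bookkeeping in the computation of $\Omega-\Omega_0$ and keeping track of which contributions sit in $\cR^1_2$, which in $\cS^1_1$, and which carry the special $N$-dependence --- this is what forces the non-trivial $-N$ term to appear and dictates the precise class $\ap^1_{0,1}$.
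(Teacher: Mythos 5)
Your overall strategy --- the Moser deformation $\Omega_t = \Omega_0 + t(\Omega - \Omega_0)$, solving $\iota_{Y_t}\Omega_t = -\beta$ for a primitive $\beta$ of $\Omega - \Omega_0$, and feeding the solution into Lemma~\ref{l.6} --- is the same one the paper follows. The one genuine route difference is the construction of the primitive: you propose the radial Poincar\'e homotopy operator, while the paper simply pulls back the tautological $1$-form $\theta = \tfrac12\langle E\psi;\cdot\rangle$ through $\coS$ (Lemma~\ref{l.1}) to obtain an explicit $\Theta$ with $d\Theta = \Omega$, so that $\beta = \Theta - \Theta_0$ comes out with its smoothing versus non-smoothing decomposition already visible, with no further integration and no re-verification of class membership afterwards. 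Both routes can work; the pullback is simply cheaper to bookkeep.

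The step you compress into the phrase ``inverting $\Omega_t$ $\ldots$ yields a vector field of exactly the form $(P,Q,s^l\Pipz JA_l\phi + X)$'' is where the substance of the theorem lives, and asserting it without carrying it out is a real gap. This is precisely the content of the paper's Lemma~\ref{lem:sol.darboux} together with Lemma~\ref{lem:Dtsmooth}: one must first invert the $\phi$-block operator $D_t = E + t(\Pipz^*\Pip^*E\Pip - E)$ by a Neumann series (using that the perturbation is smoothing), substitute the result into the finite-dimensional $(p,q)$-equations, verify the resulting $8\times 8$ system is itself Neumann-invertible, and then track precisely which remainders fall into $\cR^1_2$, which into $\cS^1_1$, and how the single non-smoothing contribution $\tfrac12\langle A_k\Pip\phi,\Pip\phi\rangle = N_k + R^1_2$ produces both the $-N$ shift in $Y_p$ (Proposition~\ref{yt.lemma}) and the coefficient-times-$J\Pipz A_l\phi$ term in $Y_\phi$. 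Those class assignments are exactly what make Lemma~\ref{l.6} applicable with $a=1$, $j=2$, $i=1$, and hence yield $\cD\in\ap^1_{0,1}$ of the stated form; without carrying them out the theorem is not established.
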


\begin{remark}
In the Darboux coordinates  the Hamilton equations of a
Hamiltonian function $H$ have the form
\begin{equation}
\begin{cases}
& \dot p = - \frac{\partial H}{\partial q}(p, q, \phi) \\
& \dot q =  \frac{\partial H}{\partial  p}(p, q, \phi) \\
& \dot \phi = \Pipz J\nabla_{\bar \phi} H(p, q, \phi)
\end{cases} \ .\end{equation}
\end{remark}

\begin{remark}
\label{invar}
If a Hamiltonian function $H$ is invariant under the group action
$e^{q^jJA_j}$, namely $H(e^{q^jJA_j}\psi  )=H(\psi)$, then
in the Darboux coordinates just introduced it is independent of the
variables $q$. This is true since $H\circ \coS$ is independent of $q$
and the property is preserved by the coordinate change \eqref{flt}. In
particular this is true for the Hamiltonian \eqref{H} when
$\epsilon=0$, while, when $\epsilon\not=0$ the Hamiltonian is only
independent of $q^4$.

As a consequence when $\epsilon =0$, the  $p_j$'s, $1 \leq j \leq 4$, are integrals of motion for the Hamiltonian system, while when $\epsilon \neq 0$ only $p_4$ is an integral of motion.
\end{remark}

The proof of Theorem \ref{darboux}, which is a variant of the proof of
Theorem 3 of \cite{bambusi2013} will occupy the rest of the section.

\begin{remark}
    Since $\be$ is real valued and radial symmetric, $\eta_p$
        fulfills the orthogonality conditions
    \begin{equation}
    \label{isotropic.cond}
    \begin{aligned}
    &\la \etapk, E \etapj  \ra = \la J A_j \etap , A_k \etap \ra = 0 \qquad \forall \, 1 \leq j,k \leq 4 \ .
    \end{aligned}
    \end{equation}
\end{remark}

\begin{remark}
Recall the definition of $\Pip$ in \eqref{ortvect}. An explicit computation shows that the adjoint of $\Pip$ is given by
$$
\Pip^* \Psi = \Psi - \la \etapj, \Psi \ra A_j \etap + \la J A_j \etap, \Psi \ra E \etapj \ .
$$
\end{remark}
\begin{remark}
The following formulas will be useful in the following
\begin{equation}
\label{proj.formulae}
\begin{aligned}
E\Pip = \Pip^* E, \quad J \Pip = \Pip^* J, \quad \Pipj =\frac{\partial
  \Pip^2}{\partial p_j} = \Pip \Pipj + \Pipj \Pip , \\ \Pip \Pipj \Pip
=0 , \quad \left(\Pipj\right)^* = \frac{\partial \Pip^*}{\partial p_j}
, \quad E \Pipj= \frac{\partial \Pip^*}{\partial p_j} E \ .
\end{aligned}
\end{equation}
\end{remark}
\begin{remark}
    \label{smoothing.proj}
    For any  $s_1, k_1, s_2, k_2 \in \R$ one has
    \begin{equation}
\label{A.3}
    \normk{(\Pip-\Pipz)\phi}{s_2, k_2} \sleq  |p |
        \normk{\phi}{-s_1, - k_1} \ ,
    \end{equation}
    and by \eqref{proj.formulae} one has
    \begin{equation}
\label{A.3.2}
\normk{\Pip \Pipj \phi}{s_2, k_2} \sleq  |p | \normk{\phi}{-s_1, - k_1} \ .
    \end{equation}
\end{remark}
\begin{remark} \label{smoothing.proj2}
Consider $\Pip: \cV^{-\infty} \rightarrow \Pip \cV^{-\infty}$. It has
the structure $\Pip = \id + \left(\Pip - \Pipz \right)$ Thus, by
\eqref{A.3}, $\Pip$, as an operator on $\cV^{-\infty}$, is a smoothing
perturbation of the identity and it is invertible by Neumann
series. Furthermore its inverse $\widetilde{\Pip}^{-1}$ has the form
$\widetilde{\Pip}^{-1} = \id + S$ with $S$ fulfilling an estimate
equal to \eqref{A.3}.
\end{remark}

To begin with we compute the
symplectic form and a potential form for it in the coordinates
introduced by $\coS$, cf. eq. \eqref{p.q.phi.coord}.

\begin{lemma}
\label{l.1}
Define the 1-form $\Theta$ by
\begin{equation}
\label{l.1.1}
\Theta(P,Q,\Phi)= \frac{1}{2}\langle E\Pip\phi;\dep j\phi \rangle
P_{j}+ \left(-p_j+\frac{1}{2}\langle A_j\Pip\phi; \Pip
\phi\rangle)\right)Q^j +\langle E\Pip\phi;\Phi\rangle\
\end{equation}
(by this notation we mean that the r.h.s. gives the action of the form
$\Theta$ at the point $(p,q,\phi)$ on a vector $(P,Q,\Phi)$). Then one has $\di \Theta=\Omega$, and
therefore
\begin{equation}
\label{symplsoliton}
\begin{aligned}
&\Omega \left( (P_1, Q_1, \Phi_1); ((P_2, Q_2, \Phi_2))\right)= \la E
  \Pip \Phi_1, \Pip \Phi_2 \ra + Q_1^j P_2^j - P_1^j Q_2^j
\\
&+\la E
  \Pipj \phi, \Pipk \phi \ra P_1^j P_2^k + \frac{1}{2}
  \frac{\partial}{\partial p_k} \la A_j \Pip \phi, \Pip \phi \ra Q_j^1
  P_2^k - \frac{1}{2} \frac{\partial}{\partial p_j} \la A_k \Pip \phi,
  \Pip \phi \ra P_j^1 Q_2^k
\\
&
+ \la E\Pipj \phi, \Pip \Phi_2 \ra
  P_1^j - \la E\Pipk \phi, \Pip \Phi_1 \ra P_2^k
\\
& + \la A_j \Pip
  \phi, \Pip \Phi_2 \ra Q_1^j - \la A_k \Pip \phi, \Pip \Phi_1 \ra
  Q_2^k \ .
\end{aligned}
\end{equation}
\end{lemma}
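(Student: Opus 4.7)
The plan is to exhibit $\Theta$ as (a slight rearrangement of) the pullback under $\coS$ of a canonical primitive of $\omega$, and then to obtain the explicit formula \eqref{symplsoliton} either by computing $d\Theta$ directly or by pulling back $\omega$ term by term. I start from the observation that on $H^1$ the constant symplectic form $\omega$ admits the primitive $\theta_\psi(\Psi) := \tfrac12 \la E\psi,\Psi\ra$: skew-symmetry of $E$ gives $d\theta(X,Y) = \la EX,Y\ra - \la EY,X\ra = 2\la EX,Y\ra/\cdot = \omega(X,Y)$. By naturality of $d$, the form $\coS^{*}\theta$ is a primitive of $\Omega := \coS^{*}\omega$, so it is enough to identify it with the expression in \eqref{l.1.1} up to an exact form.

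To compute $\coS^{*}\theta$, I first write down the differential of $\coS(p,q,\phi) = e^{q^j J A_j}(\eta_p + \Pip \phi)$ applied to a tangent vector $(P,Q,\Phi)$:
\begin{equation*}
d\coS(P,Q,\Phi) = e^{q^j J A_j}\Bigl[\,P^k(\etapk + \Pipk\phi) + Q^k\, JA_k(\eta_p + \Pip\phi) + \Pip\Phi\,\Bigr].
\end{equation*}
Since $e^{q^j J A_j}$ is unitary and commutes with $E$ (translations and phase rotations both do), the factor $e^{q^jJA_j}$ drops out of $\tfrac12 \la E\coS, d\coS(P,Q,\Phi)\ra$. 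Expanding that inner product, the coefficient of $Q^k$ reduces via $EJ=\id$ to $-\tfrac12\la A_k(\eta_p + \Pip\phi), \eta_p + \Pip\phi\ra$; the cross term $\la A_k\eta_p,\Pip\phi\ra$ vanishes by the definition \eqref{symplectic.ort} of $T^{\angle}_{\eta_p}\Tr$, leaving $-\cP_k(\eta_p) - \tfrac12\la A_k\Pip\phi,\Pip\phi\ra$. Similarly, the coefficient of $P^j$ splits into several pieces which collapse thanks to the isotropic conditions \eqref{isotropic.cond} (killing $\la E\eta_{p,j},\eta_p\ra$ by radiality of $b_\cE$ and $\la JA_j\eta_p, A_k\eta_p\ra=0$) and to $\la A_j\eta_p,\Pip\phi\ra = 0$; only $\tfrac12\la E\Pip\phi, \Pipj\phi\ra$ survives. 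The coefficient of $\Phi$, after applying $\Pip^{*}E = E\Pip$ from \eqref{proj.formulae} and the same orthogonality, becomes $\tfrac12\la E\Pip\phi, \Pip\Phi\ra$. Comparing term by term with \eqref{l.1.1}, the discrepancies are a constant times $dq^4$ (from $\cP_4(\eta_p)-p_4 = 2m$) and signed combinations of $\la A_j\Pip\phi,\Pip\phi\ra$ that can be absorbed into an exact form. Hence $d\Theta = d(\coS^{*}\theta) = \Omega$.

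For the explicit formula \eqref{symplsoliton}, I would differentiate $\Theta$ directly on pairs of constant tangent vectors $(P_i,Q_i,\Phi_i)$ using $d\Theta(X_1,X_2) = X_1[\Theta(X_2)] - X_2[\Theta(X_1)]$. The term $-p_j Q^j$ produces the canonical skew-symmetric pairing $Q_1^j P_2^j - P_1^j Q_2^j$. Differentiating $\la E\Pip\phi, \Phi\ra$ in $\phi$ and antisymmetrizing gives $\la E\Pip\Phi_1, \Pip\Phi_2\ra$ once one uses $\Pip^{*}E = E\Pip$ and $\Pip\Pipj\Pip = 0$ from \eqref{proj.formulae} to bring all projectors to the symmetric position. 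Differentiating $\la E\Pip\phi,\Pipj\phi\ra$ in $p_k$ and $\phi$ produces the $\la E\Pipj\phi,\Pipk\phi\ra\,P_1^jP_2^k$ and mixed $P$-$\Phi$ terms; differentiating $\la A_j\Pip\phi,\Pip\phi\ra$ in $p_k$ and $\phi$ yields the mixed $Q$-$P$ and $Q$-$\Phi$ terms. At each step the identities \eqref{proj.formulae} regroup adjacent $\Pip$'s so that only the quadratic forms displayed in \eqref{symplsoliton} survive.

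The main obstacle is purely organizational: the proof is a careful bookkeeping of about a dozen pairings in which each simplification rests on one of \eqref{symplectic.ort}, \eqref{isotropic.cond}, or \eqref{proj.formulae}, but no individual step is deep. The one delicate point worth flagging is the use of radial symmetry of $b_\cE$ (through the Remark preceding the lemma) to ensure that $\la E\eta_{p,j},\eta_p\ra = 0$; without it, spurious quadratic-in-$\eta_p$ terms would obstruct the identification of $\Theta$ with \eqref{l.1.1} modulo an exact form.
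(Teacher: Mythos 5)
Your overall strategy---pull back the canonical primitive $\theta = \tfrac12\la E\psi,\cdot\ra$ under $\coS$, identify the result with $\Theta$ up to an exact $1$-form, then obtain \eqref{symplsoliton} by differentiating $\Theta$---is precisely the paper's argument, and your computation of $\di\coS$, the use of unitarity and $E$-commutation of $e^{q^jJA_j}$, and the cancellation of the cross term $\la A_k\eta_p,\Pip\phi\ra$ all match.

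One reconciliation step, however, would not survive scrutiny. You write that ``signed combinations of $\la A_j\Pip\phi,\Pip\phi\ra$'' appearing in the $Q^j$-coefficient ``can be absorbed into an exact form.'' A $1$-form of type $g(p,\phi)\,\di q^j$ with $g$ nonconstant is not even closed, so no such absorption is possible. In fact your own direct computation of the $Q^k$-coefficient,
$-\tfrac12\la A_k(\eta_p+\Pip\phi),\eta_p+\Pip\phi\ra = -p_k - \tfrac12\la A_k\Pip\phi,\Pip\phi\ra$
for $k\le 3$ (for $k=4$ there is in addition the genuinely exact constant $-2m\,\di q^4$), disagrees in sign with the $+\tfrac12\la A_j\Pip\phi,\Pip\phi\ra\,Q^j$ term appearing in \eqref{l.1.1}. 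This should not be waved away: it points to a sign slip in the displayed formula for $\Theta$, a reading confirmed by \eqref{W.components}, where the $Q^k$-component of $\cW=\Theta_0-\Theta$ is $W_{qk}=+\tfrac12\la A_k\Pip\phi,\Pip\phi\ra$, consistent with $\Theta$'s $Q^j$-coefficient being $-p_j-\tfrac12\la A_j\Pip\phi,\Pip\phi\ra$. The exact forms that the paper legitimately separates off are $\la E\eta_p,\etapj\ra\,\di p_j$ (shown to be closed in $p$, giving $f_0$; your remark that it actually vanishes by radiality of $b_\cE$ is correct, but it is a separate computation, not a consequence of the isotropy conditions \eqref{isotropic.cond}) and the cross piece collected in $f_1=\la E\eta_p,\Pip\phi\ra$; neither is of the form $g\,\di q^j$. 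Finally, the identity $\la JA_j\eta_p,A_k\eta_p\ra=0$ that you invoke plays no role in the $P_j$-coefficient.
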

\begin{proof} We compute $\coS^*\theta$, where $\theta=\langle
  E\psi;.\rangle/2$ is a potential 1-form of $\omega$,
  i.e. $\omega=\di \theta$.  By writing $\psi=\coS(p,q,\phi)$, one
has
\begin{align}
\label{i.25}
\dip \psi
k=e^{q^jJA_j}\left(\dip{\eta_p}k+\dip{\Pip}k\phi\right)
\ ,\quad \frac{\partial  \psi}{\partial q^k}
=JA_ke^{q^jJA_j}\left(\eta_p+\Pip\phi\right)
\ ,\quad
(\di_\phi \coS)\Phi=\Pip\Phi
\end{align}
so, taking $\theta=\frac{1}{2}\langle E\psi;.\rangle$, one has
$$
(\coS^*\theta)(P,Q,\Phi)=\frac{1}{2}\left\langle E\psi;\dip \psi j
\right\rangle P\relax_j+
\frac{1}{2}\left\langle E\psi; \frac{\partial  \psi}{\partial q^k}
\right\rangle Q^k+
\frac{1}{2}\left\langle E\psi;(\di_\phi
\coS)\Phi\right\rangle \ .
$$
First we compute the term $\left\langle E\psi;\dip \psi j
\right\rangle$, which coincides with
\begin{align}
\label{i.25b.1}
2\theta\left(\dip \psi j\right)
&=\left\langle
E(\eta_p+\Pip\phi); \dip{\eta_p}j+\dip{\Pip}j\phi\right\rangle
\\
\label{i.25b}
&=\left\langle E\eta_p;\dip {\eta_p}j\right\rangle+
\left\langle E\eta_p;\dip{\Pip}j\phi\right\rangle +
\left\langle E\Pip\phi;{\dip {\eta_p}j}\right\rangle +
  \left\langle E\Pip\phi;\dip{\Pip}j\phi\right\rangle\ .
\end{align}
Now, the third term of \eqref{i.25b} vanishes due to the definition of
$\Pip$. Concerning the first term, there exists a function $f_0$ such
that $\dip{f_0}j= \left\langle E\eta_p;\dip {\eta_p}j\right\rangle$. Indeed, one
has
$$ \dip\null j\left\langle E\eta_p;\dip {\eta_p}i\right\rangle = \dip\null i \left\langle
E\eta_p;\dip {\eta_p}j \right\rangle\ .
$$
Finally, defining $\displaystyle{f_1(p,\phi)=\left\langle E\eta_p;\Pip \phi\right\rangle}$, the second term of \eqref{i.25b} turns out to be
given by $\dip{f_1}j$, so we have
$$
2\theta\left(\frac{\partial \psi}{\partial p_j}\right)=\left\langle
E\Pip\phi;\dip{\Pip}j\phi\right\rangle+\frac{\partial(f_0+f_1)}{\partial p_j}\ .
$$
We compute now $\langle E\psi; (\di_\phi \coS)\Phi\rangle$. We have
$$ 2\langle E\psi; (\di_\phi \coS)\Phi\rangle=\langle
E(\eta_p+\Pip\phi);\Pip\Phi\rangle= \langle
E\Pip\phi;\Pip\Phi\rangle+(\di_{\phi}f_1)\Phi\ .
$$
Adding the easy computation of $\theta(\partial \psi/\partial q^k)$
one gets $\coS^*\theta=\Theta+\di(f_0+f_1)$, and therefore
$\Omega\equiv \coS^* \omega = \di\Theta$.

The explicit computation of $\di\Theta$ gives equation
\eqref{symplsoliton}.
\end{proof}

In order to transform the symplectic form $\Omega$ into the symplectic
form $\Omega_0$ defined in \eqref{sym.ref} we look for a map $\cD$ such
that $\cD^* \Omega = \Omega_0$ in a neighborhood of the soliton
manifold $\Tr_0$. We look for $\cD$ as the time 1 flow $\left. \cF_t
\right|_{t=1}$ of a vector field $Y_t$, asking that $\frac{d}{dt}
\cF_t^* \Omega_t =0$, where $\Omega_t:=\Omega_0 + t (\Omega -
\Omega_0)$. It is well known that the vector field $Y_t$ has to solve
the equation
\begin{equation}
\label{sistema1}
 \Omega_t(Y_t, \cdot) = \Theta_0 - \Theta \ ,
\end{equation}
where $\Theta_0:=-p_j\di q^j+\langle E\phi_j;.\rangle/2$ is such that
$\di\Theta_0=\Omega_0$.  In the next lemma we will study the more
general equation $\Omega_t(Y_t ; .)=\cW$, with $\cW$ an arbitrary
1-form.  Denote $\Omega_t=\langle O_t.;.\rangle$, $\cW=\langle
W,. \rangle$; denote also
\begin{equation*}
Y_t=\left[\begin{matrix}Y_p\\ Y_q\\ Y_\phi
\end{matrix}\right]\ ,\quad W=\left[\begin{matrix} W_{p} \\ W_{q}\\ W_{\phi}
\end{matrix}\right]\ ,
\end{equation*}
where $Y_p \equiv (Y_p^k)_{1 \leq k \leq 4}$ is a vector in $\R^4$,
and similarly $Y_q, W_p, W_q$, while $Y_\phi$ and $W_\phi$ are vectors
in the scale $\cV$.  In this notation, system \eqref{sistema1} takes
the form $O_t Y_t = W$, which in components is given by
\begin{equation}
 \label{systemdarboux}
 \left\lbrace
\begin{aligned}
 & Y_q^k + t\left( \la E \Pipj \phi, \Pipk \phi \ra Y^j_p +
\frac{1}{2}\frac{\partial}{\partial p_k} \la A_j \Pip \phi, \Pip \phi \ra Y^j_q - \la E\Pipk \phi, \Pip Y_{\phi}\ra  \right)= W_{p}^k \\
&- Y_p^k + t \left(- \frac{1}{2} \frac{\partial}{\partial p_j} \la A_k \Pip \phi, \Pip \phi \ra Y^j_p -\la A_k \Pip \phi, \Pip Y_{\phi} \ra \right) =  W_{q}^k \\
&EY_\phi + t \left( (\Pipz^* \Pip^* E \Pip - E)Y_\phi + Y^j_p \Pipz^* \Pip^* E  \Pipj \phi + Y^j_q \Pipz^* \Pip^* A_j \Pip \phi \right)= W_\phi
\end{aligned}
\right.
\end{equation}
The properties of the solution of this system are given in the next lemma.
\begin{lemma}
\label{lem:sol.darboux}
Fix $\cW$ and consider  system \eqref{systemdarboux}.
Then its solution $Y$ is given by
\begin{align}
\label{eq:sol.darboux.1}
Y_p^k &= - (1+ M_1^k(p, \phi))(W_{q}^k + t\la A_k \phi, JW_{\phi} \ra
)+M_3^k(p,\phi) W_p^k  + P_1(t, \phi, W_{\phi})
\\
\label{eq:sol.darboux.2}
Y_q^k &=  (1+ M_2^k(p, \phi))W_{p}^k + M_4^k(p,\phi)(W_{q}^k + t\la
A_k \phi, JW_{\phi} \ra )+ Q_1(t, \phi, W_{\phi})
\\
\label{eq:sol.darboux.3}
Y_\phi &= J W_{\phi} - t Y_q^j J \Pipz^*  A_j \phi + S(t,p, W_{\phi}) +
t Y_q^j \Upsilon_{j}(t,p,\phi) + tY_p^j \Xi_{j}(t,p, \phi)
\end{align}

where the maps $M_1, M_2, M_3, M_4, P_1, Q_1, S$ are smooth in a
neighborhood of $\Tr_0$ in $ \R^4 \times \cV^{-s, -k}$ and fulfill
$\forall s,k \geq 0$
\begin{align*}
& |M_j(p, \phi)| \sleq \norm{\phi}_{\cV^{-s, -k}}^2 \qquad 1 \leq j
  \leq 4 \ ,
\\
& |Q_1(p, \phi, \Phi)| \ , |P_1(p, \phi, \Phi)| \sleq
   |p| \norm{\phi}_{\cV^{-s,-k}} \norm{\Phi}_{\cV^{-s,-k}}
\\
&
  \norm{S(t, p,\phi)}_{\cV^{s,k}} \sleq  |p|
  \norm{\phi}_{\cV^{-s,-k}} \ ,
\end{align*}
and the maps $\Upsilon_{j}$ and $\Xi_{j} $ are of class $C_R(\cK,\cV)$
and fulfill
\begin{align*}
\norm{\Upsilon_j(t,p, \phi)}_{\cV^{s_2, k_2}} \ , \norm{\Xi_{j}(t,p, \phi)}_{\cV^{s_2, k_2}} \sleq  |p| \norm{\phi}_{\cV^{-s_1, -k_1}} \ .
\end{align*}
\end{lemma}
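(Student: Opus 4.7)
The plan is to solve \eqref{systemdarboux} in two stages: first use the third equation to express $Y_\phi$ in terms of $W_\phi$ and of the finite-dimensional unknowns $(Y_p,Y_q)$, then substitute into the first two equations to obtain an $8\times 8$ linear system in $(Y_p,Y_q)$ which is inverted by Neumann series. The decisive structural input is Remarks \ref{smoothing.proj} and \ref{smoothing.proj2}: the operators $\Pip-\Pipz$ and $\Pipj$ are smoothing and of size $|p|$, so every $t$-term in \eqref{systemdarboux} that does not involve an unbounded operator acting on $\phi$ is a smoothing perturbation.

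For the first stage, rewrite the third equation in the form
\[(\id + tJ(\Pipz^*\Pip^*E\Pip - E))\,Y_\phi = J W_\phi - tY_p^j J\Pipz^*\Pip^*E\Pipj\phi - tY_q^j J\Pipz^*\Pip^*A_j\Pip\phi.\]
By Remark \ref{smoothing.proj2} the operator on the left is an invertible smoothing perturbation of the identity on $\cV$, with inverse of the form $\id+S_0(t,p)$, $S_0$ smoothing and $O(|p|)$, obtained by Neumann series. The $Y_p^j$-coefficient on the right is smoothing by \eqref{A.3.2} and contributes $\Xi_j$. The $Y_q^j$-coefficient splits via $A_j\Pip\phi = A_j\phi + A_j(\Pip-\Pipz)\phi$ into a non-smoothing piece $-tY_q^j J\Pipz^*A_j\phi$, which must be kept explicit since $A_j$ cannot be controlled by negative Sobolev norms of $\phi$, and a smoothing remainder absorbed into $\Upsilon_j$. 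Applying $\id+S_0$ to $JW_\phi$ yields $JW_\phi + S(t,p,W_\phi)$ with $S$ smoothing of size $|p|$. Collecting, this proves \eqref{eq:sol.darboux.3}.

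For the second stage, substitute \eqref{eq:sol.darboux.3} into the first and second equations of \eqref{systemdarboux}. The coupling $\la A_k\Pip\phi,\Pip Y_\phi\ra$ in the second equation produces a principal non-smoothing piece $\la A_k\phi,JW_\phi\ra$ from the $JW_\phi$-part of $Y_\phi$ (kept explicit in the statement), together with bilinear smoothing pieces in $(\phi,W_\phi)$ of size $|p|$ (feeding $P_1,Q_1$) and linear corrections in $(Y_p,Y_q)$ of size $t\|\phi\|_{\cV^{-s,-k}}^2$ coming from the $\Upsilon_j, \Xi_j$ and $A_j\phi$ contributions of $Y_\phi$. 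The coupling $\la E\Pipk\phi,\Pip Y_\phi\ra$ in the first equation is entirely smoothing since $\Pipk\phi$ is. Schematically the system for $(Y_p,Y_q)$ becomes
\[\begin{pmatrix}O(\|\phi\|^2) & I+O(\|\phi\|^2) \\ -I+O(\|\phi\|^2) & O(\|\phi\|^2)\end{pmatrix}\!\begin{pmatrix}Y_p\\Y_q\end{pmatrix} = \begin{pmatrix}W_p\\W_q+t(\la A_k\phi,JW_\phi\ra)_k\end{pmatrix} + (\text{smoothing remainder}),\]
whose principal part $\bigl(\begin{smallmatrix}0&I\\-I&0\end{smallmatrix}\bigr)$ is invertible. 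A Neumann series on the $8\times 8$ perturbation gives \eqref{eq:sol.darboux.1}--\eqref{eq:sol.darboux.2}, with the $M_j$ absorbing the $O(\|\phi\|^2)$ coefficients and $P_1,Q_1$ absorbing the smoothing bilinear pieces.

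The main obstacle is the precise bookkeeping separating the few genuinely non-smoothing contributions—$-tY_q^j J\Pipz^*A_j\phi$ inside $Y_\phi$, and $t(\la A_k\phi,JW_\phi\ra)_k$ on the right of the $(Y_p,Y_q)$ system—from the rest. Since $A_j$ is a differential operator, any $A_j\phi$ cannot be estimated in negative Sobolev norms of $\phi$ and must never be buried inside a smoothing remainder. The systematic device is to decompose $\Pip=\Pipz+(\Pip-\Pipz)$ and to use the adjoint identities \eqref{proj.formulae} to move each $A_j$ either onto the Schwartz profile $\eta_p$ or onto a smoothing factor before it gets paired with $\phi$; checking that this rearrangement can be carried out consistently throughout the Neumann iterations is the bulk of the work.
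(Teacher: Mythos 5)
Your two-stage strategy (first solve the $Y_\phi$ equation as $Y_\phi = (\id + tJ\widetilde D)^{-1}\bigl(JW_\phi - \text{linear in }Y_p,Y_q\bigr)$ via Neumann series, then substitute into the top two equations and invert the $8\times 8$ block by another Neumann series) is exactly the route the paper takes; the exploitation of Remarks~\ref{smoothing.proj} and \ref{smoothing.proj2}, the split $\Pipz^*\Pip^*A_j\Pip\phi = \Pipz^*A_j\phi + S^1_1$, and the identification of $\la A_k\phi, JW_\phi\ra$ as the only surviving non-smoothing forcing term all match.

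There is however one genuine gap. When you feed the non-smoothing piece $-tY_q^jJ\Pipz^*A_j\phi$ of $Y_\phi$ back into the coupling $\la A_k\Pip\phi, \Pip Y_\phi\ra$ of the second equation, you obtain a coefficient of $Y_q^j$ whose principal part is $-t\la A_k\phi, JA_j\phi\ra$. You assert this contributes a correction ``of size $t\|\phi\|_{\cV^{-s,-k}}^2$'', but that is not a consequence of moving operators around or of any smoothing bound: a bilinear form $\la A_k\phi, JA_j\phi\ra$ carries one derivative on each copy of $\phi$ and is \emph{not} controlled by negative Sobolev norms of $\phi$, exactly the pathology you yourself rule out one paragraph earlier. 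The device you propose — push every $A_j$ onto a Schwartz profile $\eta_p$ or onto a smoothing factor — does not apply here, because after the $\Pip=\Pipz+(\Pip-\Pipz)$ decomposition the leading term $\la A_k\phi, JA_j\phi\ra$ contains no smoothing factor at all. What actually saves the day, and what the paper invokes explicitly, is the algebraic identity $\la A_k\phi, JA_j\phi\ra = 0$ for all $1\le j,k\le 4$: it follows from the fact that $A_j, A_k$ are commuting self-adjoint operators together with one integration by parts (for instance, with $\phi=\phi_1+i\phi_2$ and $j,k\le 3$, the expression reduces to $\int(\partial_k\phi_1\,\partial_j\phi_2 - \partial_j\phi_1\,\partial_k\phi_2)\,dx = 0$). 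Only \emph{after} this cancellation is the remaining coefficient $\la A_k(\Pip-\Pipz)\phi, \cdot\ra + \cdots$ genuinely of class $\cR^1_2$, i.e.\ bounded by $|p|\,\|\phi\|_{\cV^{-s,-k}}^2$. Without this vanishing your $8\times 8$ matrix picks up an $O(1)$ off-diagonal perturbation involving $\|\nabla\phi\|_{L^2}^2$ and the Neumann inversion does not close in the required smoothing classes. You should state and use this identity explicitly.
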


In order to prove Lemma \ref{lem:sol.darboux}, as a first step we will
solve the infinite dimensional equation for $Y_\phi$ (given by the
last component of \eqref{systemdarboux}) as a function of $Y_p$ and
$Y_q$. As a second step, we will substitute such a solution into the
equations for $Y_p$ and $Y_q$, obtaining a finite dimensional system
for $Y_p$ and $Y_q$. The following lemma will be useful:
\begin{lemma}
\label{lem:Dtsmooth}
 The operator $D_t:= E + t(\Pipz^* \Pip^*E \Pip - E)$ is
 skew-symmetric. Furthermore, provided $|p |$ is small enough,
 $D_t$ is invertible and its inverse is given by $D_t^{-1}= J + S_t$
 with
\begin{equation}
\label{A.3.1}
\norma{S_t\phi}_{\cV^{s_2,k_2}}\sleq\norma{\phi}_{\cV^{-s_1,-k_1}}\ ,\quad
\forall s_1,k_1,s_2,k_2\ .
\end{equation}
\end{lemma}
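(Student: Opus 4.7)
The plan is to first simplify $D_t$ using the intertwining identities in \eqref{proj.formulae}, which makes skew-symmetry essentially transparent, and then to invert $D_t$ via a Neumann series built on the fact, recorded in Remark \ref{smoothing.proj}, that $\Pip-\Pipz$ is smoothing of order $|p|$. Concretely, from $\Pip^* E = E\Pip$ together with $\Pip^2=\Pip$ one gets $\Pip^* E\Pip = E\Pip$, and similarly $\Pipz^* E = E\Pipz$, so that
$$
\Pipz^* \Pip^* E \Pip \;=\; \Pipz^* E \Pip \;=\; E\,\Pipz \Pip,
$$
and hence $D_t = E\bigl[\id + t(\Pipz \Pip - \id)\bigr]$.

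Skew-symmetry of $D_t$ on $\cV$ then follows immediately by pairing: for $\phi_1,\phi_2\in\cV$ (so $\Pipz\phi_j=\phi_j$) one has $\la E\Pipz\Pip \phi_1,\phi_2\ra = \la E\Pip\phi_1,\phi_2\ra = \la E\Pip\phi_1,\Pip\phi_2\ra$ (again using $\Pip^* E=E\Pip$), which is visibly skew in the two slots because $E$ is. The first summand $\la E\phi_1,\phi_2\ra$ is skew for the same reason.

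For invertibility the key observation is that on $\cV$
$$
\Pipz\Pip - \id \;=\; \Pipz(\Pip-\Pipz),
$$
because $(\Pipz-\id)\phi = 0$ for $\phi\in\cV$. By Remark \ref{smoothing.proj}, $\Pip - \Pipz$ is smoothing with norm $\sleq|p|$ between any prescribed spaces $\cV^{-s_1,-k_1}\to\cV^{s_2,k_2}$, and composing with the bounded projector $\Pipz$ preserves this estimate. For $|p|$ small enough and $|t|\leq 1$ this yields an operator of norm $<1/2$ on each $\cV^{-s_1,-k_1}$, so $\id + t\Pipz(\Pip-\Pipz)$ is invertible by Neumann series. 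Writing
$$
[\id+t\Pipz(\Pip-\Pipz)]^{-1} \;=\; \id \,-\, t\Pipz(\Pip-\Pipz)\,[\id+t\Pipz(\Pip-\Pipz)]^{-1}
$$
and composing with $J=E^{-1}$ on the right yields $D_t^{-1} = J + S_t$ with
$$
S_t \;=\; -\,t\,\Pipz(\Pip-\Pipz)\,[\id+t\Pipz(\Pip-\Pipz)]^{-1}\,J.
$$
Since $S_t$ starts with a smoothing factor of order $|p|$ followed by bounded operators, it satisfies the estimate \eqref{A.3.1}. The only mildly delicate point is that the Neumann series must converge uniformly on every scale, but this is guaranteed by the quantitative smoothing estimate in Remark \ref{smoothing.proj} holding for all $s_1,k_1,s_2,k_2$ simultaneously.
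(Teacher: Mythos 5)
Your proof is correct and follows essentially the same route as the paper: you factor out $E$ on the left to write $D_t = E[\id + t\,\Pipz(\Pip - \Pipz)]$ (the paper writes this as $E(\id + tJ\widetilde D)$ with $\widetilde D = \Pipz^*E(\Pip-\Pipz)$, which is the same operator since $J\Pipz^*E = \Pipz$), and then invert by Neumann series using the smoothing estimate on $\Pip - \Pipz$ from Remark~\ref{smoothing.proj}. The only cosmetic difference is that you spell out skew-symmetry by a pairing computation where the paper dispatches it in one line after inserting a trailing $\Pipz$.
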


\begin{proof}
 Since $D_t$ acts on $\cV^{s,k}$, we can write $\Pipz^* \Pip^*E \Pip$ as $\Pipz^* \Pip^*E \Pip\Pipz$, from which
skew-symmetry is immediate. We have now that $D_t = E + t \widetilde{D}$ with $\widetilde{D}$ smoothing, since
\begin{equation}
\label{Dtsmooth.coeff}
\Pipz^* \Pip^*E \Pip - E\Pipz = \Pipz^* E (\Pip-\Pipz)
\end{equation}
which is smoothing and fulfills an inequality like \eqref{A.3.1}.
Then $D_t= E(\id + tJ\widetilde{D})$ and inverting by Neumann formula
one gets $D_t^{-1}= J - \sum_{k\geq 1}(-1)^k (tJ\widetilde{D})^k J $
and the thesis.
\end{proof}

\noindent {\em Proof of Lemma \ref{lem:sol.darboux}.}
Consider the last equation of the system \eqref{systemdarboux}.
Introduce
$$ Z_j := \Pipz^* E \Pip \Pipj \phi \ , \qquad \widetilde Z_j :=
\Pipz^* \Pip^* A_j \Pip \phi \ .$$
By Remark \ref{smoothing.proj}, $Z_j$ is in the class $\SH^{1}_{1}$,
while $\widetilde Z_j = \Pipz^* A_j \phi + S^{1}_1 $. Therefore by
Lemma \ref{lem:Dtsmooth} one has
\begin{equation}
\label{solYphi}
\begin{aligned}
 Y_\phi = & D_t^{-1} W_\phi - t Y^j_p D_t^{-1} Z_j - t Y^j_q D_t^{-1}
 \widetilde Z_j = J W_{\phi} + S(t,p, W_{\phi}) - t Y^j_p D_t^{-1} Z_j
 - t Y^j_q D_t^{-1} \widetilde Z_j \ .
 \end{aligned}
\end{equation}
We substitute such a formula in the equations for $Y_p $ and $Y_q$,
obtaining a finite dimensional system for $Y_p$ and $Y_q$. To solve
this system, we need to analyze the regularity of its coefficients. We
begin with the coefficients of the equations in the first line of
system \eqref{systemdarboux}.  Using the definition \eqref{ortvect} of
$\Pip$ one obtains that
\begin{align*}
& \la E \Pipj \phi; \Pipk \phi \ra \in \cR^{0}_2 \ , \qquad
  \frac{\partial}{\partial p_k} \la A_j \Pip \phi; \Pip \phi \ra \in
  \cR^{0}_2 \ .
\end{align*}
Consider now the terms of the form  $\la E \Pipk \phi; \Pip Y_\phi \ra $.
We replace $Y_\phi$ with the  expression \eqref{solYphi}, obtaining that
$$
\la E \Pipk \phi; \Pip Y_\phi \ra = \la E \Pip \Pipk \phi; Y_\phi \ra = a_{jk}(t,p, \phi) Y^j_p  + b_{jk}(t,p, \phi) Y^j_q + c_k(t,p , \phi, W_{\phi})
$$
where the maps $a_{jk}, b_{jk} , c_k$ satisfy for every $s,k \geq 0$
\begin{equation}
\label{darboux.coeff.est}
\begin{aligned}
& \mmod{a_{jk}(t,p, \phi)} \ ,  \mmod{b_{jk}(t,p, \phi)} \sleq  |p| \norm{\phi}_{\cV^{-s, -k}}^2 \ , \\
& \mmod{c_k(t,p, \phi, \Phi)} \sleq  |p| \norm{\phi}_{\cV^{-s, -k}}     \norm{\Phi}_{\cV^{-s, -k}} \ .
\end{aligned}
\end{equation}
We consider now the term $\la A_k \Pip \phi; \Pip Y_\phi \ra$ in the second line of system
\eqref{systemdarboux}.  Inserting the expression \eqref{solYphi}, we have
\begin{align*}
\la A_k \Pip \phi; \Pip Y_\phi \ra  =
& \la  A_k \Pip \phi ; \Pip \Big(J W_{\phi} + S(t,p, W_{\phi}) - t Y^j_p  D_t^{-1} Z_j - t Y^j_q D_t^{-1} \widetilde Z_j \Big) \ra \\
 = & \la A_k \Pip \phi ; \Pip J W_\phi \ra - t \la A_k \Pip \phi ; \Pip \Pipz J A_j \phi \ra Y^j_q  \\  & \qquad + \widetilde c_k(t, p, \phi, W_{\phi}) + \widetilde a_{jk}(t, p, \phi) Y^j_p  + \widetilde b_{jk}(t, p, \phi) Y^j_q  \\
& = \la A_k  \phi ;  J W_\phi \ra + \widetilde c_k(t, p, \phi, W_{\phi}) + \widetilde a_{jk}(t, p, \phi) Y^j_p  + \widetilde b_{jk}(t, p, \phi) Y^j_q
\end{align*}
where to pass from the second to the  third equality  we used that
$$
\la A_k \Pip \phi ; \Pip \Pipz JA_j \phi \ra = \la A_k \phi; J A_j \phi \ra + R^1_2 = R^1_2
$$
as $\la A_k \phi; J A_j \phi \ra =0$.
Once again  $\widetilde a_{jk}, \widetilde b_{jk} , \widetilde c_k $ satisfy  estimates analogous  to \eqref{darboux.coeff.est}.
Altogether we obtain that  system \eqref{systemdarboux} is reduced to the following finite dimensional system:
\begin{equation}
 \label{systemdarboux2}
 \left[ \begin{pmatrix}
                0_{4} & \id_4 \\
                -\id_4 & 0_{4}
               \end{pmatrix}
               +
           K(t,p,\phi)
        \right]
        \begin{pmatrix}
         Y_p \\
         Y_q
\end{pmatrix} \   =              \
\begin{pmatrix}
W_{p} + c(t, p, \phi, W_{\phi}) \\
W_{q} + t\la A_k \phi, J W_{\phi} \ra + \tilde c(t,p,  \phi, W_{\phi} )
\end{pmatrix}
\end{equation}
where $K$ is a matrix whose elements are in the class $\SR^{0}_2$,
while $0_4$ and $\id_4$ denote the zero respectively identity $4
\times 4$ matrices.  The operator on the l.h.s. of \eqref{systemdarboux2} is invertible by
Neumann series, provided $\phi$ belongs to a sufficiently small neighborhood of the origin, and its inverse is given by $\begin{pmatrix} 0_{4} &
  -\id_4 \\ \id_4 & 0_{4}
               \end{pmatrix}   + S_t$, where $S_t$ is a  matrix whose  coefficients are in $\SR^{0}_2$.
Thus we can solve system \eqref{systemdarboux2} and we obtain that
$Y_p$ and $Y_q$ have the claimed structure. Finally we insert the so
find expressions for $Y_p$ and $Y_q$ into equation \eqref{solYphi} and
deduce that also $Y_\phi$ has the claimed structure.  \qed

\vspace{1em}
Now we apply Lemma \ref{lem:sol.darboux} to the case where  $\cW=\Theta_0 - \Theta$:
$$ \cW = -\frac{1}{2} \la E \Pip \phi, \Pipk \phi \ra dp_k +
\frac{1}{2}\la A_k \Pip \phi, \Pip \phi \ra dq^k + \frac{1}{2} \la ( E
- \Pipz^* \Pip^* E \Pip) \phi , \cdot \ra \ .
$$
\begin{remark}
\label{no.q}
In such a case the coefficients $W_p$, $W_q$, $W_\phi$ do not depend
on the variables $q^j$'s, then the vector field $Y$ of Lemma
\ref{lem:sol.darboux} is independent of the $q^j$'s, as one verifies
inspecting the terms in the r.h.s. of \eqref{eq:sol.darboux.1}-\eqref{eq:sol.darboux.3}.
\end{remark}

 \begin{proposition}
\label{yt.lemma}
Let $\cW = \Theta_0 - \Theta$. Then   equation \eqref{sistema1} has a
unique solution $Y_t = [Y_p, Y_q, Y_\phi]$ of the form
\begin{equation}
\begin{aligned}
 Y_{pk} &= -\frac{1}{2}\la A_k \phi, \phi \ra + R^{1}_2(t, p, \phi)  \\
Y_q^k &=   R^{1}_2(t, p, \phi) \ , \\
Y_\phi &= Y_q^k    J \Pipz A_k \phi +  X^{1}_1(t, p, \phi) \ .
\end{aligned}
\end{equation}
Furthermore, $Y_t$ does not depend on the $q^j$'s.
\end{proposition}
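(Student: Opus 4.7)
The plan is to apply Lemma~\ref{lem:sol.darboux} with $\cW = \Theta_0-\Theta$ and then simplify the general formulas to bring them into the stated form.

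First I would compute the components of $\cW$ explicitly. Subtracting the formula for $\Theta$ in Lemma~\ref{l.1} from the reference potential $\Theta_0 = -p_j\di q^j + \tfrac12\la E\phi;\cdot\ra$ (which satisfies $\di\Theta_0=\Omega_0$) yields
\[
W_p^k = -\tfrac12\la E\Pip\phi;\Pipk\phi\ra,\qquad W_q^k = \tfrac12\la A_k\Pip\phi;\Pip\phi\ra,
\]
while the $\Phi$-component is represented by $\tfrac12(E-\Pipz^*\Pip^*E\Pip)\phi$. By identity \eqref{Dtsmooth.coeff}, restricted to $\phi\in\cV$, this latter operator reduces to $\tfrac12\Pipz^*E(\Pipz-\Pip)\phi$, which by Remark~\ref{smoothing.proj} is of class $\cS^1_1$. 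A key observation is that $W_p^k$ itself is already of class $\cR^1_2$: a direct computation of $\Pipk|_{p=0}\phi$ shows that it lies in the tangent space $T_{\eta_0}\Tr$, which is the symplectic orthogonal of $\cV$, so the leading $p^0$-contribution $\la E\phi;\Pipk|_{p=0}\phi\ra$ vanishes identically and $W_p^k = \cO(|p|)\|\phi\|^2$. Analogously, expanding $\Pip=\Pipz+(\Pip-\Pipz)$ in $W_q^k$ and using $\Pipz\phi=\phi$ together with the $\cO(|p|)$ smoothing bound on $\Pip-\Pipz$ gives $W_q^k = \tfrac12\la A_k\phi;\phi\ra + R^1_2$.

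Second, I would substitute these data into formulas \eqref{eq:sol.darboux.1}--\eqref{eq:sol.darboux.3}. In the expression for $Y_p^k$ the dominant contribution is $-W_q^k = -\tfrac12\la A_k\phi;\phi\ra + R^1_2$; the remaining terms (the $M_j$-corrections, $M_3^k W_p^k$, and $P_1$) all fall in $\cR^1_2$ because $W_p^k\in\cR^1_2$ and because the bookkeeping device from Remark~\ref{notation}---trading $\tfrac12\la A_k\phi;\phi\ra$ for $N_k=\cP_k(\phi)$ whenever a ``disposable'' quadratic-in-$\phi$ factor appears alongside a smoothing coefficient---upgrades $\cR^0_{j+2}$-bounds to $\cR^1_j$-bounds. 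An entirely parallel analysis via \eqref{eq:sol.darboux.2} shows $Y_q^k\in\cR^1_2$.

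Third, I would treat the infinite-dimensional component. Formula \eqref{eq:sol.darboux.3} reads
\[
Y_\phi = JW_\phi + S(t,p,W_\phi) - tY_q^j J\Pipz^*A_j\phi + tY_q^j\Upsilon_j + tY_p^j\Xi_j.
\]
The terms $JW_\phi, S(t,p,W_\phi)$ lie in $\cS^1_1$ since $W_\phi$ does, and $tY_q^j\Upsilon_j, tY_p^j\Xi_j$ lie in $\cS^1_1$ by the regularity of $\Upsilon_j,\Xi_j$ combined with the $\cR^1_2$ bounds on $Y_p^j,Y_q^j$ just established. Thus only the term $-tY_q^j J\Pipz^*A_j\phi$ survives as a main contribution; applying the adjoint identity $\Pipz^*J=J\Pipz$ from \eqref{proj.formulae} produces the form advertised in the statement, with any remaining discrepancy absorbed into the smoothing remainder $X^1_1$. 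Independence of $Y_t$ from the variables $q^j$ follows from Remark~\ref{no.q}, since the coefficients $W_p, W_q, W_\phi$ of $\cW$ themselves do not depend on $q^j$.

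The main obstacle is the algebraic cancellation showing $W_p^k\in\cR^1_2$: the nonobvious fact that $\Pipk|_{p=0}\phi$ lies in $T_{\eta_0}\Tr$ and hence is annihilated by $\la E\phi;\cdot\ra$ for $\phi\in\cV$ is what makes the Darboux construction compatible with the filtration by smoothing classes; everything else reduces to careful tracking of $\cR^i_j$ and $\cS^i_j$ indices.
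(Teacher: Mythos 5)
Your proposal is correct and follows essentially the same route as the paper: compute the components $W_p$, $W_q$, $W_\phi$ of $\cW = \Theta_0-\Theta$, determine their smoothing classes, feed them into Lemma~\ref{lem:sol.darboux}, and close the bookkeeping by trading $\tfrac12\la A_k\phi;\phi\ra$ for $N_k=\cP_k(\phi)$, with $q$-independence coming from Remark~\ref{no.q}. The only place you add content beyond the paper is the justification that $W_p^k\in\cR^1_2$, which the paper asserts without proof; your geometric argument (that for $\phi\in\cV$ one has $\Pipk\vert_{p=0}\phi\in T_{\eta_0}\Tr$, so $\la E\phi;\Pipk\vert_{p=0}\phi\ra=0$ by symplectic orthogonality) is correct, and it is equivalent to the more mechanical route implicit in the paper's toolkit, namely $\la E\Pip\phi;\Pipk\phi\ra = \la E\phi;\Pip\Pipk\phi\ra$ via $E\Pip=\Pip^*E$ from \eqref{proj.formulae}, followed by the bound $\norm{\Pip\Pipk\phi}\sleq|p|\norm\phi$ of \eqref{A.3.2}, which itself rests on $\Pip\Pipk\Pip=0$.
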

\begin{proof}
The 1-form $\cW = \la W; \cdot \ra$ has components given by
\begin{equation}
\label{W.components}
\begin{aligned}
W_p^k = -\frac{1}{2} \la E \Pip \phi; \Pipk \phi \ra  \ , \\
W_{qk} =  \frac{1}{2}\la A_k \Pip \phi; \Pip \phi \ra \ , \\
W_\phi = \frac{1}{2} ( E - \Pipz^* \Pip^* E \Pip) \phi  \ .
\end{aligned}
\end{equation}
One verifies easily that $W_p \in \cR^{1}_2$, $W_\phi \in \cS^1_1 $
and $W_q^k = \frac{1}{2} \la A_k \phi ; \phi \ra + R^1_2$.  Inserting
these expressions into \eqref{eq:sol.darboux.1} gives the following:
\begin{align*}
Y_p^k = & - (1+M_1^k(p, \phi))\left( \frac{1}{2}\la A_k \phi; \phi \ra + R^1_2 \right) + M_3^k (p, \phi) + P^1(t, \phi, S^1_1) \\
= & - (1+M_1^k(p, \phi)) \frac{1}{2}\la A_k \phi; \phi \ra + R^1_2 \\
= & -  \frac{1}{2}\la A_k \phi; \phi \ra - \cP_k(\phi)M_1^k(p,\phi) + R^1_2 \\
= & -  \frac{1}{2}\la A_k \phi; \phi \ra+ R^1_2
\end{align*}
where in the last equality we used that, by the very definition of the class $\cR^i_j$, one
has $\cP_k(\phi)M_1^k(p,\phi) = N_k M_1^k(p, \phi) \in R^1_2$.

Similar computations for the components $Y_q$ and $Y_\phi$ imply  the
claim. Finally by Remark \ref{no.q}, the vector field $Y_t$ does not depend on
the $q^j$'s.
\end{proof}
We are finally able to prove Theorem \ref{darboux}.
\vspace{1em}\\
\noindent{\em Proof of Theorem \ref{darboux}.}  Consider the the
vector field $Y_t$ of Proposition \ref{yt.lemma}. By Lemma \ref{l.6}
it generates a flow $\cF_t$ which is an almost smoothing perturbation
of the identity and such that $\cF_t \vert_{t=1}$ has the structure
\eqref{t.1.1} and provides the wanted change of coordinates. Since
 $Y_t$ does not depend on the $q^j$'s, it follows that the
nonlinear maps in the r.h.s. of \eqref{t.1.1} are independent of the
$q^j$'s as well.

By Remark \ref{timedep}, the inverse transformation has the same
structure.
 \qed

\section{The Hamiltonian after a change of coordinates}\label{coordiHam}

Before computing the Hamiltonian in Darboux coordinates it is worth to
scale the coordinates by introducing the new variables
$(\tilde p,\tilde q,\tilde \phi)$ defined by
\begin{equation}
\label{second}
p=\mu^2 \tilde p\ ,\quad q=\tilde q\ ,\quad \phi=\mu\tilde\phi\ ,
\end{equation}
where
\begin{equation}
\label{mu}
\mu:=\epsilon^{1/4}\ .
\end{equation}
Correspondingly we scale also the $N_k$'s as
$$
N = \mu^2 \tilde N \ .
$$
\begin{remark}
\label{hasca}
The variables \eqref{second} are not canonical, however, under the
change of coordinates \eqref{second}, the Hamilton
equations of a Hamiltonian function $H$ are transformed into the
Hamilton equation of $\tilde H:=H/\mu^2$.
\end{remark}
Due to this scaling, it is convenient to {\bf substitute the classes
  $\cS^{i}_j$ and $\cR^{i}_j$ with new classes $\cS^{i}_j$ and
  $\cR^{i}_j$ in which the order of zeroes in the variable $p$ and $N$
  is substituted by the order of zeroes in $\mu$. }  Thus
$S^k_l(\mu,\tilde p,\tilde q,\tilde \phi)$ will be said to be of class $\cS^k_l$ if there
exists a function $\wtS$ of class $C_R(\R\times\wcK,\cV)$
s.t. $S^k_l( \mu,\tilde p,\tilde q,\tilde \phi)=\wtS^k_l( \mu,\cP(\tilde \phi),\tilde p,\tilde q,\tilde \phi)$ and for
any $s_1,k_1,s_2,k_2$ one has
\begin{equation}
\label{newclass}
\norma{\tilde S^k_l(\mu,\tilde N,\tilde p,\tilde q,\tilde \phi)
}_{\cV^{s_2, k_2}}\sleq \mu^k\norma{\tilde \phi}_{\cV^{-s_1,-k_1}}^l\ .
\end{equation}
\begin{remark}
In order to pass from the "old" classes $\cS^i_j$ to the "new" classes $\cS^i_j$, the following remark is useful:
$$
S \mbox{ in the "old" class } \cS^i_j \Leftrightarrow S \mbox{ in the "new" class } \cS^{2i+j}_j \ .
$$
\end{remark}
When dealing with the scaled variables we will consider again almost
smoothing perturbations of the identity, which are still functions of
the form \eqref{alm}, but with smoothing functions belonging to the
new classes.

{\bf From now on we will only deal with the scaled variables, so we
  will omit the ``tilde'' from the variables}. Furthermore it is
useful also to still denote by $\coS$ the map \eqref{p.q.phi.coord} in
the scaled variables. More precisely, we redefine the map $\coS$
according to
\begin{equation}
\label{res1}
\coS(p,q,\phi)=e^{q^jJA_j}(\eta_{\mu^2 p}+\mu\Pi_{\mu^2p}\phi)\ .
\end{equation}
Similarly we will still denote by $\cD$ the map \eqref{t.1.1} in the
scaled variables \eqref{second}. 

It is worth to remark that, since in the scaled variables the size of the neighborhoods of
$\Tr_0$ one is dealing with is controlled by $\mu$, the open sets (whose existence is ensured by the definitions
of the various classes of objects) can be fixed a priory and the
smallness requirement become just smallness requirements on $\mu$.

Given an open domain $\cU\subset
\cK^{s,k}$ for some $s,k$ and a positive $\rho$, we will denote by
\begin{equation}
\label{ext}
\cU_{\rho}:=\bigcup_{(p,q,\phi)\in\cU}B_\rho^{s,k}(p,q,\phi)\ ,
\end{equation}
where $B_\rho^{s,k}(p,q,\phi)  $ is the open ball in $\cK^{s,k}$ of
radius $\rho$ and center $(p,q,\phi)$.

\begin{remark}
\label{darbdef}
Given an arbitrary neighbourhood $\cU\subset \cK^{1,0}$ of $\Tr_0$,
there exits $\mu_*$ s.t., provided $0\leq\mu<\mu_*$, one has that
$\cD:\cU\to \cK^{1,0}$ is well defined.
\end{remark}

In the scaled variables we will use the following version of Lemma
\ref{l.6}, which will be proved in Appendix \ref{flow.chi}.

\begin{lemma}
\label{l.6.1}
Let $s,P,Q\in \resto^a_j$, $X\in\Sc^a_i $ $j\geq i  \geq 0$, $a\geq 3$
be smoothing functions, and consider the system
\begin{equation}
\label{l.5.12.1}
\dot p=P(\mu,N,p,q,\phi)\ ,\quad \dot q =Q(\mu,N,p,q,\phi)\ ,\quad
\dot \phi= s^l(\mu,N,p,q,\phi)\Pi_0 JA_l\phi+X(\mu,N,p,q,\phi)\ .
\end{equation}
Fix a neighborhood $\U\subset \cK^{1,0}$ of $\Tr_0$ and a positive
$\rho$ in such a way that $s,P,Q,X$ fulfill the estimates
\eqref{newclass} with a constant uniform over the domain $\cU_\rho$
(and depending on $s_1,k_1,s_2,k_2$). Then there exists a positive
$\mu_*$ s.t., provided $0\leq \mu<\mu_*$, the flow $\cA_t$ exists in
$\U$ for $|t|\leq 1$, and is an almost smoothing perturbation of the
identity of class $\ap^3_{j,i}$, namely
\begin{equation}
\label{flt.1}
\cA_t=\Big( p+\bar P(\mu,N,p,q,\phi,t),q+ \bar
Q(\mu,N,p,q,\phi,t), \Pipz\e^{\alpha^l(\mu,N,p,q,\phi,t)JA_l}(\phi+S(\mu,N,p,q,\phi,t))\Big)
\end{equation}
with $\bar P,\bar Q,\alpha^l\in\resto^a_j$ and
$S\in\Sc^a_{i}$.  One also has
$$N(t)=N+R^a_{i+1}+R^{2a}_{2i}\ ,$$ and $\cA_t(\cU)\subset \cU_\rho$. 
Finally  $\bar P,\bar Q,\alpha, S$ fulfill inequalities of the form
\eqref{newclass} with constants uniform on $\cU$.
\end{lemma}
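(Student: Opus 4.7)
The plan is to follow the scheme used in the proof of Lemma 3 of \cite{bambusi2013}, adapted here to the scaled classes (with orders in $\mu$) and with the hypothesis $a\geq 3$ playing the role that ensures uniform control on a $\mu$-independent neighbourhood $\cU$ of $\Tr_0$. Concretely, I would try to construct $\cA_t$ directly in the form \eqref{flt.1} by making the ansatz
\begin{equation*}
\cA_t(p,q,\phi)=\bigl(p+\bar P(t),\, q+\bar Q(t),\, \Pipz\, e^{\alpha^l(t)JA_l}(\phi+S(t))\bigr),
\end{equation*}
with $\bar P(0)=\bar Q(0)=0$, $\alpha(0)=0$, $S(0)=0$, and then \emph{deriving} a coupled system of ODEs for the unknowns $(\bar P,\bar Q,\alpha,S)$ by substituting into \eqref{l.5.12.1}. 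The $p$- and $q$-equations give $\dot{\bar P}=P$, $\dot{\bar Q}=Q$ evaluated at the ansatz. For the $\phi$-equation I would exploit that the operators $JA_l$ commute pairwise (so $JA_l$ commutes with $e^{\alpha^m JA_m}$) together with Remark \ref{pishi} (which says $\Pipz e^{\alpha^l JA_l}\phi = e^{\alpha^l JA_l}(\phi+\text{smoothing})$) to match terms: setting $\dot\alpha^l=s^l$ plus smoothing corrections takes care of the transport term $s^l\Pipz JA_l\phi$, and the remaining equation takes the schematic form $\Pipz e^{\alpha^l JA_l}\dot S = X + (\text{smoothing})$.

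The second step is to check that this last equation can be solved for $\dot S$ in the class $\cS^a_i$. Since $\Pipz e^{\alpha^l JA_l}$ acts on $\Pipz\cV$ as $e^{\alpha^l JA_l}$ modulo a smoothing perturbation of the identity (Remarks \ref{smoothing.proj}, \ref{smoothing.proj2} and Remark \ref{pishi}), it is invertible by a Neumann series for $\mu$ small, its inverse being a smoothing perturbation of $e^{-\alpha^l JA_l}$. This gives a closed system
\begin{equation*}
\dot{\bar P}=\widetilde P,\quad \dot{\bar Q}=\widetilde Q,\quad \dot\alpha^l=\widetilde s^{\,l},\quad \dot S=\widetilde X,
\end{equation*}
with $\widetilde P,\widetilde Q,\widetilde s^{\,l}\in\cR^a_j$ and $\widetilde X\in\cS^a_i$, where each tilded object is a composition of the original datum with the ansatz and acquires at most a smoothing correction.

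For the third step I would set up a standard Picard iteration on the scale $\cK^{1,0}$ (using the properties of regularizing maps from the definitions of $C_R$, $\cS^k_l$, $\cR^k_l$). The key smallness input is $a\geq 3$: this guarantees that the right-hand sides are bounded by $C\mu^3$ uniformly on $\cU_\rho$, so for $0\le\mu<\mu_*$ with $\mu_*$ small enough the iterates stay inside $\cU_\rho$ and the flow exists for $|t|\le 1$; in particular $\cA_t(\cU)\subset\cU_\rho$. Bootstrapping the integral form of the ODEs against the class estimates \eqref{newclass} yields the classes $\bar P,\bar Q,\alpha^l\in\cR^a_j$ and $S\in\cS^a_i$ with constants uniform on $\cU$. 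Finally, for the formula $N(t)=N+R^a_{i+1}+R^{2a}_{2i}$ I would differentiate $N_k(t)=\cP_k(\phi(t))$ along the flow: the term involving $X$ contributes a smoothing remainder of class $\cR^a_{i+1}$ (one extra $\phi$ power comes from the pairing with $A_k\phi$), while the term involving $s^l \Pipz JA_l\phi$ is quadratic in $\phi$ and yields $\cR^{2a}_{2i}$, exactly as in Remark \ref{pj}.

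The main obstacle is the derivation and well-posedness of the equation for $\dot S$: one must show that all the commutators $[\Pipz, e^{\alpha^l JA_l}]$, $[\Pipz JA_l, \Pipz]$ and the difference between $\Pi_{p+\bar P}$ and $\Pipz$ are genuinely smoothing and that inverting $\Pipz e^{\alpha^l JA_l}$ on $\Pipz\cV$ only adds further terms of class $\cS$. This is the point where the unboundedness of $JA_j$ for $j=1,2,3$ would ordinarily prevent one from defining flows on a fixed Sobolev scale; it is precisely offset by the fact that it appears only multiplied by $\Pipz$ (and hence, in effect, by $\id-\Pipz+\Pipz$, with the first piece smoothing) and acting on $e^{\alpha JA}\phi$, so that the unbounded part is absorbed into the explicit group $e^{\alpha^l JA_l}$. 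Once this is carefully justified, all the remaining steps are routine adaptations of \cite{bambusi2013} to the scaled classes.
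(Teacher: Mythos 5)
Your proposal is correct and essentially reproduces the paper's argument: the paper treats $N$ and $\alpha$ as auxiliary independent variables and substitutes $\phi = e^{\alpha^jJA_j}u$ with $\dot\alpha^j = s^j$, turning the $\phi$-equation into a genuinely smooth ODE for $u$ on the whole Sobolev scale (so $S=u-\phi$), which is exactly your ``extraction of the transport group'' in different clothing, and then appeals to the contraction mapping principle with the bound $\|\text{vector field}\|\leq C\mu^a$ to control the domain. The only wrinkle is your bookkeeping for $N(t)$: the transport contribution $s^l\la A_b\phi, JA_l\phi\ra$ vanishes identically (translations and the gauge flow preserve the $\cP_b$'s), so $\dot N_b\in\cR^a_{i+1}$ directly; the $\cR^{2a}_{2i}$ piece in Remark~\ref{pj} in fact comes from $\cP_j(S)$ rather than from $s^l\Pipz JA_l\phi$ — a harmless slip that does not affect the conclusion.
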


The main result of the present section is the following lemma

\begin{lemma}\label{scaled}
Let $H$ be the Hamiltonian \eqref{H}. Let $\cA\in \ap^3_{0,0}$ be an
almost smoothing perturbation of the identity of the form
\eqref{flt.1}.  Then
$(H \circ\coS\circ\cD\circ \cA^3)/\mu^2$ has the following form:
\begin{equation}
\label{Hsca}
\frac{H \circ\coS\circ\cD\circ \cA^3}{\mu^2}=H_{L}+\mu^2\fh_m+ H_R
\end{equation}
where
\begin{align}
\label{HL}
H_L(\phi)&=\frac{1}{2}\left\langle EL_0\phi;\phi\right\rangle
\\
\label{mech}
\fh_m(p,q)&=\frac{|\bp|^2}{2m}+V^{eff}_m(\bq)\ ,
\qquad \bp=(p_1,p_2,p_3) \ , \bq = (q_1, q_2, q_3) \ , \\
\label{rest}
H_R(\mu,N,p,q,\phi)&= \mu^2 D(N,p)+\frac{1}{2}\left\langle
W^2_0\phi;\phi\right\rangle+\frac{\mu^4}{2}\left\langle V_q
\phi;\phi \right\rangle+ \mu H^3_P(\eta_0+S_{0, hom}^2;\phi+S^2_1)+
\\
&+R^4_{0,hom} +R^3_1+R^2_2\ .
\end{align}
Here
\begin{align}
\label{VH3}
V_q(x)&:=V(x+q)\ ,
\\
\label{VH4}
H^3_P(\eta;\Phi)&:=H_P(\eta+\Phi)-H_P(\eta)-\di
H_P(\eta)\Phi-\frac{1}{2}\di^2H_P(\eta)[\Phi,\Phi]\ ,
\end{align}
$D$ is a smooth function vanishing for $N=0$, $W_0^2$ is a linear
operator of the form
$$
W_0^2[{\rm Re}\phi+\im{\rm Im}\phi]=S^2_0{\rm Re}\phi+S^2_0{\rm Im}\phi\ ,
$$ with different functions $S^2_0$. The quantities $S_{0, hom}^2$,
and $R_{0,hom}^4$ are  functions of class $\cS_{0}^2$ and
$\cR^4_{0,hom} $ respectively,
which are homogeneous of degree $0$ in $\phi$ (they do not depend on
$\phi$).
\end{lemma}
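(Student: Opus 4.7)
The plan is to compute $H\circ\coS$ directly by Taylor expansion in $\phi$ and in $p$, and then verify that the extra corrections introduced by composing with $\cD$ and $\cA^3$ are absorbable into the smoothing classes $\cS^k_l$, $\cR^k_l$. As a preliminary, the gauge invariance of $H_0+H_P$ under $e^{q^jJA_j}$ gives $(H_0+H_P)\circ\coS=(H_0+H_P)(\eta_{\mu^2 p}+\mu\Pi_{\mu^2 p}\phi)$, while only the spatial translations act non-trivially on $|\psi|^2$, so $H_V\circ\coS=\int V(x+\bq)|\eta_{\mu^2 p}+\mu\Pi_{\mu^2 p}\phi|^2\,dx$; this immediately produces the $\bq$-dependence of $V^{eff}_m(\bq)$ and $V_q$ and the $q^4$-independence of the full Hamiltonian.

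Next I would Taylor expand $(H_0+H_P)(\eta_p+\mu\Pi_p\phi)$ to third order in $\mu\Pi_p\phi$. The zeroth-order piece $(H_0+H_P)(\eta_{\mu^2 p})$ is a smooth function of $p$; since $\partial_{p_k}(H_0+H_P)(\eta_p)=\lambda^k(p)$ (which follows from \eqref{gro} together with $\langle A_j\eta_p,\partial_{p_k}\eta_p\rangle=\delta_{jk}$), expansion in the scaled $p$ and division by $\mu^2$ yields, after dropping the irrelevant constant $(H_0+H_P)(\eta_0)/\mu^2$, the kinetic term $\mu^2|\bp|^2/(2m)$ of $\fh_m$ plus contributions absorbable into $\mu^2 D(N,p)$ and $R^4_{0,hom}$. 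The first-order-in-$\phi$ term vanishes by the soliton equation $\nabla(H_0+H_P)(\eta_p)=\lambda^j(p)A_j\eta_p$ combined with the orthogonality $\langle A_j\eta_p,\Pip\phi\rangle=0$ from \eqref{symplectic.ort}. The quadratic term $\frac{1}{2}\langle\nabla^2(H_0+H_P)(\eta_{\mu^2 p})\Pi_{\mu^2 p}\phi,\Pi_{\mu^2 p}\phi\rangle$ I would split as the Hessian at $\eta_0$, which (up to a $\cP_4$-shift contributing to $\mu^2 D(N,p)$) equals $\frac{1}{2}\langle EL_0\phi,\phi\rangle$, plus an $O(p)$ correction that defines $W_0^2\in\cS^2_0$; the discrepancy $\Pi_{\mu^2 p}-\Pi_0$, being smoothing and of order $\mu^2$, produces a term of class $\cR^2_2$. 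The cubic-and-higher part is exactly $\mu H^3_P(\eta_{\mu^2 p};\Pi_{\mu^2 p}\phi)$, which, writing $\eta_{\mu^2 p}=\eta_0+S^2_{0,hom}$ and $\Pi_{\mu^2 p}\phi=\phi+S^2_1$, takes the form $\mu H^3_P(\eta_0+S^2_{0,hom};\phi+S^2_1)$.

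Similarly, expanding $\mu^4\int V_q|\eta_{\mu^2 p}+\mu\Pi_{\mu^2 p}\phi|^2\,dx$ and dividing by $\mu^2$ yields three contributions: $\mu^2\int V_q b^2_\cE\,dx+R^4_{0,hom}=\mu^2 V^{eff}_m(\bq)+R^4_{0,hom}$ (using $\eta_{\mu^2 p}=b_\cE+O(\mu^2)$ smoothing), the linear-in-$\phi$ piece $\mu^3\langle V_q\eta_p,\Pi_p\phi\rangle\in\cR^3_1$, and the quadratic-in-$\phi$ piece $\frac{\mu^4}{2}\langle V_q\Pi_p\phi,\Pi_p\phi\rangle$, which equals $\frac{\mu^4}{2}\langle V_q\phi,\phi\rangle$ plus an error in $\cR^2_2$ coming from $\Pi_p-\Pi_0$. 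Combining with the previous paragraph produces the required decomposition \eqref{Hsca} for $(H\circ\coS)/\mu^2$.

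It remains to handle the compositions with $\cD\in\ap^1_{0,1}$ and $\cA^3\in\ap^3_{0,0}$: both are of the form $(p,q,\phi)\mapsto(p+R,q+R,\Pi_0 e^{\alpha JA_j}(\phi+S))$ with $R$ in an $\cR$-class and $S$ in an $\cS$-class. Substituting such a map into each term of \eqref{Hsca}, one checks that the leading structure is preserved and that the cross-terms fit the target classes; for instance $\frac{1}{2}\langle EL_0\phi,\phi\rangle$ picks up $\langle EL_0\phi,S\rangle+\frac{1}{2}\langle EL_0 S, S\rangle$ plus a commutator between $e^{\alpha JA_j}$ and $EL_0$, all of which are smoothing. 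The $q$-independence of the remainder classes outside $V^{eff}_m(\bq)$ and $V_q$ is preserved because $\cD$ itself is $q$-independent by Theorem \ref{darboux}, and the only source of $q$-dependence is $H_V$. The main obstacle is the careful bookkeeping: verifying that every cross-term lies in the claimed class, and in particular that the cubic remainder structure $\mu H^3_P(\eta_0+S^2_{0,hom};\phi+S^2_1)$ remains stable under $\cD\circ\cA^3$ rather than spilling into terms with worse functional form.
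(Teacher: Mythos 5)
There is a genuine gap in the second paragraph, where you claim that the Hessian $\tfrac12\di^2(H_0+H_P)(\eta_0)[\phi,\phi]$ equals $H_L(\phi)=\tfrac12\langle EL_0\phi,\phi\rangle$ ``up to a $\cP_4$-shift contributing to $\mu^2 D(N,p)$,'' and that consequently $(H\circ\coS)/\mu^2$ already has the form \eqref{Hsca}. Both assertions are false. By Remark~\ref{rem:HL},
\begin{equation*}
H_0(\phi)+\tfrac12\,\di^2 H_P(\eta_0)[\phi,\phi]=H_L(\phi)-\cE\,N_4\ ,
\end{equation*}
and in the rescaled variables ($\phi=\mu\tilde\phi$, $N=\mu^2\tilde N$, then dropping tildes) the term $\cE N_4$ is of order $1$, not $\mu^2$, so it cannot be absorbed into $\mu^2 D(N,p)$. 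As a result $(H\circ\coS)/\mu^2=H_L-\cE N_4+\cdots$ and does \emph{not} have the form \eqref{Hsca}: without the Darboux map there is simply no mechanism in $H\circ\coS$ that produces the $+\cE N_4$ piece of $H_L$ (the zeroth-order term $(H_0+H_P)(\eta_{\mu^2 p})$ depends only on $p$, and the Hessian term lacks the $\cE$ shift).

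In the paper's argument this term arises precisely from the $p\mapsto p-N+\cdots$ component of $\cD$: writing $\ps=\mu^2(p-N+\cdots)$ and using $\partial_{p_4}(H_0+H_P)(\eta_p)\vert_{p=0}=-\cE(m)$, the linear term of the expansion of $(H_0+H_P)(\eta_\ps)$ contributes $-\cE(m)\ps_4/\mu^2=-\cE(m)p_4+\cE(m)N_4+\cdots$, and the $\cE(m)N_4$ is exactly what completes $H_L$. Your plan of first Taylor expanding $H\circ\coS$ and only afterwards ``checking that the leading structure is preserved'' under $\cD\circ\cA^3$ silently drops this transfer, because when discussing the effect of $\cD$ you track only the field-side corrections (e.g.\ $\langle EL_0\phi,S\rangle$) and never the interaction between the $-N$ shift and the $p_4$-linear part of the zeroth-order term. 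If the approach is to work in this order it must explicitly carry the $-\cE N_4$ mismatch through the composition with $\cD$ and show it cancels against the contribution of the $-N$ shift. The paper avoids this bookkeeping trap by composing $\coS\circ\cD\circ\cA^3$ \emph{first} (formula~\eqref{strua}) and then expanding $H_{Free}$ around $\eta_\ps$, so that the $+\cE N_4$ term is manifest in the expansion \eqref{H_0b}. A similar (though less dangerous) incompleteness affects your treatment of the cubic term $\mu H^3_P$: you acknowledge but do not carry out the verification that $\cD\circ\cA^3$ preserves the functional form $\mu H^3_P(\eta_0+S^2_{0,hom};\phi+S^2_1)$, which in the paper requires the algebraic manipulations \eqref{f.l}--\eqref{f.l.4}.
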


\begin{remark}
\label{domas}
Fix an open neighbourhood $\cU\subset \cK^{1,0}$ of $\Tr_0$, and
assume that the various functions defining $\cA^3$ fulfill the
estimates of the form \eqref{newclass} with constants uniform over
$\cU_\rho$ with some positive $\rho$; then all constants in the
estimates of the form \eqref{newclass} fulfilled by the smoothing
function involved in \eqref{Hsca}-\eqref{VH4} are uniform on the
domain $\cU$.
\end{remark}

\begin{remark}
\label{id}
Since the identity map is of class $\ap^3_{0,0}$, the Hamiltonian in Darboux
coordinates has the form \eqref{Hsca} above.
\end{remark}

\begin{remark}
Let $R^i_j \in \cR^i_j$ and $\cA^k \in \ap^k_{0,0}$. Then $R^i_j \circ \cA^k$ is smoothing and furthermore
$$
R^i_j \circ \cA^k = R^i_j + R^{i+k}_0\ .
$$
\end{remark}
\begin{remark}
Every smoothing map $R \in \cR^i_0$ can be decomposed as
$$
R = R^i_{0,hom} + R^i_1 \ ,
$$ where $R^i_{0,hom} := R\vert_{\phi = 0}$ is in the class $\cR^i_0$
and is homogeneous of degree 0 in $\phi$, while $R^i_1 = R -
R^i_{0,hom} \in \cR^i_1$.
\end{remark}

The rest of the section will be devoted to the proof of Lemma \ref{scaled}
which follows closely the proof of Proposition 2 of \cite{bambusi2013}.

\proof First we remark that $\psi =\left(
\coS\circ\cD\circ\cA^3\right)(p,q,\phi)$ is given by
\begin{equation}
\label{strua}
\psi=\ex{(q^j+R^4_2+R^3_0)}\left(\eta_{\ps}+\mu\Pis\Piz\ex{s^j}\left(\phi+S^2_1+S^3_0\right)\right)\ ,
\end{equation}
where
\begin{equation}
\label{ps}
\ps:=\mu^2(p-N+R^3_0+ R^2_2)\ ,\quad s^j\in\cR^4_2+\cR^3_0\ .
\end{equation}

Substituting in $H_{Free}:=H_0+H_P$ and expanding in Taylor series up
to order three with center at $\eta_\ps$ one gets
\begin{align}
\label{f.1}
H_{Free}(\eta_{\ps})+\di
H_{Free}(\eta_{\ps})\mu\Pis\Piz\ex{s^j}\left(\phi+S^2_1+S^3_0\right)
\\
\label{f.2}
+\mu^2 H_0\left[\Pis\Piz\ex{s^j} \left(\phi+S^2_1+S^3_0\right)  \right]
\\
\label{f.3}
+\frac{\mu^2}{2} \di^2H_P(\eta_{\ps})\left[\Pis\Piz\ex{s^j}\left(\phi+S^2_1+S^3_0\right)
  \right]^{\otimes2}
\\
\label{f.4}
+\mu^3
H^3_P\left(\eta_{\ps};\Pis\Piz\ex{s^j}\left(\phi+S^2_1+S^3_0\right)\right)
\ .
\end{align}
We analyze now line by line this formula. We remark that the second
term of \eqref{f.1} is given by
\begin{align*}
    &\la -\Delta \eta_{\ps}, \Pis \Phi  \ra + \di
  H_{P}\left(\eta_{\ps}  \right) \left[  \Pis \Phi\right]    =
  \lambda^j(\ps) \la A_j \eta_{\ps}, \ \Pis \Phi \ra = 0
\end{align*}
where we used equation \eqref{gro} and the skew-orthogonality
of $\eta_{\ps}$ and $\Pis \Phi$.

In order to compute the first term of \eqref{f.1} denote
$f(p):=H_{Free}(\eta_p)$ and expand
\begin{align}
\label{H_0b}
f(\ps) = f(0) + \sum_{j=1}^4\frac{\d f}{\d p_j}(0) \ps_j
+\sum_{1 \leq j,k \leq 4}\frac{1}{2}\frac{\d^2 f}{\d p_j\d p_k}(0)\ps_k \ps_j+R^6_0
        \ ,
    \end{align}
where we just used that $\ps = \cO( \mu^2)$. Now, one has
$$ \frac{\partial f}{\partial p_j}(p) = \di H_{Free}(\etap)
\frac{\partial \etap}{\partial p_j} = \lambda^k(p) \la A_k \etap,
\frac{\partial \etap}{\partial p_j} \ra = \lambda^k(p) \delta_{j,k} =
\lambda^j(p) \ ,
$$
where $\lambda_j(p)$ are defined in \eqref{lambda}.
Thus it follows
$$
    \frac{\d f}{\d p_j}(0) = 0, \   j =1,2,3 \ ,  \qquad \frac{\d f}{\d p_4}(0) = - \cE(m) \ ,
    $$
    and
    \begin{equation}
    \frac{1}{2} \frac{\partial^2 f}{\partial p_j \partial p_k}(0) =
    \begin{cases}
    \frac{1}{2m}\delta_{j,k} , & 1 \leq j,k \leq 3 \\
    0 , & 1 \leq j \leq 3, \ k=4 \\
    - \frac{1}{2}\cE'(m) , & j=k=4
    \end{cases} \ .
    \end{equation}
    Thus the r.h.s. of \eqref{H_0b} can be written as
    \begin{align*}
\left[ \frac{\sum_{j=1}^{3}\ps_j^2}{2m}-\frac{1}{2}\cE'(m)\ps_4^2
  \right] -\cE(m) \ps_4 +R^6_0
\\
=\mu^4\left[ \frac{|\bp|^2}{2m}-\frac{ \bp\cdot \bN}{m}+
  \frac{|\bN|^2}{2m} - \frac{1}{2}\cE'(m)(p_4^2-2p_4N_4+N_4^2) + R^3_0  + R^2_2
  \right]
\\
-\mu^2\cE(m)(p_4-N_4+R^3_0+R^2_2) +R^6_0
\\
=\mu^4   \left[ \frac{|\bp|^2}{2m}+D(N,p)  \right]+
\mu^2\cE(m) N_4
+R^5_0+R^4_2 \ ,
    \end{align*}
where we defined
$$ D(N,p):=-\frac{\bp\cdot \bN}{m}+ \frac{|\bN|^2}{2m}- \frac{1}{2}\cE'(-2p_4N_4+N_4^2)\ ,
$$
and we omitted terms depending only on $p_4$ which is an integral of
motion for the complete Hamiltonian.

In order to analyze the remaining terms, remark first that
\begin{equation}
\label{scam}
\Pis\Piz\ex{s^j}\left(\phi+S^2_1+S^3_0\right)=\ex{s^j}
\left(\phi+S^2_1+S^3_0\right) \ ,
\end{equation}
where of course the smoothing maps in the two sides of the equality above
are different.  Using \eqref{scam}, the term \eqref{f.2} is easily
seen to be given by
$$
\mu^2 H_0(\phi)+R^4_2+R^5_1+R^8_0\ .
$$
Concerning \eqref{f.3}, it coincides with
\begin{align*}
\frac{\mu^2}{2} \di^2H_P(\ex{-s^j}\eta_{\ps})\left[
  \phi+S^2_1+S^3_0\right]^{\otimes 2}
\\
= \frac{\mu^2}{2} \di^2H_P(\ex{-s^j}\eta_{\ps})(\phi)+R^4_2+R^5_1+R^8_0\ .
\end{align*}
Remark that one has
\begin{equation}
\label{pss}
\ex{-s^j}\eta_{\ps}=\eta_{\ps}+S^3_0+S^4_2=\eta_0+S^2_0\ ,
\end{equation}
so that, taking into account the explicit form of $H_P$, \eqref{f.3} takes the form
\begin{equation}
\label{f.3.f}
\eqref{f.3} = \mu^2\di^2H_P(\eta_0)(\phi)+\frac{\mu^2}{2}\langle
S^2_0\phi;\phi\rangle+ R^4_2+R^5_1+R^8_0\ .
\end{equation}

\begin{remark}
\label{rem:HL}
One has
$$
H_0(\phi)+\frac{1}{2}\di^2H_P(\eta_0)[\phi, \phi]+\cE N_4=\frac{1}{2}\left\langle
EL_0\phi;\phi\right\rangle \equiv H_L(\phi)\ .
$$
\end{remark}

We come to \eqref{f.4}.

First we have that  $H^3_P(\eta; e^{s^j J A_j}\Phi) =
H^3_P(e^{-s^j J A_j}\eta; \Phi)$, which, using \eqref{scam} and \eqref{pss} gives
$$
H^3_P(\eta_\ps; e^{s^j J A_j}\Pis \Pipz(\phi + S^2_1 + S^3_0) = H^3_P( \eta_0 + S^2_0; \phi + S^2_1 + S^3_0) \ .
$$ Now write $S^2_0 = S^2_{0,hom} + S^2_1$ where $S^2_{0,hom}:=
S^2_0\vert_{\phi =0}$ is homogeneous of degree 0 in $\phi$. Exploiting
the definition \eqref{VH4} of $H^3_P$ one has
\begin{align}
H^3_P(\eta_0 + S^2_0; \phi + S_1^2+S_0^3)
\\
\label{f.l}
= H_P(\eta_0 + S^2_{0,hom} + \phi + S^2_1)
- H_P(\eta_0 + S^2_{0,hom}) - \di H_P(\eta_0 + S^2_{0,hom})[\phi + S^2_1]
\\
\label{f.l.1}
-
\frac{1}{2}\di^2 H_P(\eta_0 + S^2_{0,hom})[\phi + S^2_1]^{\otimes 2}
\\
\label{f.l.2}
+ H_P(\eta_0 + S^2_{0,hom}) - H_P(\eta_0 + S^2_{0,hom}+S^2_1)
\\
\label{f.l.3}
+ \di H_P(\eta_0 + S^2_{0,hom})[\phi + S^2_1] - \di H_P(\eta_0 + S^2_{0,hom}+S^2_1)[\phi + S^3_0 + S^2_1]
\\
\label{f.l.4}
\frac{1}{2}\di^2 H_P(\eta_0 + S^2_{0,hom})[\phi + S^2_1]^{\otimes 2} -
\frac{1}{2}\di^2 H_P(\eta_0 + S^2_{0,hom}+S^2_1 )[ S^3_0 + \phi + S^2_1]^{\otimes 2}
\ .
\end{align}
 Now we analyze each line separately. The lines \eqref{f.l}
and \eqref{f.l.1} form the definition of
$$H^3_P(\eta_0 + S^2_{0,hom}; \phi +
S^2_1) \ .
$$
The line \eqref{f.l.2} is a smoothing function in $\cR^2_1$.
The line \eqref{f.l.3} equals
$$
\la \grad H_P(\eta_0 + S^2_{0,hom})- \grad H_P(\eta_0 + S^2_{0,hom} + S^2_1) , \phi + S^2_1 \ra - \la \grad H_P(\eta_0 + S^2_{0,hom}+S^2_1),  S^2_1 + S^3_0 \ra = R^{2}_1 + R^{3}_0 \ .
$$
To analyze the line \eqref{f.l.4} we represent $\di^2H_P(\eta)$ by a
linear operator $W(\eta)$:
$$
\di^2H_P(\eta)(\phi,\phi)=\langle W(\eta)\phi;\phi\rangle\ ,
$$
where explicitly
$$
W(\eta) \Phi := -\beta'(|\eta|^2) \Phi - \beta''(|\eta|^2) |\eta|^2
{\rm Re} \Phi \ .
$$ By smoothness we have that \eqref{f.l.4} is given by
\begin{align*}
 \frac{1}{2} \la \left(W(\eta_0 + S^2_{0,hom})- W(\eta_0 + S^2_{0,hom}+S^2_1)\right) (\phi +
  S^2_1), (\phi + S^2_1)\ra
\\
- \la W(\eta_0 + S^2_{0,hom}+S^2_1) S^3_0, \phi + S^2_1 \ra -
  \frac{1}{2} \la W(\eta_0 + S^2_{0,hom}+S^2_1) S^3_0, S^3_0 \ra \\  \quad = \frac{1}{2} \la
  W^{2}_0 (\phi + S^2_1), (\phi + S^2_1)\ra + R^{3}_1 + R^{6}_0
  \ .
\end{align*}
Thus \eqref{f.4} is equal to
\begin{align*}
\mu^3 H^3_P\left(\eta_0 + S^2_{0,hom};\phi+S^2_1\right) +R^6_0+R^5_1
+\mu^3\left[\frac{1}{2}\langle W^2_0(\phi+S^2_1);\phi+S^2_1\rangle
  +R^3_1+R^6_0  \right]
\\
=\mu^3 H^3_P\left(\eta_0+S^2_{0,hom};\phi+S^2_1  \right) +\mu^3
\frac{1}{2}\langle W^2_0\phi;\phi\rangle +R^6_0+R^5_1\ .
\end{align*}

This concludes the computation of $H_{Free}$.

We come to the simpler computation of $H_V$. First remark that
\begin{equation}
\label{vtilde}
\left[\ex{-(q^j+R^4_2+R^3_0)}V\right](x)= \widetilde V_q(x)=
V_q(x)+W_q(x) ( R^4_2+R^3_0)\ ,
\end{equation}
where
$$ V_q(x):=V(x+q)\ ,\quad W_q(x):=\int_0^1 V'(x+q+\tau(R^4_2+R^3_0))\di\tau \ .
$$
Thus we have
\begin{align*}
H_V=\frac{\mu^4}{2} \langle V_q\eta_{\ps};\eta_{\ps}\rangle
+\frac{\mu^4}{2} \langle W_q\eta_{\ps};\eta_{\ps}\rangle  (
R^4_2+R^3_0)
\\
+\mu^4 \left\langle \widetilde V_q\eta_{\ps};
\mu\Pis\Piz\ex{s^j}\left(\phi+S^2_1+S^3_0\right) \right\rangle
\\
+\frac{\mu^6}{2}\left\langle \widetilde V_q
\Pis\Piz\ex{s^j}\left(\phi+S^2_1+S^3_0\right);\Pis\Piz\ex{s^j}\left(\phi+S^2_1+S^3_0\right)\right\rangle\ .
\end{align*}
Using \eqref{pss}, the first term is easily transformed into $\mu^4
V^{eff}_m(q)+R^6_0$. All the other terms are easily analyzed and give
rise to smoothing terms, except one term coming from the last line,
which is easily seen to produce a term of the form
$$
\frac{\mu^6}{2}\langle \widetilde V_q\phi;\phi\rangle=
\frac{\mu^6}{2}\langle V_q\phi;\phi\rangle+\frac{\mu^6}{2}\langle
W_q\phi;\phi\rangle (R^4_2+R^3_0)\ ;
$$ Including the last term in $\langle W^2_0\phi;\phi\rangle/2$ and
collecting all the results one gets the thesis. \qed

\section{Normal Form}
\label{section_normal_form}

From now on we will work with the Hamiltonian
\begin{equation}
\label{HD}
H_D:=\frac{H\circ\coS\circ\cD}{\mu^2}\ ,
\end{equation}
with $\coS$ given by \eqref{res1} and $\cD$ written in the rescaled
coordinates.

We are interested in eliminating recursively the coupling between
$\phi$ and the mechanical variables. Precisely we want to eliminate
the terms linear in $\phi$.

\begin{definition}
A function $Z(\mu,N, p, q, \phi)$ of class $ \ALS(\R\times \wcK,
\R)$,  will be said
to be in normal form at order $\fr $, $\fr \in \N$,  if the following holds:
\begin{align}
\di_\phi \left.\frac{\partial^r Z}{\partial \mu^r}
(\mu,N,p,q, \phi)\right|_{\substack{ \phi = 0 \\ \mu = 0}} = 0 \ ,\quad \forall r\leq \fr \ .
\end{align}
The derivatives with respect to $\phi$ have to be computed at constant
$N$, i.e. as if $N$ were independent of $\phi$.
\end{definition}
The main result of this section is the following theorem:
\begin{theorem}
\label{normal.form}
Fix an arbitrary $\fr \geq 3$ and an open neighbourhood $\U\subset
\cK^{1,0}$ of $\Tr_0$. Then, there exists a positive $\mu_{*\fr }$
s.t., provided $0\leq\mu<\mu_{*\fr }$, there exists a canonical almost
smoothing perturbation of the identity $\cT^{(\fr )}\in\ap^3_{0,0}$,
$\cT^{(\fr )}: \cU \to \cK^{1,0} $ such that
\begin{equation}
\label{HM}
H^{(\fr )}:=H_D\circ \cT^{(\fr )}
\end{equation}
 is
in normal form at order $\fr $. Furthermore, denoting
$$
(p',q',\phi')=\cT^{(\fr)}(p,q,\phi)\ ,
$$
there exists $C_1$ s.t. one has
\begin{equation}
\label{defor}
\sup_{(p,q,\phi)\in\cU}\left\| q-q'\right\|\leq
C_1\mu^3\ ,\quad\sup_{(p,q,\phi)\in\cU}\left\| p-p'\right\|\leq
C_1\mu^3\ ,\quad \sup_{(p,q,\phi)\in\cU}\norma{\phi'}_{\cV^{1,0}}\leq
(1+C_1\mu^3) \norma{\phi}_{\cV^{1,0}}\ .
\end{equation}
Finally the transformation $\cT^{(\fr )}$ is invertible on its range
and its inverse still belongs to $\ap^3_{0,0}$.
\end{theorem}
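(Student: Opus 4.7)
I would construct $\cT^{(\fr)}$ iteratively as a finite composition
\[
\cT^{(\fr)} := \cT_{k_0}\circ\cdots\circ\cT_{\fr}
\]
of time-$1$ flows of Hamiltonian vector fields generated by auxiliary smoothing Hamiltonians $\chi_k$, each chosen so as to eliminate the part of the current Hamiltonian which is linear in $\phi$ and of order $\mu^k$. Inspection of the decomposition \eqref{Hsca}--\eqref{rest} given by Lemma \ref{scaled} shows that the first linear-in-$\phi$ contributions sit inside $\mu H^3_P(\eta_0+S^2_{0,hom};\phi+S^2_1)$ and inside $R^3_1$, both of order $\mu^3$ at worst, so one can start the procedure at some $k_0\geq 3$. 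Since Lemma \ref{scaled} asserts that $H\circ\coS\circ\cD\circ\cA^3$ keeps the structural form \eqref{Hsca} for \emph{any} $\cA\in\ap^3_{0,0}$, after each step the transformed Hamiltonian remains in the same form and the iteration can proceed; in particular composing with a further Lie flow in $\ap^3_{0,0}$ does not spoil the picture.

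\textbf{Homological equation.} At the inductive step, having constructed $\cT_{k_0},\ldots,\cT_{k-1}$, I would isolate in the current Hamiltonian $H^{(k-1)}$ its order-$\mu^k$ linear-in-$\phi$ piece and write it as $Z_k=\mu^k\langle z_k(N,p,q);\phi\rangle$, with $z_k$ smooth in $(N,p,q)$ and of Schwartz class in $x$ (a consequence of the smoothing nature of every contribution appearing in \eqref{rest} after composition with the previous Lie transforms). I would then look for $\chi_k$ of the same shape, $\chi_k=\mu^k\langle b_k(N,p,q);\phi\rangle$. Using that the $\phi$-component of $X_{H_L}$ acts as $L_0$, the homological equation $\{H_L,\chi_k\}+Z_k=0$ reduces to the linear equation
\[
L_0^{\ast}\, b_k = z_k
\]
to be solved in $\cV$. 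This is where the spectral assumptions enter crucially: (H3)--(H5) say that the full generalized kernel of $L_0$ coincides with the tangent space to $\Tr_0$, that is the symplectic complement of $\cV$, and that there are no embedded eigenvalues or resonances. Hence $L_0^{\ast}|_{\cV}$ is invertible; applied to the Schwartz datum $z_k$ it yields a smoothing $b_k$, and Lemma \ref{l.6.1} then guarantees that the corresponding Lie flow $\cT_k$ belongs to $\ap^3_{0,0}$.

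\textbf{Iteration, estimates and main obstacle.} A standard Lie-series expansion gives $H^{(k)}=H^{(k-1)}\circ\cT_k=H^{(k-1)}-Z_k+R_k$, where $R_k$ contains either smoothing terms that can be absorbed into a new remainder of the form \eqref{rest} or linear-in-$\phi$ contributions of order $\mu^{k+1}$ or higher. Iterating from $k=k_0$ up to $k=\fr$ puts the Hamiltonian in normal form at order $\fr$ after finitely many steps. The estimates \eqref{defor} follow from Lemma \ref{l.6.1}: each $\chi_k$ is of order at least $\mu^3$, so its time-$1$ flow $\cT_k$ moves $(p,q)$ by $O(\mu^3)$ and dilates $\norma{\phi}_{\cV^{1,0}}$ by a factor $1+O(\mu^3)$; the finite composition of $\fr-k_0+1$ such flows preserves these bounds with a constant $C_1=C_1(\fr)$. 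Invertibility of each $\cT_k$ is immediate from Remark \ref{invers}, hence $\cT^{(\fr)}$ is invertible within $\ap^3_{0,0}$. The principal obstacle is precisely the solvability of the cohomological equation in the correct regularity class: because the continuous spectrum of $L_0$ extends to infinity one cannot invert $L_0$ uniformly in $\cV$, but the only datum one needs to invert against is the Schwartz-class function $z_k$, a scenario controlled by (H3)--(H5). A secondary, purely technical, difficulty is the careful bookkeeping of the classes $\cS^i_j,\cR^i_j$ across each Lie transform so that the remaining linear-in-$\phi$ part is genuinely raised in $\mu$-order at every step; this is exactly the purpose of the structural stability built into Lemma \ref{scaled} together with Lemma \ref{l.6.1}.
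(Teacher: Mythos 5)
Your proposal reproduces the paper's argument essentially step-by-step: an inductive Lie-transform construction, with each generator $\chi_r$ linear in $\phi$ and determined by the homological equation $L_0\chi^{(r)}=-\Psi^{(r)}$, and with the structure, estimates \eqref{defor}, and invertibility all supplied by Lemma \ref{l.6.1}, Remark \ref{pj}, and Remark \ref{invers}. One clarification worth making: the continuous spectrum $\pm\im[\cE,+\infty)$ of $L_0$ is bounded away from zero, so $L_0|_{\cV}$ \emph{is} boundedly invertible on $L^2$; the relevant point is rather that $L_0^{-1}$ does not gain regularity, so one relies on the remainder $\Psi^{(r)}$ already belonging to the smoothing class $\cS^r_0$ to conclude $\chi^{(r)}=-L_0^{-1}\Psi^{(r)}\in\cS^{r}_0$, which is exactly what keeps the Lie flows inside $\ap^3_{0,0}$.
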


To prove the theorem we proceed by eliminating the terms linear in
$\phi$ order by order (in $\mu$). To this end we will use the method
of Lie transform that we now recall.

Having fixed $r \geq 3$, consider a function $\chi_r\in \cR^r_1$,
homogeneous of degree 1 in $\phi$, which
therefore admits the representation
\begin{equation}
\label{chi.ham0}
\chi_{r}(\mu,N,p,q, \phi) = \langle E \chi^{(r)}(\mu,N,p,q), \phi \rangle
\ ,\quad \chi^{(r)}\in\cS^r_0\ .
\end{equation}
\begin{remark}
\label{spagiu}
The function $\chi^{(r)}$ takes values in
$\cV^\infty$, namley one has $\chi^{(r)}=\Pi_0\chi^{(r)}$.
\end{remark}
We are interested in the case where $\chi^{(r)}$ is homogeneous of
degree $r$ in $\mu$.

By Lemma \ref{l.6.1} the Hamiltonian vector field of $\chi_r$
generates a flow $\Phi^t_{\chi_r}\in\ap^r_{1,0}$.

\begin{definition}
\label{lie.tr}
The map $\Phi_{\chi_r}:=\left.\Phi^t_{\chi_r}\right|_{t=1}$ is called
the Lie transform generated by $\chi_r$.
\end{definition}

\begin{remark}
\label{rem:Tr.comp}
 Let  $3\leq r_1 <\cdots < r_n$ be a sequence of integers and let
 $\chi_{r_1},...,\chi_{r_n}$ be functions as above, then the map
$\cT:=\Phi_{\chi_{r_1}}\circ...\circ\Phi_{\chi_{r_n}}$ is an almost
 smoothing perturbation of the identity of class $\ap^{r_1}_{0,0}$. In
 particular one has that $H_D\circ \cT$ has the form \eqref{Hsca}.
\end{remark}
\begin{remark}
\label{rem.3}
Let $F\in\ALS(\R\times\wcK,\R)$, $F=F(\mu,N,p,q,\phi)$, be an almost
smooth function, and let $\chi_r$ be as above; then
$F\circ\Phi_{\chi_{r}}$ is also an almost smooth function, thus it can
be expanded in Taylor series in $\phi$ and in $\mu$ at any order.
\end{remark}
\begin{remark}
\label{evol}
Let $\chi_r\in \cR^r_1$ be as above, and let $F\in\ALS(\R\times
\wcK,\R)$, then
\begin{equation}
\label{evol1}
\mu^a F\circ\Phi_{\chi_r}= \mu^aF+\cO(\mu^{a+r})\ ,
\end{equation}
thus, if $F$ is in normal form, then $\mu^a F\circ\Phi_{\chi_r}$ is in
normal form at order $r+a-1$.
\end{remark}

\noindent{\it Proof of Theorem \ref{normal.form}}. We assume the
theorem true for $r-1$ and prove it for $r$. Assume
$H^{(r-1)}$ is in normal form at order $r-1$.  Thus it has
the form \eqref{Hsca} with $R_1^3$ which actually belongs to
$\cR^{r}_1$. In particular its part homogeneous of degree $1$ in
$\phi$ admits the representation
\begin{equation}
\label{r-1}
R^3_{1,hom}=\langle E\Psi(\mu,N,p,q);\phi\rangle\ ,\quad \Psi\in\cS^{r}_0\ .
\end{equation}

Consider now a function $\chi_r$ as above, and let $\Phi_{\chi_{r}}$
be the corresponding Lie transform. In order to determine $\chi_r$ we
impose that the part of $H^{(r-1)}\circ\Phi_{\chi_r}$ linear
in $\phi$ and homogeneous of order $r$ in $\mu$ vanishes. So first we
have to compute it. By Remark \ref{evol}, it is clear that the only
contributions to such a part come from $H_L\circ\Phi_{\chi_r}$ and
$R_{1,hom}^3\circ\Phi_{\chi_r}\equiv R_{1,hom}^r\circ\Phi_{\chi_r}$. One has
\begin{align}
\label{half}
H_L\circ\Phi_{\chi_r}=H_L+\left\langle EL_0\phi;
\chi^{(r)}\right\rangle +\left\langle EL_0 \phi;\frac{\partial
  \chi_r}{\partial N_j}JA_j\phi\right\rangle+\cO(\mu^{2r})
\\
\nonumber
=H_L+\langle \phi; EL_0\chi^{(r)}\rangle+\left\langle EL_0\phi;\frac{\partial
  \chi_r}{\partial N_j}JA_j\phi\right\rangle+\cO(\mu^{2r})\ ;
\end{align}
now, it is not difficult to see that
$$
\left\langle EL_0 \phi; JA_j\phi\right\rangle = \la W\phi; \phi \ra 
$$
where $W$ is a linear
operator of the form
$$
W[{\rm Re}\phi+\im{\rm Im}\phi]=V_1{\rm Re}\phi+V_2{\rm Im}\phi\ ,
$$ with different potentials $V_1, V_2$ of Schwartz class.
Thus the third term at the last line of \eqref{half} is not linear in $\phi$, so that the only term linear in $\phi$ and of order $r$ in
$\mu$ is $\langle \phi; E L_0\chi^{(r)}\rangle$.

Concerning $R_{1,hom}^3\circ\Phi_{\chi_r}$, one has
$$
R_{1,hom}^3\circ\Phi_{\chi_r}=R^3_{1,hom}+
\cO(\mu^{2r})=\langle E \Psi(\mu,N,p,q);
\phi\rangle+
\cO(\mu^{2r}) \ .
$$
Thus defining
$$
\Psi^{(r)}:=\mu^r\frac{1}{r!}\frac{d^r\Psi}{d\mu^r}(0)
$$ one has that the part of $H^{(r)}$ linear in $\phi$ and of order
$r$ in $\mu$ is
$$
\langle E L_0\chi^{(r)};\phi\rangle+\langle   E\Psi^{(r)};\phi\rangle\ ,
$$
so that the wanted $\chi^{(r)}$ has to fulfill
\begin{equation}
\label{defchi}
L_0\chi^{(r)}=-\Psi^{(r)}\ \Longrightarrow
\chi^{(r)}=-L_0^{-1}\Psi^{(r)}\ ,
\end{equation}
which is well defined since $L_0^{-1}:\cV^{s,r}\to \cV^{s,r}$
smoothly.

Finally we have to add the control of the size of the domain of
definition of $\cT^{(\fr )}$. To get it we proceed as follows: fix a
positive $\rho$ and consider the sequence of domains
$$
\cU\subset\cU_{\rho}\subset\cU_{2\rho}\subset...\subset\cU_{(\fr+1) \rho}\ ;
$$ then, by Lemma \ref{l.6.1} there exists a sequence $\mu_i$,
$i=1,...,\fr $ s.t., if $0\leq\mu<\mu_i$, then
$\Phi_{\chi_{i}}(\cU_{(\fr -i)\rho})\subset\cU_{(\fr-i+1)\rho}$ and
therefore $\cT^{(\fr)}:\cU\to\cU_{\rho(\fr+1)}$. Finally there exists
$\mu_{\fr +1}$ s.t., if $0\leq\mu<\mu_{\fr+1} $ then $\cD$ is well
defined in $\cU_{(\fr+1) \rho}$. Taking $\mu_{*\fr }:=\min{\mu_i}$ one
gets the thesis. \qed

\section{Estimates}\label{disphi}

First we prove an estimate on $\phi$ valid over long times
and then we use it to conclude the proof.  We take  initial data
$(p_0,q_0, \phi_0)$, fulfilling
\begin{equation}
\label{ini}
\norma{p_0}\leq K_0\ ,\quad \norma{\phi_0}_{H^1}\leq K_0 \mu
\end{equation}
(in the rescaled variables) and arbitrary $q_0$.

First we recall that a pair $(r,s)$ is called (Schr\"odinger)
admissible if
$$ \frac{2}{r}+ \frac{3}{s} = \frac{3}{2} \ , \quad 2 \leq s \leq 6,
\quad r \geq 2 \ .
$$
\begin{lemma}
\label{estiphi}
Fix $T_0>0$ and $ \fr \geq 3 $, assume that there exists $T>0$
s.t. the solution $(p,q,\phi)$ of the Hamilton equations of
the Hamiltonian $H^{(\fr )}\equiv H_D \circ \cT^{(\fr )}$ (cf. Theorem
\ref{normal.form}) fulfill the following estimates
\begin{align}
\label{ap.1}
\norma{\phi}_{L^r_t[0,T]W^{1,s}_x}\leq \mu M_1
\\
\label{ap.2}
\sup_{0\leq t\leq T}\norma {p(t)}\leq M_2\ ,
\end{align}
for any admissible pairs  $(r,s)$; then, provided $M_1$ and
$M_2$ are large enough (independently of $T$ and $\mu$), there exists $\mu_*$
independent of $T$, s.t., provided $0\leq\mu<\mu_*$ and $T<T_0/\mu^{\fr-3}$, one
has
\begin{equation}
\label{ap.3}
\norma{\phi}_{L^r_t[0,T]W^{1,s}_x}\leq \mu \frac{M_1}{2}\ 
\end{equation}
for any admissible pair $(r,s)$.
\end{lemma}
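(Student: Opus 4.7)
The plan is a bootstrap argument based on Strichartz estimates for the linearized dynamics. From the Hamilton equations of $H^{(\fr)}=H_D\circ\cT^{(\fr)}$ and the structure of $H_D$ given in Lemma \ref{scaled}, the equation satisfied by $\phi$ takes the form
$$
\dot\phi \;=\; L_0\phi \;+\; W(t)\phi \;+\; N(t,\phi) \;+\; F(t),
$$
where $W(t)$ collects the linear-in-$\phi$ perturbations coming from $\frac12\langle W^2_0\phi,\phi\rangle$ and $\frac{\mu^4}{2}\langle V_q\phi,\phi\rangle$ (a multiplication operator with Schwartz coefficients of size $\mu^2$, depending on $(p(t),q(t))$ through $\cT^{(\fr)}$); $N(t,\phi)$ is the at-least-quadratic-in-$\phi$ nonlinearity produced by $\mu H^3_P(\eta_0+S^2_{0,\mathrm{hom}},\phi+S^2_1)$ and by the quadratic part of the $R^2_2$ remainder; and $F(t)$ is a $\phi$-independent forcing produced by the residual linear-in-$\phi$ piece of $R^3_1$. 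The whole point of Theorem \ref{normal.form} is that, $H^{(\fr)}$ being in normal form at order $\fr$, this residual is $O(\mu^{\fr+1})$ uniformly on bounded sets of $(p,q)$, which is exactly what the bootstrap hypothesis \eqref{ap.2} guarantees.

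The second step is to apply the Strichartz estimates for the time-dependent linearized operator $L_0+W(t)$ established in \cite{beceanu,bambusi2013,perelman}. These use hypotheses (H3)--(H5) together with the Schwartz decay of the coefficients of $W(t)$, uniformly in $t$ by \eqref{ap.2}, and they hold uniformly on the long interval $[0,T]$ with $T\le T_0\mu^{-(\fr-3)}$. Combined with the Duhamel formula, they yield, for any admissible pair $(r,s)$ and a suitable dual admissible pair $(r',s')$,
$$
\|\phi\|_{L^r_t([0,T])W^{1,s}_x}\;\le\; C\,\bigl(\|\phi_0\|_{H^1}+\|F\|_{L^{r'}_tW^{1,s'}_x}+\|N(t,\phi)\|_{L^{r'}_tW^{1,s'}_x}\bigr),
$$
with $C$ independent of $T$ and of $\mu$.

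The third step is to estimate the three terms on the right under the a priori assumption \eqref{ap.1}. The initial data contribute $\|\phi_0\|_{H^1}\le K_0\mu$. The forcing, with $\|F(t)\|_{W^{1,6/5}}\lesssim\mu^{\fr+1}$ pointwise in $t$, gives
$$
\|F\|_{L^{2}_tW^{1,6/5}_x}\;\lesssim\;\mu^{\fr+1}\,T^{1/2}\;\lesssim\;\mu^{(\fr+5)/2}\;\lesssim\;\mu^{4}
$$
for $\fr\ge 3$. The nonlinearity, using the subcriticality $p<2$ of $\beta$ in \eqref{num}, the Schwartz decay of the factors $S^2_{0,\mathrm{hom}},S^2_1$ multiplying $\phi$ inside $H^3_P$, and standard cubic/subcritical Hölder--Sobolev estimates, is bounded by
$$
\|N(t,\phi)\|_{L^{r'}_tW^{1,s'}_x}\;\lesssim\;\mu\,\|\phi\|_{L^r_tW^{1,s}_x}^{2}+\mu\,\|\phi\|_{L^r_tW^{1,s}_x}^{1+p}\;\lesssim\;\mu\,(\mu M_1)^{2}.
$$
Choosing first $M_1\ge 4CK_0$ to absorb the initial-data contribution and then $\mu_*$ so small that the forcing and nonlinear contributions are each at most $\mu M_1/8$, one obtains $\|\phi\|_{L^r_tW^{1,s}_x[0,T]}\le \mu M_1/2$, which closes the bootstrap.

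The main obstacle is Step 2: one needs time-dependent Strichartz estimates for $L_0+W(t)$ uniformly on the growing interval $[0,T_0\mu^{-(\fr-3)}]$, where $W(t)$ is driven by the (only bootstrapped) mechanical orbit $(p(t),q(t))$. I would not reprove this here but cite \cite{beceanu,bambusi2013,perelman}, which treat exactly this situation under hypotheses (H3)--(H5). The remaining subtleties -- namely the projection onto $\Pi_0\cV^{s,k}$, the Schwartz decay of the inhomogeneous coefficients, and the dual-Strichartz estimate of the sub-quintic nonlinearity -- are routine once the linear estimate is in hand.
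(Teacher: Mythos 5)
Your overall strategy (Duhamel plus Strichartz bootstrap, absorb initial data, treat the normal-form residual as a pure forcing term) is the right one and is what the paper does, but your decomposition of the equation for $\phi$ is wrong in a way that makes the iteration fail. In the paper (Lemma~\ref{eqphi}), the Hamilton equation for $\phi$ contains the term
\begin{equation*}
\Big(\tfrac{\partial H^{(\fr)}}{\partial N_j}\Big)\,\Pipz J A_j\,\phi \;=\; w^j(t)\,\Pipz JA_j\,\phi,\qquad |w^j(t)|\sleq\mu^2,
\end{equation*}
which arises because $H^{(\fr)}$ depends on $\phi$ also through $N=\cP(\phi)$. For $j=1,2,3$ the operator $A_j=\im\partial_{x_j}$ is a \emph{first-order differential} operator, not a multiplication operator with Schwartz coefficient; there is no spatial localization available, so $\norm{w^jA_j\phi}_{W^{1,6/5}}$ cannot be controlled by $\norm{\phi}_{W^{1,6}}$ (you would need $\norm{\phi}_{W^{2,s}}$, which is not in the bootstrap). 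You have omitted this term entirely. Your $W(t)$ instead collects $W^2_0\phi$ and $\mu^4V_q\phi$ inside the reference propagator; the paper treats precisely those terms the other way around, leaving them in the Duhamel integrand where they are harmless (Lemma~\ref{lem:W.est} uses H\"older against the Schwartz factor $V_q$ or the smoothing kernel $W^2_0$). The reference propagator in the paper is $\cU(t,s)$ for $L(t)=L_0+w^j(t)\Pipz JA_j$, and it is \emph{this} time-dependent operator, with a small first-order perturbation, for which \cite{beceanu,bambusi2013,perelman} prove the Strichartz estimates (Lemma~\ref{pere}). Citing those references for $L_0$ plus a zeroth-order multiplication perturbation is not what they establish, and in any case it would not help: the troublesome term would still be sitting in the inhomogeneity where it loses a derivative.

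Two smaller inaccuracies: the normal-form residual is $S^{\fr}_{0,hom}\in\cS^{\fr}_0$, i.e.\ genuinely $\phi$-independent and of size $\mu^{\fr}$ (not $\mu^{\fr+1}$); the paper estimates its Duhamel contribution in $L^1_t H^1_x$ as $T\mu^{\fr}\sleq\mu^3$, which is what forces the restriction $T<T_0/\mu^{\fr-3}$ and shows why one must take $\fr\geq 3$. Your Hölder-in-time variant happens to give the same conclusion for $\fr\ge3$, but it obscures this threshold. Once you restore the transport term $w^j(t)\Pipz JA_j\phi$ to the reference propagator and move $W^2_0\phi$, $\mu^4V_q\phi$, $S^2_{1,hom}$ into the Duhamel integrand (estimated as in Lemma~\ref{lem:W.est}), the rest of your argument lines up with the paper's.
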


First we fix the domain $\cU$ of definition of the Hamiltonian
$H^{(\fr)}$ (cf. eq. \eqref{HM}), which is also the domain over which
the constants involved in the estimates of the smoothing functions
present in \eqref{Hsca}-\eqref{VH4} are uniform. So we define
\begin{equation}
\label{defdomin}
\cU:=\left\{(p,q,\phi)\in\cK^{1,0}\ :\ \norma{p}\leq 2M_2\ ,\quad
\norma{\phi}_{\cV^{1,0}}\leq M_1\right\}\ ,
\end{equation}
so that all the constants involved in the estimates of the smoothing
functions in \eqref{Hsca}-\eqref{VH4} will depend on $M_1,M_2$ but not
on $\mu$.

In this section all the non written constants will depend on $M_1,M_2$ but not
on $\mu$. Sometimes we will make the constants quite explicit in order
to make things clearer.

As a first step we write the equation for $\phi$.
\begin{remark} Denote $X_P:= J \nabla H_P$. Then
\begin{equation}
\label{X^2_P.def}
X^2_P(\eta; \phi) := X_P(\eta + \phi) - \left( X_P(\eta) + d X_P(\eta)
\phi \right)
\end{equation}
is the Hamiltonian vector field of $H_P^3(\eta; \phi)$, i.e. $X^2_P(\eta; \phi) := J \nabla H^3_P(\eta; \phi)$. This can be seen by writing explicitly the definition of Hamiltonian vector field.
\end{remark}
\begin{lemma}
\label{eqphi}
Define $G(\mu,N,p,q,\phi):=\phi+S^2_1(\mu,N,p,q,\phi)$,
where $S^2_1$ is the function at second argument of $H_P^3$ in
\eqref{rest}. Then the Hamilton equation of $H^{(\fr )}$ for
$\phi$ has the form
\begin{align}
\label{p.1}
\dot \phi&= L_0\phi+\left( \frac{\partial H^{(\fr )}}{\partial
  N_j}\right)\Pipz JA_j \phi+\mu^4\Pipz V_q\phi
+W^2_0\phi+S^2_{1,hom} \\
\label{p.2}
&+\mu J[\di
  G]^* EX_P^2(\eta_0+S^2_{0,hom};G)+S^2_2+\frac{1}{2} \langle
\left(J\nabla_\phi W^2_0\right) \phi;\phi\rangle
\\
\label{p.3}
&+S^{\fr }_{0,hom}\ ,
\end{align}
where, as in Lemma \ref{scaled}, we denoted by $S^2_{1,hom} $ a
quantity which is of class  $\cS^2_{1} $ and is homogeneous of degree
$1$ in $\phi$ and we denoted by $\langle\left(\nabla_\phi
W^2_0\right)\phi;\phi\rangle$ the function defined by
\begin{equation}
\label{nablaw2}
\left\langle
\langle\left(\nabla_\phi
W^2_0\right)\phi;\phi\rangle ;h  \right\rangle=
\langle\left(\di_\phi
W^2_0h\right)\phi;\phi\rangle\ ,\quad \forall h\in\cV^{\infty}\ .
\end{equation}
\end{lemma}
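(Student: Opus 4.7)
The starting point is that $\cT^{(\fr)}\in\ap^3_{0,0}$ by Theorem \ref{normal.form}, so Lemma \ref{scaled} applies directly to $H^{(\fr)}=H_D\circ\cT^{(\fr)}$ and yields the structural decomposition \eqref{Hsca}--\eqref{VH4}. The normal form property gives the extra input that, at the level of the $\phi$-linear piece, the term $R_{1}^3$ has been cleaned up to order $\mu^{\fr+1}$: concretely, the representation \eqref{r-1} holds with $\Psi\in\cS^{\fr+1}_0$, so that its gradient contribution (being $\phi$-independent) will end up in the claimed $\cS^{\fr}_{0,hom}$ class on line \eqref{p.3}. Hamilton's equation for the continuous component is $\dot\phi=\Pipz J\nabla_{\bar\phi}H^{(\fr)}$, and the plan is simply to compute this gradient term-by-term, remembering that $N=\cP(\phi)$ so that $\nabla_\phi N_j=A_j\phi$ by the explicit formulas \eqref{momenta}--\eqref{mass1}.

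The contributions assemble as follows. From $H_L=\frac12\langle EL_0\phi;\phi\rangle$, the symmetry of $EL_0$ gives $\nabla_{\bar\phi}H_L=EL_0\phi$, and applying $\Pipz J$ produces the $L_0\phi$ term on line \eqref{p.1} (the projection being harmless since $\phi\in\Pipz\cV$). The mechanical Hamiltonian $\mu^2\fh_m(p,q)$ depends only on $(p,q)$ and contributes nothing. For $\frac{\mu^4}{2}\langle V_q\phi;\phi\rangle$ one uses that $V_q$ is $\phi$-independent and self-adjoint, yielding $\mu^4\Pipz V_q\phi$. The term $\frac12\langle W_0^2\phi;\phi\rangle$ requires the product rule, since the coefficients of $W_0^2$ themselves belong to $\cS^2_0$ and hence may depend on $\phi$: one piece is $W_0^2\phi$ (goes to line \eqref{p.1}), the other is $\frac12\langle(J\nabla_\phi W_0^2)\phi;\phi\rangle$ on line \eqref{p.2}, with the notation \eqref{nablaw2}.

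The central calculation is differentiating $\mu H^3_P(\eta_0+S^2_{0,hom};\phi+S^2_1)$. Setting $G=\phi+S^2_1=G(\mu,N,p,q,\phi)$ and using the definition $X^2_P=J\nabla H^3_P$ with respect to the second argument, the chain rule gives $\nabla_{\bar\phi}[H^3_P(\eta_0+S^2_{0,hom};G)]=[\di_\phi G]^*EX_P^2(\eta_0+S^2_{0,hom};G)$, and premultiplying by $\Pipz J$ produces the $\mu J[\di G]^*EX_P^2$ term on line \eqref{p.2}. The dependence of $\eta_0+S^2_{0,hom}$ on $(p,q,N)$ is harmless because that map is homogeneous of degree zero in $\phi$; any residual from the implicit $\phi$-dependence through $N$ is of class $\cS^2_2$ and collected into the $S^2_2$ term. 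The smoothing remainders $R^4_{0,hom}+R^3_1+R^2_2$ produce gradient contributions that aggregate into $S^2_{1,hom}$, $S^2_2$ and (thanks to the normal-form cancellation) $S^{\fr}_{0,hom}$; all purely $N$-dependent coefficients (from $\mu^2 D(N,p)$, from the $N$-dependence of $W_0^2$, $R^4_{0,hom}$, $S^2_{0,hom}$, $S^2_1$, etc.) contribute via $\nabla_\phi N_j=A_j\phi$ a term of the shape $(\partial_{\!N_j}\star)\,\Pipz JA_j\phi$; summing them and invoking that $\partial H^{(\fr)}/\partial N_j$ collects exactly these coefficients yields the single term $(\partial H^{(\fr)}/\partial N_j)\Pipz JA_j\phi$ on line \eqref{p.1}.

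The main obstacle is bookkeeping rather than substance: one must track the chain rule through $G=\phi+S^2_1$, correctly identify the normal-form residual order of the $\phi$-linear piece of $R^3_1$ (so that it sits in $\cS^{\fr}_{0,hom}$), and verify that all other derivatives land in the exact smoothing classes announced, in particular that the $\phi$-quadratic leftovers from $R^2_2$ and from the implicit $N$-dependence give $S^2_2$ rather than something larger. Assembling these pieces as described produces the decomposition \eqref{p.1}--\eqref{p.3}.
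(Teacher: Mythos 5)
Your proposal is correct and follows essentially the same route the paper takes: write $\dot\phi = \Pipz J\nabla_{\bar\phi}H^{(\fr)}$, then differentiate the decomposition \eqref{Hsca}--\eqref{rest} term by term, the only nontrivial step being the chain rule through $G=\phi+S^2_1$ for $\mu H^3_P(\eta_0+S^2_{0,hom};G)$, which you compute exactly as in the paper ($\nabla_\phi(H^3_P\circ G)=[\di G]^*\nabla_\phi H^3_P = [\di G]^*E\,X_P^2$). One small bookkeeping slip: you first assert that the implicit $\phi$-dependence of the \emph{first} argument $\eta_0+S^2_{0,hom}$ through $N$ produces a contribution ``of class $\cS^2_2$'', and only two sentences later correctly fold all such $N$-derivatives into the single coefficient $\partial H^{(\fr)}/\partial N_j$; the paper takes the latter view throughout, remarking that since $S^2_{0,hom}$ is $\phi$-independent at fixed $N$, the first-slot gradient of $H^3_P$ enters \emph{only} through $\partial H^{(\fr)}/\partial N_j$ — you should not count it twice. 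Your appeal to the normal-form property to place the purely $\phi$-independent residual in $\cS^{\fr}_{0,hom}$ (indeed one gets $\Psi\in\cS^{\fr+1}_0\subset\cS^{\fr}_0$) is fine, if not sharp.
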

\begin{proof} The only nontrivial term to be computed is the vector field of
$H^3_P(\eta_0+S^2_{0hom};G)$. To compute it just remark that, at fixed
$\eta$, one has
$$
J\nabla_{\phi}(H^3_P\circ G)(\eta,\phi)=J\di G^*(\nabla_\phi
H^3_P)(\eta,G(\phi))=J\di G^*EJ(\nabla_\phi
H^3_P)(\eta,G(\phi))=J \di G^*E X^2_P(\eta;G(\phi))\ ,
$$ and that, since $S^2_{0,hom}$ is independent of $\phi$, the
gradient of $H^3_P$ with respect to the first argument enters in the
equations only through $\frac{\partial H^{(\fr)}}{\partial N_j}$. Then
the result immediately follows.
\end{proof}

To estimate the solution of \eqref{p.1}--\eqref{p.3} consider first the time
dependent linear operator
$$
L(t): =  L_0 +  w^j(t) \Pipz J A_j  \ ,
$$
where
$ w^j(t) = \frac{\partial H^{(\fr )}}{\partial N_j} (p(t), q(t), \phi(t))$.
\begin{remark}
\label{estiw}
Exploiting the inductive assumptions \eqref{ap.1}--\eqref{ap.2}  and computing the explicit form of $w^j$, one has that
  \begin{equation}
\label{estiw1}
\sup_{t\in[0,T]}|w^j(t)|\sleq \mu^2\ .
\end{equation}
\end{remark}

Denote by $\cU(t,s)$ the evolution operator of $L(t)$. The following
lemma was proved in \cite{beceanu, bambusi2013,perelman}:
\begin{lemma}
\label{pere}
Assume \eqref{estiw1}. There exists $C_0$ independent of $M_1,M_2$,
and 
$\mu_*$ (dependent on $M_1,M_2$)  s.t., provided
$0\leq \mu<\mu_*$, the following Strichartz estimates hold
\begin{align}
\label{stric.est}
&\norm{\cU(t,0)  \phi }_{L_t^r W^{1,s}_x} \leq C_0 \norm{\phi}_{H^1}
\ , \\
\label{stric.est.2}
&\norm{ \int_0^t \cU(t,\tau) F(\tau) d\tau }_{L_t^{r} W^{1,s}_x} \leq C_0
\norm{F}_{L_t^{\wtr'} W^{1,\wts'}_x} \ ,
\end{align}
where $(r,s)$ and $(\wtr, \wts)$ are admissible pairs and
$(\wtr',\wts')$ are the exponents dual to $(\wtr,\wts)$.
\end{lemma}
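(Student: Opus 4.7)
The plan is to derive \eqref{stric.est}--\eqref{stric.est.2} by perturbing off the corresponding estimates for the autonomous semigroup $e^{tL_0}$ on the continuous subspace $\cV^{1,0}=\Pipz H^1$, exploiting the smallness $|w^j(t)|\sleq\mu^2$.

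First I would record the Strichartz estimates for the autonomous semigroup $e^{tL_0}$. Under (H3)--(H5), the continuous spectrum of $L_0$ on $\cV$ is $\pm\im[\cE,+\infty)$ with no threshold resonance and no embedded eigenvalues; the classical dispersive estimate $\norma{e^{tL_0}\Pipz}_{L^1\to L^\infty}\sleq|t|^{-3/2}$ then follows from the spectral analysis underlying \cite{beceanu, bambusi2013, perelman}, and Keel--Tao interpolation, together with commutation with $(1-\Delta)^{1/2}$ (which is equivalent to an $L_0$-built Sobolev norm on $\cV$ by assumption (H3)), yields \eqref{stric.est}--\eqref{stric.est.2} with $L_0$ in place of $L(t)$.

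Next I would absorb the perturbation in two steps. Decomposing $\Pipz JA_j=JA_j+K_j$ with $K_j:=(\Pipz-\id)JA_j$, the operator $K_j$ is finite-rank and smoothing (the derivative is transferred by integration by parts onto the Schwartz functions spanning the generalized kernel of $L_0$), so the $w^j K_j$ contribution in Duhamel's formula is absorbed at once by the inhomogeneous Strichartz bound for $L_0$. The delicate term $w^j(t)JA_j$, which for $j=1,2,3$ is the transport $w^j\partial_j$, is removed, following Perelman \cite{perelman} and \cite[\S 4]{bambusi2013}, by the gauge change $\widetilde\phi(t):=e^{-W^j(t)JA_j}\phi(t)$ with $\dot W^j=w^j$. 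This is an isometry on $W^{1,s}_x$ (translation in space for $j\leq 3$, phase rotation for $j=4$), and after conjugation one has $\partial_t\widetilde\phi=L_0\widetilde\phi+\widetilde R(t)\widetilde\phi$, where $\widetilde R(t):=[e^{-W^j JA_j},L_0]\,e^{W^j JA_j}$ is bounded and smoothing with operator norm $\sleq|W(t)|\sleq\mu^2|t|$, since $L_0$ minus its translation/phase-invariant part is multiplication by Schwartz functions of $b_\cE$.

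The Strichartz bound for $\widetilde\phi$ then follows by Duhamel against $e^{tL_0}$: on sub-intervals $I_k$ of unit length the smoothing perturbation satisfies $\norma{\widetilde R\widetilde\phi}_{L^1_{t\in I_k}H^1_x}\sleq\mu^2\norma{\widetilde\phi}_{L^\infty_t H^1}$, which can be absorbed into the l.h.s.\ of Strichartz for $\mu$ small, producing on each sub-interval a multiplicative factor $1+O(\mu^2)$. Chaining the $O(\mu^{-(\fr-3)})$ sub-interval estimates and undoing the gauge transfers the bound to the original propagator $\cU(t,0)$. The main obstacle is the uniformity in $T\sim\mu^{-(\fr-3)}\to\infty$: a naive Gronwall iteration would produce a constant growing like $C^T$ with $C>1$, but the parametrix construction of \cite{perelman,bambusi2013} yields a per-sub-interval factor of the form $1+O(\mu^2)$ instead of a fixed $C>1$, so the product stays uniformly bounded as $\mu\to 0$, giving $C_0$ independent of $M_1,M_2$ as required.
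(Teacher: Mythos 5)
The central difficulty here is obtaining a Strichartz constant $C_0$ uniform in $T$ on the time intervals $[0,T]$ with $T$ up to $T_0/\mu^{\fr-3}$, and your argument does not actually achieve this. Suppose one grants your best-case claim: that on each unit sub-interval the propagator gains a multiplicative factor $1+O(\mu^2)$. Chaining $n=T/1\sim \mu^{-(\fr-3)}$ such intervals gives
$$
\bigl(1+O(\mu^2)\bigr)^{n}\ \approx\ \exp\!\bigl(O(\mu^{5-\fr})\bigr),
$$
which diverges as $\mu\to 0$ as soon as $\fr>5$. Since the theorem must hold for arbitrary $\fr$, the iterative/Gronwall mechanism you invoke cannot supply the uniform $C_0$ the lemma requires. (Note also that the factor per sub-interval produced by the Strichartz inequality is really $C_0+O(\mu^2)$ with $C_0>1$, not $1+O(\mu^2)$, unless one invokes an $H^1$-equivalent conserved quadratic form for $e^{tL_0}$ on $\cV$; even granting that, the exponential accumulation above remains fatal.)

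A second inaccuracy is the statement that the conjugated perturbation $\widetilde R(t)=[e^{-W^jJA_j},L_0]\,e^{W^jJA_j}$ has operator norm $\lesssim |W(t)|\lesssim\mu^2|t|$. The commutator is multiplication by differences of translated Schwartz functions; its size is $\lesssim\min(|W(t)|,1)$, saturating at $O(1)$ once $|W(t)|\gtrsim 1$ (which happens by $t\sim\mu^{-2}$). So on unit intervals far out in time the error is not $O(\mu^2)$ unless you re-center the gauge on each sub-interval -- in which case you are back to the chaining problem above. You hand off the resolution of this issue to "the parametrix construction of \cite{perelman,bambusi2013}", but the whole content of those references (and of Beceanu's Wiener-algebra method in \cite{beceanu}) is precisely a \emph{non-iterative} argument -- inversion of a Volterra-type operator globally in time, or construction of wave-operator-type parametrices -- that bypasses any chaining and thereby gives uniform-in-$T$ bounds. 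That construction is the proof; it cannot be summarized as "re-gauge and Gronwall with a small factor." The paper, accordingly, does not prove this lemma: it cites these three references, and is correct to treat the statement as a substantial external input rather than as a routine perturbative consequence of autonomous Strichartz estimates.
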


In order to prove Lemma \ref{estiphi} we will make use of the
following Duhamel formula
\begin{align}
\label{phi.1}
\phi(t) = & \cU(t,0)\phi_0
\\
\label{phi.11}
&+ \int_0^t \cU(t,\tau) [\mu^4\Pipz
  V_{q(\tau)}\phi(\tau)+W^2_0\phi(\tau)+S^2_{1,hom}(\tau)]\di \tau \\
\label{phi.2}
 &+\int_0^t \cU(t,\tau)\left[\mu J[\di
    G]^*EX_P^2(\eta_0+S^2_{0,hom};G)+S^2_2+\frac{1}{2} \langle
  \left( J\nabla_\phi W^2_0\right) \phi;\phi\rangle\right]\di \tau  \\
\label{phi.3}
&+ \int_0^t \cU(t,\tau)S^\fr _{0,hom}(\tau)\di \tau \ .
\end{align}

We estimate term by term the argument of the different integrals.

\begin{lemma}
\label{lem:W.est}
One has
\begin{equation}
\label{estilin}
\norma{ \mu^4\Pipz
  V_{q}\phi+W^2_0\phi+S^2_1}_{W^{1,6/5}_x}\sleq \mu^2
\norma\phi_{W^{1,6}_x} \ .
\end{equation}
\end{lemma}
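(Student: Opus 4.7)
The three summands are of quite different nature, so I would estimate them one by one, all three arguments reducing to Hölder's inequality on $\R^3$ with the admissible pair $(3/2, 6)$, for which
$$
\frac{1}{6/5} = \frac{1}{3/2} + \frac{1}{6}, \qquad \text{i.e.}\quad L^{3/2}\cdot L^{6} \hookrightarrow L^{6/5}.
$$
The same splitting applies also after one derivative by the product rule, so that multiplication by any Schwartz function $g$ satisfies $\|g\phi\|_{W^{1,6/5}}\leq C_g \|\phi\|_{W^{1,6}}$.

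\emph{First term.} Since $V$ is Schwartz, $V_q$ is Schwartz with bounds uniform in $q$ (translation acts isometrically on every $L^p$ and $W^{s,p}$). Hence $\|V_q\phi\|_{W^{1,6/5}}\sleq \|\phi\|_{W^{1,6}}$. The projector $\Pipz=\id+(\Pipz-\id)$ is bounded on $W^{1,6/5}$, as its correction is smoothing (being a finite sum of rank–one projections onto Schwartz vectors, cf.\ Remark \ref{pi}). As $\mu^{4}\leq \mu^{2}$ for $\mu$ small, this gives a contribution $\sleq \mu^{2}\|\phi\|_{W^{1,6}}$.

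\emph{Second term.} By the structure of $W^2_0$ recalled in Lemma \ref{scaled}, one has $W^2_0\phi = S^2_0\,\mathrm{Re}\,\phi + S^2_0\,\mathrm{Im}\,\phi$ with multipliers $S^2_0$ of class $\cS^2_0$, i.e.\ independent of $\phi$. By the basic estimate \eqref{newclass} these multipliers satisfy $\|S^2_0\|_{\cV^{s,k}}\sleq \mu^2$ for every $s,k\geq 0$, uniformly on the domain $\cU$ fixed in \eqref{defdomin}; in particular they lie in $L^{3/2}\cap W^{1,3/2}$ with a bound $\sleq \mu^{2}$. Applying Hölder again yields the required $\sleq \mu^{2}\|\phi\|_{W^{1,6}}$.

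\emph{Third term.} The function $S^2_1$ belongs to the class $\cS^{2}_{1}$, so by \eqref{newclass},
$$
\|S^2_1(\mu,N,p,q,\phi)\|_{\cV^{s_2,k_2}}\sleq \mu^{2}\,\|\phi\|_{\cV^{-s_1,-k_1}}
$$
for any $s_1,k_1,s_2,k_2\geq 0$. I would now use Sobolev embedding with weights: choosing $s_2=1$ and $k_2$ large enough one has $\cV^{1,k_2}\hookrightarrow W^{1,6/5}$ (since $\langle x\rangle^{-k_2}\in L^{3}$ for $k_2>1$, the embedding follows from Hölder $L^{2}\cdot L^{3}\hookrightarrow L^{6/5}$ after one derivative); dually, $W^{1,6}\hookrightarrow \cV^{-s_1,-k_1}$ for any $s_1,k_1\geq 0$. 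Combining, this term is bounded by $\mu^{2}\|\phi\|_{W^{1,6}}$ as required.

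Summing the three estimates yields \eqref{estilin}. There is no real obstacle: the whole statement is the triangle inequality applied to three Hölder-type bounds, the only point to be careful with being that the constants in \eqref{newclass} are uniform on the neighbourhood \eqref{defdomin}, which is fixed once and for all and depends only on $M_1,M_2$ and not on $\mu$.
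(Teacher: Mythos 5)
Your proposal is correct and takes essentially the same route as the paper: Leibniz plus Hölder with the pairing $L^{3/2}\cdot L^{6}\hookrightarrow L^{6/5}$ for the first two terms (the paper displays this only for the $V_q$ term and says the $W^2_0$ term is ``estimated in the same way''), and the smoothing-class inequality \eqref{newclass} combined with weighted Sobolev/Hölder embeddings for the $S^2_1$ term (the paper dismisses this as a ``trivial consequence of the definition of smoothing map''). The only slip is cosmetic: $W^{1,6}\hookrightarrow\cV^{-s_1,-k_1}$ does not hold for \emph{all} $s_1,k_1\ge 0$ (e.g.\ $k_1=0$), but since the class definition only requires the existence of \emph{some} $s_1,k_1$ realizing \eqref{newclass}, choosing $k_1>1$ is enough and the argument goes through.
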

\proof Consider the first term. By Leibniz rule and H\"older
inequality, one has
\begin{equation}
\label{hol}
\norma{
  V_{q}\phi}_{W^{1,6/5}_x}\sleq \norma{
  V_{q} }_{W^{1,3/2}_x}\norma{\phi}_{L^{6}_x}+ \norma{
  V_{q} }_{L^{3/2}_x}\norma{\phi}_{W^{1,6}_x}\ ,
\end{equation}
which gives the estimate of such a term. The second term is estimated
in the same way, while the third one is a trivial consequence of the
definition of smoothing map. \qed
\begin{remark}
\label{w2est}
By recalling that $W_0^2$ multiplies the real and the imaginary parts
of $\phi$ by a smoothing function, one has
\begin{equation}
\label{w2esti.1}
\norma{ \langle \left(J\nabla_\phi W^2_0\right)
  \phi;\phi\rangle}_{W^{1,6/5}_x}\sleq \mu^2 \norma{\phi}_{L_x^2}
\norma{\phi}_{W^{1,6}_x}\sleq \mu^3\norma{\phi}_{W_x^{1,6}}\ .
\end{equation}
\end{remark}

\begin{lemma}
\label{lem:W.est2}
One has
\begin{equation}
\label{estinon}
\norma{ \mu J[\di
    G]^*EX_P^2(\eta_0+S^2_{0,hom};G)+S^2_2}_{W^{1,6/5}_x}\sleq \mu
\norma{\phi}_{W^{1,6}_x} \norma{\phi}_{H^1_x} \left( 1+
\norma{\phi}_{H^1_x}\right)\sleq\mu^2 \norma{\phi}_{W^{1,6}_x} \ .
\end{equation}
\end{lemma}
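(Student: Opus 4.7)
The plan is to split the left-hand side into the genuinely nonlinear piece $\mu J[\di G]^*E\,X^2_P(\eta;G)$, with $\eta := \eta_0+S^2_{0,hom}$, plus the smoothing quadratic remainder $S^2_2$, and to treat each by Taylor expansion followed by H\"older and Sobolev embedding. Since $G = \phi + S^2_1$, one has $\di G = \mathrm{Id} + \di S^2_1$, hence $[\di G]^* = \mathrm{Id}$ plus a smoothing operator of class $\cS^2_1$; absorbing the latter correction into $S^2_2$, the main term reduces to a scalar multiple of $\mu\,X^2_P(\eta;G)$.

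Next I would expand $X^2_P(\eta;G) = X_P(\eta+G) - X_P(\eta) - \di X_P(\eta)G$ by Taylor's integral formula. Since $X_P = J\grad H_P$ is built from $\beta'(|\psi|^2)\psi$, the remainder is at least quadratic in $G$ with coefficients of the form $\beta^{(j)}(|\eta+sG|^2)$, $s\in[0,1]$, multiplied by Schwartz-class factors in $\eta_0$. Assumption \eqref{num} with $p<2$ ensures that these coefficients have at most mild polynomial growth, compatible with Sobolev embedding. Then $\|X^2_P(\eta;G)\|_{W^{1,6/5}}$ is estimated by Leibniz plus H\"older: the quadratic part $\|a(\eta)G^2\|_{L^{6/5}}$ is bounded using the triple $(6,6,2)$ (note $1/6+1/6+1/2 = 5/6$) by $\|a(\eta)\|_{L^6}\|G\|_{L^6}\|G\|_{L^2}\sleq \|G\|_{W^{1,6}}\|G\|_{H^1}$, via $H^1\hookrightarrow L^6$; the gradient $\partial(a(\eta)G^2) = (\partial a)G^2 + 2a\,G\,\partial G$ is handled with the triples $(6,6,2)$ and $(\infty,3,2)$, together with $\|G\|_{L^3}\sleq\|G\|_{H^1}$. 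The cubic and higher contributions from the Taylor expansion contribute the factor $(1+\|G\|_{H^1})$. Substituting $G = \phi+S^2_1$ and using that $S^2_1\in\cS^2_1$ allows one to replace $G$ by $\phi$ up to terms of higher order in $\mu$, yielding the first inequality.

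The $S^2_2$ piece is controlled in the same spirit, exploiting that it is smoothing and quadratically vanishing in $\phi$: by the definition of $\cS^2_2$ one has $\|S^2_2\|_{W^{1,6/5}}\sleq \mu^2\|\phi\|_{L^2}^2$, which fits into the stated bound after invoking the a priori envelope on $\phi$. The second inequality in the statement then follows from the pointwise bound $\|\phi(t)\|_{H^1}\sleq \mu M_1$ coming from \eqref{ap.1} with the admissible pair $(r,s) = (\infty,2)$, which turns the factor $\|\phi\|_{H^1}(1+\|\phi\|_{H^1})$ into $O(\mu)$. The main technical obstacle is the careful bookkeeping of the growth of the $\beta^{(j)}$'s and the verification that every H\"older triple used lies within the integrability range provided by $p<2$, while tracking the smoothing corrections so that they remain absorbable into $S^2_2$ at the claimed order in $\mu$ and $\phi$.
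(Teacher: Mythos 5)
Your proposal follows essentially the same route as the paper's own proof: explicit Taylor expansion of $X^2_P$, pointwise control of the quadratic/higher-order remainder using the growth hypothesis \eqref{num} with $p<2$, H\"older with exponents summing to $5/6$, Sobolev embedding $H^1\hookrightarrow L^6$, a Leibniz step to upgrade to $W^{1,6/5}$, then substitution $G=\phi+S^2_1$ and the a priori envelope $\|\phi\|_{H^1}\sleq\mu$ to get the second inequality. The only cosmetic difference is that the paper packages the remainder into a single clean pointwise bound $|\Phi|^2\langle x\rangle^{-k}+|\Phi|^3+|\Phi|^5$ before taking norms, whereas you carry the integral-form Taylor coefficients a bit further; the substance is the same.
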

\begin{proof} We start by estimating the norm of $X^2_P(\eta;\Phi)$, with
arbitrary $\eta$ of Schwartz class. First remark that
$X^2_P(\eta;\Phi)$ is given by ($\Pipz$ applied
to) the function
\begin{equation}
\label{beta.estimate}
\beta'(\left|\eta+\Phi\right|^2)(\eta+\Phi)-\beta'(|\eta|^2)\eta
-\beta'(|\eta|^2)\Phi -\beta''(|\eta|^2)\left|\eta\right|^2(\Phi+\bar
\Phi)
\end{equation}
whose modulus is easily estimated (using also \eqref{num}), obtaining that
$$
\mmod{\eqref{beta.estimate}} \sleq \left[\frac{|\Phi|^2}{\langle x\rangle^{k}}+|\Phi|^3+|\Phi|^5\right]\ ,
$$
with arbitrary $k$. Thus one has
\begin{equation}
\label{estix2}
\norma{X^2_P(\eta;\Phi)}_{L^{6/5}_x}\leq
C[\left\|\Phi\right\|^2_{L^6_x}+\left\|
  \Phi\right\|_{L^2_x}\left\|\Phi\right\|^2_{L^6_x}+\left\|\Phi\right\|_{L^6_x}^5]
\ .
\end{equation}
Exploiting Sobolev embedding theorem one gets that this is controlled
by
\begin{equation}
\label{estix2.1}
\left\|\Phi\right\|_{L^6_x}\left\|\Phi\right\|_{H^1_x} [1+\left\|\Phi\right\|_{H^1_x}^3] \ ,
\end{equation}
and, exploiting Leibniz formula  one also gets
\begin{equation}
\label{estix2.2}
\norma{X^2_P(\eta;\Phi)}_{W^{1,6/5}_x}\sleq
\left\|\Phi\right\|_{W^{1,6}_x}\left\|\Phi\right\|_{H^1_x} [1+\left\|\Phi\right\|_{H^1_x}^3] \ .
\end{equation}
Adding the simple estimate of $G$, $[\di G]^*$
and $S^2_2$ one gets the thesis.
\end{proof}

\noindent{\it End of the proof of Lemma \ref{estiphi}}. Consider the
integral equation \eqref{phi.1}--\eqref{phi.3}. Using the Strichartz
estimates \eqref{stric.est}--\eqref{stric.est.2}, the estimates
\eqref{estilin}--\eqref{estinon}, and the inductive assumptions
\eqref{ap.1}--\eqref{ap.2} one has, writing explicitly the constants  
\begin{align}
\label{stifin}
\norma{\phi}_{L^r_t[0,T]W_x^{1,s}}\leq &
C_0\left\|\phi_0\right\|_{H^1_x} + C(M_1,M_2)\mu^2\left\|\phi \right\|_{L^2_t[0,T]
  W^{1,6}_x} +
\left\|S^\fr _{0,hom}\right\|_{L^1_{t}[0,T]H^1_x} \\ \leq & C_0 K_0
\mu +C(M_1,M_2)\mu^3 +C(M_1,M_2)T\mu^{\fr}\ ,
\end{align}
which is smaller then $\mu M_1/2$ provided one chooses $M_1$ large
enough, $\fr \geq 3$, $0\leq \mu\leq \mu_*$ with $\mu_*$ small enough and $T
< T_0/\mu^{\fr-3}$.\qed

\vspace{1em}
Now we use the dispersive estimates of Lemma \ref{estiphi} to
prove that the quantity
$$
H_{L}(\phi) = \frac{1}{2} \la E L_0 \phi, \phi \ra
$$ is almost conserved for times of order $T_0/\mu^{\fr-3}$. We need the
following preliminary lemma
\begin{lemma}
\label{lem:HL0.est}
Let $X \in C^0(\cU, W^{1,6/5}_x)$ be a vector field, and  let $w^j
\in C^0([0, T], \R)$, $1 \leq j \leq 4$, be functions depending also
on $\mu$ and fulfilling
\begin{align}
\label{X.est.0}
\sup_{\substack{q \in \R^4, \norm{p} \leq M_1 \\ \norm{\phi}_{H^1}\leq \mu M_2}}\norm{X(p,q,\phi)}_{W^{1, 6/5}_x} \sleq  \mu^2 \norm{\phi}_{W^{1, 6}_x} \ ,
\\
\label{w.est.0}
\sup_{t \in [0, T]}|w^j(t)| \sleq  \mu^2 , \qquad \forall j \ .
\end{align}
Then one has
\begin{align}
\label{EL0.X}
\sup_{\substack{q \in \R^4, \norm{p} \leq M_1 \\ \norm{\phi}_{H^1}\leq \mu M_2}} \mmod{\la E L_0 \phi, X(p,q,\phi) \ra } \sleq \mu^2
\norm{\phi}_{W^{1,6}_x}^2 \ ,
\\
\label{EL0.A}
\sup_{t \in [0,T]}\mmod{w^j(t) \la E L_0 \phi, J A_j \phi\ra } \sleq \mu^2
\norm{\phi}_{W^{1,6}_x}^2 \ .
\end{align}
\end{lemma}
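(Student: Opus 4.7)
The plan is to exploit the explicit block structure of $EL_0$ in real/imaginary components, together with integration by parts and H\"older's inequality. Writing $\phi = \phi_1 + \im\phi_2$, one computes $EL_0 = \mathrm{diag}(-L_+, -L_-)$, where each operator $L_\pm = -\Delta + V_\pm$ has a bounded potential $V_\pm$ whose non-constant part is of Schwartz class (since $b_{\cE}$ is Schwartz and $\beta$ is smooth). This block-diagonal form is the common starting point for both estimates.

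For \eqref{EL0.X}, I would interpret the pairing $\la EL_0\phi, X\ra$ through integration by parts on each component: up to sign the principal contribution is $\la \nabla\phi_k, \nabla X_k\ra$ and the remainder is $\la V_\pm\phi_k, X_k\ra$. H\"older with conjugate exponents $(6, 6/5)$ and boundedness of $V_\pm$ then yield $\mmod{\la EL_0\phi, X\ra} \sleq \norm{\phi}_{W^{1,6}_x}\norm{X}_{W^{1,6/5}_x}$, and substituting hypothesis \eqref{X.est.0} gives the claimed $\mu^2\norm{\phi}_{W^{1,6}_x}^2$.

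For \eqref{EL0.A}, the key observation (already used in the proof of Lemma~\ref{eqphi}) is that $\la EL_0\phi, JA_j\phi\ra$ reduces to $\la W_j\phi, \phi\ra$ for a multiplication operator $W_j$ with Schwartz coefficients. For $j = 1, 2, 3$ one has $JA_j = \partial_j$ acting componentwise; the principal-part contribution $\la-\Delta\phi_k, \partial_j\phi_k\ra = \tfrac12\int\partial_j\mmod{\nabla\phi_k}^2\,dx$ vanishes, while $\la V_\pm\phi_k, \partial_j\phi_k\ra$ integrates by parts to $-\tfrac12\int(\partial_j V_\pm)\phi_k^2\,dx$, with $\partial_j V_\pm$ Schwartz. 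For $j = 4$, $JA_4 = J$ and the self-adjointness of $L_\pm$ reduce $\la EL_0\phi, J\phi\ra$ to a multiple of $\la (L_- - L_+)\phi_1, \phi_2\ra$, and $L_- - L_+ = -2\beta''(b_\cE^2)b_\cE^2$ is Schwartz. In every case, H\"older with exponents $(3/2, 6, 6)$ gives $\mmod{\la W_j\phi, \phi\ra} \sleq \norm{W_j}_{L^{3/2}_x}\norm{\phi}_{L^6_x}^2 \sleq \norm{\phi}_{W^{1,6}_x}^2$, and combined with \eqref{w.est.0} the claimed bound follows.

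I do not anticipate any genuine obstacle: the only mildly delicate point is that $EL_0\phi$ a priori only lies in $H^{-1}$, whereas the test fields are in $W^{1,6/5}_x$ or have the form $JA_j\phi$, so the naive $L^2$-pairing is not directly defined. Both pairings are however well-defined through the integration-by-parts rewritings above, which exhibit them as dualities between H\"older-conjugate $L^p$-spaces, and the identities are routinely justified by density of Schwartz fields.
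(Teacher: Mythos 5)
Your proposal is correct and follows essentially the same route as the paper: both use the block-diagonal structure $EL_0 = -\Delta + W(\eta_0) + \cE_0$ (with $W(\eta_0)$ a multiplication operator with Schwartz coefficients), estimate \eqref{EL0.X} by a single integration by parts and H\"older with exponents $(6,6/5)$, and reduce \eqref{EL0.A} to the multiplication part after observing that the $-\Delta$ and constant contributions vanish by antisymmetry. Your extra integration by parts on the potential term and the separate treatment of $j=4$ are unnecessary refinements — the paper simply bounds $\mmod{\la W(\eta_0)\phi, JA_j\phi\ra}$ directly by $\norm{W(\eta_0)}_{L^{3/2}_x}\norm{\phi}_{W^{1,6}_x}^2$ — but they do no harm.
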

\begin{proof}
First we prove \eqref{EL0.X}. Using the specific form of $E L_0 :=
-\Delta + W(\eta_0) + \cE_0$, we get that
\begin{align*}
\mmod{\la E L_0 \phi, X(p,q,\phi) \ra} \leq & \mmod{\la \Delta
  \phi, X(p,q,\phi) \ra} + \mmod{\la W(\eta_0) \phi, X(p,q,\phi)
  \ra} + \mmod{\la \cE_0 \phi, X(p,q,\phi) \ra} \\ \leq &
\norm{\nabla \phi}_{L^6_x}\norm{\nabla X}_{L^{6/5}_x} +
\left(\norm{W(\eta_0)}_{L^{\infty}_x} + |\cE_0| \right)\norm{\phi}_{L^6_x}
\norm{X}_{L^{6/5}_x} \\ \sleq & \mu^2  \norm{\phi}_{W^{1, 6}_x}^2
\ .
\end{align*}
Thus \eqref{EL0.X} is proved. We prove now \eqref{EL0.A}.  Once again
we use the specific form of $L_0$, and the fact that since $\Delta$,
$A_4$ and $A_j$ are self-adjoint commuting operator and $J$ is
skew-symmetric, one has
$$
\la \Delta \phi, JA_j \phi \ra = 0= \la  \phi,  JA_j \phi \ra  \ .
$$
 Thus it follows that
\begin{align*}
\mmod{\la E L_0 \phi, J A_j \phi \ra} =    \mmod{\la W(\eta_0) \phi, J A_j \phi \ra}
\leq  \norm{W(\eta_0) \phi}_{L^{6/5}_x} \norm{\nabla \phi}_{L^6_x} \leq \norm{W(\eta_0)}_{L^{3/2}_x} \norm{\phi}_{W^{1, 6}_x}^2  \ .
\end{align*}
This estimate together with \eqref{w.est.0} implies  \eqref{EL0.A}.
\end{proof}

In the next lemma we show that $H_{L}(t) := \langle E L_0 \phi(t),
\phi(t) \rangle/2$ stays very close to its initial value for large
times.

\begin{lemma}
\label{lem:HLO.diff}
Under the same assumptions of  Lemma \ref{estiphi}, assume  $T
<T_0/\mu^{\fr-3}$ then one has
\begin{equation}
\sup_{t \in [0,T]} \mmod{H_{L}(t) - H_{L}(0)} \sleq  \mu^4  \ .
\end{equation}
\end{lemma}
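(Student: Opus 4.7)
The plan is to differentiate $H_L(\phi(t))$ in time and exploit the normal form structure together with the Strichartz estimates of Lemma \ref{estiphi}. Symmetry of $EL_0$ gives $\dot H_L = \la EL_0 \phi, \dot\phi\ra$, and the free evolution contributes zero since $\la EL_0\phi, L_0\phi\ra = \la EL_0\phi, J EL_0\phi\ra = 0$ (with $A := EL_0$ symmetric and $J$ skew-symmetric, $AJA$ is skew-symmetric, hence $\la\phi, AJA\phi\ra = 0$). Thus it remains to control the pairings of $EL_0\phi$ with the remaining inhomogeneities in the evolution equation from Lemma \ref{eqphi}.

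These split into three groups. The term $w^j \Pipz JA_j\phi$ is decomposed as $w^j JA_j\phi + w^j(\Pipz-\id)JA_j\phi$: for the first piece, \eqref{EL0.A} combined with \eqref{estiw1} gives the bound $\mu^2\norm{\phi}^2_{W^{1,6}_x}$; for the second, Remark \ref{pi} says $\id-\Pipz$ is a finite-rank operator whose range and kernel functionals are Schwartz, and integrating the derivative $JA_j$ by parts onto the Schwartz rank vectors rewrites $(\Pipz-\id)JA_j\phi$ in the form $\sum_k \la\xi_k,\phi\ra g_k$ with $\xi_k,g_k$ of Schwartz class, whence $\norm{(\Pipz-\id)JA_j\phi}_{W^{1,6/5}_x}\sleq \norm{\phi}_{L^6_x}$ and Lemma \ref{lem:HL0.est} applies to $X = w^j(\Pipz-\id)JA_j\phi$. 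All remaining terms from \eqref{p.1}--\eqref{p.2} combine, by Lemma \ref{lem:W.est}, Lemma \ref{lem:W.est2} and Remark \ref{w2est}, into a vector field $X(p,q,\phi)$ satisfying $\norm{X}_{W^{1,6/5}_x}\sleq\mu^2\norm{\phi}_{W^{1,6}_x}$; then \eqref{EL0.X} yields $|\la EL_0\phi, X\ra|\sleq \mu^2\norm{\phi}^2_{W^{1,6}_x}$. Finally, $S^{\fr}_{0,hom}$ is $\phi$-independent, Schwartz in $x$ and of size $\mu^{\fr}$, so by duality $|\la EL_0\phi, S^{\fr}_{0,hom}\ra|\sleq \mu^{\fr}\norm{\phi}_{L^2_x}\sleq \mu^{\fr+1}$.

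Integrating on $[0,t]$ with $t \leq T \leq T_0/\mu^{\fr-3}$ and using the inductive Strichartz bound \eqref{ap.1} with the admissible pair $(2,6)$,
\begin{equation*}
|H_L(t) - H_L(0)| \sleq \mu^2 \norm{\phi}^2_{L^2_t[0,T] W^{1,6}_x} + \mu^{\fr+1}\,T \sleq \mu^2(\mu M_1)^2 + T_0\,\mu^{\fr+1-(\fr-3)} \sleq \mu^4,
\end{equation*}
which is the desired bound. The main technical delicacy is the $(\Pipz-\id)JA_j\phi$ piece: a direct $W^{1,6/5}_x$-bound for it fails because $JA_j$ is a first-order differential operator and the Strichartz norm only controls one derivative of $\phi$, so one must use the finite-rank Schwartz structure of $\id-\Pipz$ to shift the derivative off $\phi$ onto fixed Schwartz functions. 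The other two groups are routine applications of the estimates already prepared in Lemmas \ref{lem:W.est}, \ref{lem:W.est2} and \ref{lem:HL0.est}.
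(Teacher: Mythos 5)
Your proof is correct and proceeds exactly as the paper's: differentiate $H_L$, discard $\la EL_0\phi,L_0\phi\ra$ by skew-symmetry, route the remaining inhomogeneities from Lemma \ref{eqphi} through the prepared estimates \eqref{EL0.A} and \eqref{EL0.X} (fed by Lemmas \ref{lem:W.est}, \ref{lem:W.est2} and Remark \ref{w2est}) and an $L^1_t$ bound for $S^{\fr}_{0,hom}$, then integrate on $[0,T]$ with $T\le T_0/\mu^{\fr-3}$ using the inductive Strichartz bound \eqref{ap.1}. Your explicit split of $\Pipz JA_j\phi$ into $JA_j\phi$ plus the finite-rank smoothing remainder $(\Pipz-\id)JA_j\phi$ is in fact slightly more careful than the paper's $I_2$, which silently drops the projector; you correctly note that the remainder falls into the \eqref{EL0.X} class after integrating $A_j$ onto the Schwartz range vectors, and the final bookkeeping is unchanged.
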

\begin{proof}
To begin with, we write
$$
H_{L}(t) = H_{L}(0) + \int_0^t \frac{d}{dt} H_{L}(\tau) \, d\tau = H_{L}(0) + \int_0^t \la E L_0 \phi(\tau), \dot \phi(\tau) \ra \, d\tau \ .
$$ Substituting the equations of motion of $\phi$, we obtain that
$\int_0^t \la E L_0 \phi(\tau), \dot \phi(\tau) \ra d\tau = \sum_{j=1}^5
I_j$, where
\begin{align*}
I_1(t) := \int_0^t \la E L_0 \phi(\tau), L_0 \phi(\tau) \ra  d\tau ,
\quad
I_2(t) := \int_0^t \la E L_0 \phi(\tau), w^j(\tau) J A_j \phi(\tau) \ra  d\tau ,
 \\
I_3(t) :=  \int_0^t \la E L_0 \phi(\tau),
\mu^4\Pipz\widetilde
  V_{q(\tau)}\phi(\tau)+W^2_0\phi(\tau)+S^2_1(\tau)
\ra  d\tau ,
\\
I_4(t) := \int_0^t \la E L_0 \phi(\tau), \mu J[\di
    G]^*EX_P^2(\eta_0+S^2_{0,hom};G)+S^2_2 +\frac{1}{2}\langle
\left(J\nabla_\phi W^2_0\right)\phi;\phi\rangle\ra  d\tau ,
\\
I_5(t) := \int_0^t \la E L_0 \phi(\tau), S^\fr _{0,hom}(\tau)  \ra  d\tau .
\end{align*}
By the skew-symmetry of $E$, $I_1 \equiv 0$.  Consider now $I_2$. By
Remark \ref{estiw} the functions $w^j(t)$, $1 \leq j \leq 4$, satisfy
estimate \eqref{w.est.0}. By Lemma \ref{lem:HL0.est} it follows that,
for every $0 \leq t \leq T$,
$$
\mmod{I_2(t)} \leq \int_0^t \mmod{\la E L_0 \phi(\tau), w^j(\tau) J A_j \phi(\tau) \ra } \,  d\tau \sleq  \mu^2   \norm{\phi}_{L^2_t {[0,T]}W^{1,6}_x}^2  \ .
$$
Consider now $I_3(t)$. By Lemma \ref{lem:W.est} the vector field at
r.h.s. of the scalar product satisfies the estimate \eqref{X.est.0},
therefore one has
$$
\mmod{I_3(t)} \sleq \mu^2  \norm{\phi}_{L^2_t[0,T]W^{1,6}_x}^2 \ .
$$
The term $I_4(t)$ is estimated in a similar  way, using Lemma
\ref{lem:W.est2} and Remark \ref{w2est}.

We estimate now $I_5(t)$. Using that $EL_0 = -\Delta + W(\eta_0) + \cE$,
one has
\begin{align*}
\mmod{\la E L_0 \phi,  S^\fr _{0,hom}  \ra} =\mmod{\la\phi,  E L_0
  S^\fr _{0,hom}  \ra} \sleq  \norma{\phi}_{L^2_x}\mu^\fr  \ .
\end{align*}
Inserting this estimate in the expression for $I_5(t)$, one gets that
\begin{align*}
\sup_{t \in [0,T]}|I_5(t)| \sleq
\left\|\phi\right\|_{L^\infty_t[0,T]H^1_x}T\mu^{\fr } \ .
\end{align*}
Altogether we have  that
$$ \sup_{t \in [0,T]}\mmod{H_{L}(t) - H_{L}(0)} \sleq  \mu^2
\norm{\phi}_{L^2_t {[0, T]}W^{1,6}_x}^2 +
\left\|\phi\right\|_{L^\infty_t[0,T]H_x^1}T\mu^{\fr} \ .
$$
Using  estimate \eqref{ap.3} and taking $T <T_0/\mu^{\fr -3}$ one gets the claim.
\end{proof}

We are finally ready to prove that the mechanical energy of the soliton does not change for long times.
\begin{theorem}
\label{thm:h}
Under the same assumptions of Lemma \ref{estiphi}, there exists
$C(M_1,M_2)$ s.t., for $T<T_0/\mu^{\fr
  -3}$, one has
\begin{equation} \sup_{t \in [0,T]}
\mmod{\fh_m(t) - \fh_m(0)} \leq  C(M_1,M_2)\mu^2 \ .
\end{equation}
\end{theorem}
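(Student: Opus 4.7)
The plan is to use the fact that the transformed Hamiltonian $H^{(\fr)}$ has no explicit time dependence and is therefore conserved along the flow it generates. From Lemma \ref{scaled} (applied with $\cA = \cT^{(\fr)}$) and Remark \ref{id} we have the decomposition $H^{(\fr)} = H_L + \mu^2 \fh_m + H_R$ given by \eqref{Hsca}--\eqref{rest}, and conservation therefore gives, for every $t \in [0,T]$,
\begin{equation*}
\mu^2 \bigl(\fh_m(t) - \fh_m(0)\bigr) = -\bigl(H_L(t) - H_L(0)\bigr) - \bigl(H_R(t) - H_R(0)\bigr) \ .
\end{equation*}
Lemma \ref{lem:HLO.diff} already controls the first difference by $\cO(\mu^4)$, so the whole proof reduces to an a priori pointwise bound $|H_R(t)| \sleq \mu^4$ (which, applied both at $t=0$ and at $t\in[0,T]$, dominates the second difference by triangle inequality).

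To produce this pointwise bound I would first promote the Strichartz estimate \eqref{ap.3} of Lemma \ref{estiphi}, specialized to the admissible pair $(r,s)=(\infty,2)$, into the uniform $H^1$ control
\begin{equation*}
\sup_{t\in[0,T]}\norma{\phi(t)}_{H^1_x} \leq M_1 \mu \ ,
\end{equation*}
which in turn gives $|N(t)| = |\cP(\phi(t))| \sleq \mu^2$ uniformly. Combined with the hypothesis $\sup_t \norma{p(t)} \leq M_2$, this places the trajectory inside the domain $\cU$ of \eqref{defdomin} on which all constants hidden in the smoothing classes appearing in \eqref{rest} are uniform (Remark \ref{domas}).

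Next I would go through the seven summands in \eqref{rest} and bound each by $C(M_1, M_2)\mu^4$. The term $\mu^2 D(N,p)$ is $\cO(\mu^4)$ since $D$ vanishes at $N=0$ and $|N|\sleq \mu^2$; the quadratic form $\tfrac12\la W_0^2 \phi,\phi\ra$ is $\cO(\mu^4)$ because $W_0^2$ acts by multiplication by elements of $\cS^2_0$ (of $L^\infty$ size $\mu^2$) and $\norma{\phi}_{L^2}^2 \sleq \mu^2$; the term $\tfrac{\mu^4}{2}\la V_q \phi,\phi\ra$ is even $\cO(\mu^6)$; the cubic contribution $\mu H^3_P(\eta_0+S^2_{0,hom};\phi+S^2_1)$ is $\cO(\mu^4)$ because the definition \eqref{VH4} together with \eqref{num} gives $|H^3_P(\eta;\Phi)| \sleq \norma{\Phi}_{H^1}^3$ for smooth $\eta$ and small $\Phi$, and $\phi + S^2_1$ has $H^1$ norm of order $\mu$; finally $R^4_{0,hom}$, $R^3_1$ and $R^2_2$ are respectively of sizes $\mu^4$, $\mu^3\cdot\mu$ and $\mu^2\cdot\mu^2$ directly from the definition of the new classes $\cR^k_l$.

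Putting everything together, $\sup_{t\in[0,T]}|H_R(t)| \leq C(M_1,M_2)\mu^4$ and hence
\begin{equation*}
\mu^2 |\fh_m(t) - \fh_m(0)| \leq |H_L(t) - H_L(0)| + 2\sup_{s\in[0,T]}|H_R(s)| \sleq \mu^4 \ ,
\end{equation*}
which gives the claim after dividing by $\mu^2$. The whole argument is essentially bookkeeping within the smoothing calculus of Section \ref{ada}--\ref{coordiHam}; the only nontrivial analytical input is the almost conservation of $H_L$ proved in Lemma \ref{lem:HLO.diff} (which in turn relied on the Strichartz estimates and on Lemma \ref{lem:HL0.est}), so no new dispersive or structural ingredient is needed here.
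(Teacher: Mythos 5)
Your proof is correct and takes the same route as the paper: conservation of $H^{(\fr)}$, the decomposition $H^{(\fr)}=H_L+\mu^2\fh_m+H_R$ from \eqref{Hsca}, the almost-conservation of $H_L$ from Lemma \ref{lem:HLO.diff}, and a pointwise $\cO(\mu^4)$ bound on $H_R$ derived term by term from \eqref{rest}. The paper states the $|H_R|\sleq\mu^4$ bound in one line; you supply the routine verification, but the logical structure is identical.
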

\begin{proof}
Consider $H^{(\fr )}$; by the conservation of energy, one has that
$H^{(\fr )}(t) \equiv H^{(\fr )}(p(t), q(t), \phi(t))=H^{(\fr )}(0)$. Write
$H^{(\fr )} = \mu^2\fh_m + H_L + H_R$ (as in \eqref{Hsca}), and remark
that, under the inductive assumptions \eqref{ap.1}, \eqref{ap.2},
$\left|H_R(t)\right|\leq C\mu^4$ so that one has for every $0 < t < T< T_0/\mu^{\fr -3}$
\begin{align*}
\mu^2 \mmod{\fh_m (t) - \fh_m (0)} \leq \mmod{H_{L} (t) - H_{L} (0)} +
\left|H_R(t)\right|+\left|H_R(0)\right|\leq C(M_1,M_2)\mu^4 \ .
\end{align*}
\end{proof}

The last step is to show that the inductive assumption \eqref{ap.2} holds. This is provided by
the following lemma.

\begin{lemma}
\label{cp}
Assume that \eqref{ap.1}, \eqref{ap.2} hold. Then, provided $M_2$ is
large enough, one has that, provided
$T<T_0/\mu^{\fr -3}$, one has
\begin{equation}
\label{stifin.p}
\sup_{t\in[0,T]}\norma{p(t)}\leq \frac{M_2}{2}\ .
\end{equation}
\end{lemma}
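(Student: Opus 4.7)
The strategy is to combine Theorem~\ref{thm:h} (almost conservation of $\fh_m$) with an a priori bound on $\fh_m(0)$ coming from the hypothesis \eqref{epsilon0} of Theorem~\ref{main}. The crucial point is that the error $C(M_1,M_2)\mu^2$ appearing in Theorem~\ref{thm:h}, although its constant depends on the bootstrap parameters, is of order $\mu^2$ and hence becomes negligible once $M_2$ has been fixed and $\mu$ is chosen small enough (after the choice of $M_2$).

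First I would check that in the rescaled and normal-form coordinates, $\fh_m(p(0),q(0))$ is bounded by a constant $C_0$ depending only on $K_1,K_2$ (and the physical data $m,V$), but not on $M_1,M_2$. From \eqref{epsilon0}, the bound $H^\epsilon_{mech}(\bar\bp,\bar\bq)<K_2\epsilon$ translates under the scaling \eqref{second} into $\fh_m(\bar\bp/\mu^2,\bar\bq)<K_2$. By Corollary~\ref{bar}, the inverse images $(p,q,\phi):=\coS^{-1}(\psi_0)$ satisfy $\|p-\bar\bp\|\sleq K_1$ in the rescaled variables; the Darboux map and the normal form further perturb $p$ by $\cO(\mu^3)$ via \eqref{defor}. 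Since $V$ and $b_{\cE(m)}$ are Schwartz, $V^{eff}_m$ is uniformly bounded on $\R^3$, so $\fh_m(p(0),q(0))\leq C_0(K_1,K_2)$.

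Next, by Theorem~\ref{thm:h}, for $T<T_0/\mu^{\fr-3}$ and under \eqref{ap.1}--\eqref{ap.2},
\[
\frac{|\bp(t)|^2}{2m}\leq \fh_m(t)+\|V^{eff}_m\|_{L^\infty}\leq C_0(K_1,K_2)+\|V^{eff}_m\|_{L^\infty}+C(M_1,M_2)\mu^2.
\]
Fixing $\mu_*=\mu_*(M_1,M_2)$ small enough that $C(M_1,M_2)\mu_*^2\leq 1$, we obtain $|\bp(t)|\leq \widetilde C(K_1,K_2)$ for every $t\in[0,T]$. For the fourth component, the gauge invariance of $H$ is inherited by $H_D$ (Remark~\ref{invar}) and by every step of the normal form construction, since each generator $\chi_r$ inherits the independence from $q^4$ from that of the preceding Hamiltonian; hence $p_4$ is exactly conserved along the flow of $H^{(\fr)}$ and $|p_4(t)|=|p_4(0)|\leq C_0(K_1,K_2)$. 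Combining the two bounds gives $\|p(t)\|\leq C'(K_1,K_2)$, and choosing $M_2\geq 2C'$ yields $\|p(t)\|\leq M_2/2$.

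The main (but mild) point to watch is that $C_0$ is genuinely independent of $M_1,M_2$; equivalently, that $\coS^{-1}$, $\cD^{-1}$ and $(\cT^{(\fr)})^{-1}$ map the initial datum into a point whose rescaled $(p,q)$-norm is controlled by $K_1,K_2$ alone. This is straightforward from Corollary~\ref{bar} together with the closeness estimates \eqref{defor}, but it is what makes the bootstrap loop close: $M_2$ is chosen depending only on $K_1,K_2$, and only afterwards is $\mu_*$ chosen (depending on $M_1,M_2$) to control the $\mu^2$-error from Theorem~\ref{thm:h}.
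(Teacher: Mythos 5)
Your proof is correct and takes essentially the same route as the paper's: use Theorem~\ref{thm:h} (almost conservation of $\fh_m$ with error $C(M_1,M_2)\mu^2$), the global boundedness of $V^{eff}_m$, and the conservation of $p_4$, then close the bootstrap by choosing $M_2$ large (depending only on the data, not on $M_1,M_2$) and $\mu_*$ small afterwards. The one organizational difference is that the paper's proof works directly from the section-level hypothesis \eqref{ini}, namely $\norma{p_0}\leq K_0$, to bound $\fh_m(0)\leq K_0^2/2m + \sup\mmod{V^{eff}_m}$, and defers the translation from \eqref{epsilon0} to \eqref{ini} to the proof of Theorem~\ref{main}; you instead trace the bound on $\fh_m(0)$ all the way back to \eqref{epsilon0} via Corollary~\ref{bar} and \eqref{defor}, which is correct but duplicates work the paper places elsewhere.
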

\begin{proof} First remark that $p_4$ is an integral of motion, then just use
the form of $\fh_m$, namely
$$
\fh_m(p,q)=\frac{|\bp|^2}{2m}+V^{eff}_m(\bq)\ ,
$$ the fact that $V^{eff}_m$ is globally bounded to get 
\begin{equation}
\label{stifp}
\frac{\left\|\bp(t)\right\|^2}{2m}\leq
\fh_m\big|_{t=0}+C(M_1,M_2)\mu^2
+\sup_{\bq\in\R^3}\left|V^{eff}_m(\bq)\right| \leq \frac{K_0^2}{2m}+
C(M_1,M_2)\mu^2 +2\sup_{\bq\in\R^3}\left|V^{eff}_m(\bq)\right|\ ,
\end{equation}
which is smaller than 
$$
\left(\frac{M_2}{2}\right)^2\frac{1}{2m}\ ,
$$ provided $M_2$ is sufficiently large and $\mu$ sufficiently small.
\end{proof}

So (changing $\fr$ to $\fr+3$) we have obtained the following lemma.

\begin{lemma}
\label{lem:stifin}
Fix $K_0$, $T_0$, and $\fr$, then there exists positive $\mu_{\fr}$,
$M_1,M_2$ s.t., provided $0\leq \mu<\mu_{\fr}$, the following holds true:
assume that the initial data fulfill
\begin{equation}
\label{kzero}
\norma{\phi_0}_{\cV^{1,0}}\leq \mu K_0\ ,\quad \norma{p_0}\leq K_0\ ,
\end{equation}
then, along the corresponding solution one has
\begin{equation}
\label{stifina}
\norma{\phi(.)}_{L^r_t[0,T_0/\mu^{\fr}
  ]W^{1,s}_x}\leq \mu M_1\ , \quad\norma{p(.)}_{L^\infty_t[0,T_0/\mu^{\fr}]}\leq
M_2\
\end{equation}
for any admissible pair $(r,s)$.  Furthermore there exists $K_3$
s.t. one has
\begin{equation}
\label{esm}
\sup_{0\leq t\leq T_0/\mu^{\fr}}\left| \fh_m(t)-\fh_m(0) \right|\leq
K_3\mu^2\ .
\end{equation}
\end{lemma}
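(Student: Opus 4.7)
The plan is to close the three preceding estimates by a standard continuity/bootstrap argument on the time interval $[0,T_0/\mu^{\fr+3}]$ (the shift $\fr \leadsto \fr+3$ accounts for the gap between the lemma's horizon and the horizon $T_0/\mu^{(\fr+3)-3}=T_0/\mu^{\fr}$ that appears in the hypotheses of Lemma \ref{estiphi}, Lemma \ref{cp}, and Theorem \ref{thm:h}). First I would fix the order $\fr' := \fr+3$ of the normal form, work with the Hamiltonian $H^{(\fr')}$ produced by Theorem \ref{normal.form}, and choose the constants $M_1,M_2$ (depending only on $K_0,T_0,\fr$) large enough so that (a) Lemma \ref{estiphi} applies with this $M_1$ and produces the strict improvement $\mu M_1/2$, and (b) the inequality \eqref{stifp} in the proof of Lemma \ref{cp} forces $\|\bp(t)\|^2/(2m)<(M_2/2)^2/(2m)$. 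In particular, since the initial bound \eqref{kzero} gives $\|p_0\|\le K_0$ and $\|\phi_0\|_{H^1}\le\mu K_0$, once $M_2>2K_0$ and $M_1>2C_0K_0$ (with $C_0$ from the Strichartz estimate of Lemma \ref{pere}) the initial data automatically satisfy the bootstrap hypotheses with strict slack.

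Next I would invoke local existence of the flow of $H^{(\fr')}$ in $\cK^{1,0}$: since $H^{(\fr')}=H_D\circ\cT^{(\fr)}$ and both $\cT^{(\fr)}$ and $\cD^{-1}\circ\coS^{-1}$ preserve a neighbourhood of $\Tr_0$ in $H^1$, the standard local well-posedness of the NLS \eqref{NLSV} in $H^1$ (guaranteed by the subcritical growth \eqref{num}) transports to a local $H^1$ flow in the normal-form coordinates. In particular $t\mapsto(p(t),q(t),\phi(t))$ is continuous into $\R^4\times\R^4\times\cV^{1,0}$ up to a maximal time, and the Strichartz norms $\|\phi\|_{L^r_t[0,T]W^{1,s}_x}$ are continuous, monotone, and vanish as $T\downarrow 0$. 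Therefore the set
\[
\cE:=\bigl\{T\in[0,T_0/\mu^{\fr'}]\,:\,\|\phi\|_{L^r_t[0,T]W^{1,s}_x}\le\mu M_1\ \text{for every admissible }(r,s),\ \sup_{[0,T]}\|p(t)\|\le M_2\bigr\}
\]
is a nonempty closed subinterval $[0,T^*]$, and contains an open right neighbourhood of $0$ by continuity.

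Now comes the bootstrap proper. Assuming $T^*<T_0/\mu^{\fr'}$, at $T=T^*$ the inductive hypotheses \eqref{ap.1}-\eqref{ap.2} hold on $[0,T^*]$; applying Lemma \ref{estiphi} (with $\fr$ there equal to $\fr'$, so that the window $T<T_0/\mu^{\fr'-3}=T_0/\mu^{\fr}$ covers our interval) yields the strict improvement
\[
\|\phi\|_{L^r_t[0,T^*]W^{1,s}_x}\le \mu\tfrac{M_1}{2}
\]
for every admissible pair, while Lemma \ref{cp} gives $\sup_{[0,T^*]}\|p(t)\|\le M_2/2$. By continuity, both strict inequalities persist on $[0,T^*+\delta]$ for some $\delta>0$, contradicting the maximality of $T^*$. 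Hence $T^*=T_0/\mu^{\fr'}$ and \eqref{stifina} holds on the full window (with $\fr$ relabeled as $\fr'$). The energy bound \eqref{esm} is then an immediate consequence of Theorem \ref{thm:h}, which applies with the same horizon since its hypotheses are precisely \eqref{ap.1}-\eqref{ap.2}.

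I expect the only nontrivial obstacle to be the verification that the hypotheses of the three improvement lemmas really hold on the whole of $[0,T^*]$ with constants uniform in $\mu$: this requires that the domain $\cU$ in \eqref{defdomin}, over which all smoothing estimates in Lemma \ref{scaled} are uniform, is exactly the one dictated by the bootstrap constants $M_1,M_2$, and that $\mu_{\fr'}$ from Theorem \ref{normal.form} is chosen small enough so that the normal form $H^{(\fr')}$ is well defined on $\cU$. This is why the order of quantifiers matters: one first fixes $M_1,M_2$ from the bootstrap, then shrinks $\mu_{\fr}$ until all the $\mu$-dependent smallness conditions of Theorem \ref{normal.form}, Lemma \ref{estiphi}, Lemma \ref{cp}, and Lemma \ref{lem:HLO.diff} are simultaneously met. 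Apart from this careful threading of constants, no new analytic input is needed.
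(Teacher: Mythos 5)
Your bootstrap argument is exactly the one the paper intends: the preceding Lemmas \ref{estiphi} and \ref{cp} and Theorem \ref{thm:h} are precisely the closing estimates for the continuity argument, the relabeling $\fr\mapsto\fr+3$ is what the paper invokes with the phrase ``changing $\fr$ to $\fr+3$,'' and the paper simply omits the routine continuity details that you have filled in. Your proposal is correct and takes the same route as the paper.
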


\vspace{1em}

We conclude this section with the proof of Theorem \ref{main} and
Corollary \ref{1d}.
\vspace{1em}\\
\noindent {\em Proof of Theorem \ref{main}.} Here it is needed to
distinguish between the variables introduced by $\coS$ through
\eqref{res1}, and the variables obtained after application of Darboux
and normal form theorem. We will denote by $(p,q,\phi) $ the variables
introduced by \eqref{res1}, and by $(p',q',\phi')$ the variables s.t. 
$$
(p,q,\phi)=(\cD\circ\cT^{(\fr)})(p',q',\phi')\ .
$$ We define the functions $\alpha(t)=q^4(t)$, $\bp(t)$ and $\bq(t)$
to be the solutions of the equations of motion in the variables
\eqref{res1} (to get the theorem one actually has to scale back
$\bp$). Remark that with these notations all the preceding part of
this section deals with the variables $(p',q',\phi')$.  So, by
Corollary \ref{bar}, if the initial datum fulfills \eqref{epsilon0},
then in the variables $(p,q,\phi)$ the estimates \eqref{kzero}
hold. Then the same holds in the variables $(p',q',\phi')$ (due to the
definition of the class of $\cD$ and eq. \eqref{defor}). So we can
apply Lemma \ref{lem:stifin} getting the result in the variables
$(p',q',\phi')$. To get the final statement we have to show that it also
holds in the variables $(p,q,\phi)$ just defined. This follows from
\begin{align}
\label{hscalino}
&\left|\fh_m(p(t),q(t))-\fh_m(p(0),q(0))\right| \\ & \leq
\left|\fh_m(p(t),q(t))-\fh_m(p'(t),q'(t))\right|+\left|\fh_m(p'(t),q'(t))-\fh_m(p'(0),q'(0))\right|
\\
&\null\quad+\left|\fh_m(p(0),q(0))-\fh_m(p'(0),q'(0))\right|\sleq
\mu^2\ .
\end{align}
\qed

\noindent
{\it Proof of Corollary \ref{1d}}. First we work in the scaled
variables, cf. eq. \eqref{res1}. In this case the corollary is a
trivial consequence of the fact that, under its assumptions both $\bp$
and $\bq$ are actually one dimensional, so, at any moment they lie on
the curve identified by $\fh_m(t)$. In turn, by estimate
\eqref{hscalino}, such a curve is $O(\mu^2)$ close to the level
surface $\fh_m(0)$ (recall that we are assuming that we are not at
critical point of $\fh_m$). This is true in the standard distance of
$\R^2$. Scaling back the variables to the original variables, one gets
the result. \qed

\appendix

\section{Proof of Lemma \ref{invcoo}.} \label{invc}

First we prove a local result close to $0$.

\begin{lemma}
\label{l.con.coo}
There exists a mapping $\vphi(\psi )\equiv (p(\psi ),q(\psi ))$ with the
following properties
\begin{itemize}
\item[1)] $\forall k,s$ there exists an open neighborhood
  $\U^{-k,-s}\subset \cH^{-k,-s}$ of $\eta_{0}$ such that $\vphi\in
  C^\infty (\U_{-k,-s},\R^{2n})$
\item[2)] $e^{-q^j(\psi )JA_j}\psi -\eta_{p(\psi )}\in\Pi_{p(\psi )}
  \V^{-k,-s}$.
\end{itemize}
\end{lemma}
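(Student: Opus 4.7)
The plan is to construct $\vphi=(p(\psi),q(\psi))$ by the implicit function theorem applied to the eight scalar equations that characterize the decomposition. By the definition \eqref{symplectic.ort} of $T_{\eta_p}^{\angle}\Tr$, requiring $e^{-q^jJA_j}\psi-\eta_p\in\Pi_p\cV^{-k,-s}$ is equivalent to
\begin{align*}
\la A_j\eta_p,\,e^{-q^lJA_l}\psi-\eta_p\ra&=0,\quad j=1,\dots,4,\\
\la E\partial_{p_k}\eta_p,\,e^{-q^lJA_l}\psi-\eta_p\ra&=0,\quad k=1,\dots,4.
\end{align*}
Since $e^{q^lJA_l}$ is unitary with respect to $\la\cdot,\cdot\ra$, I would rewrite these as an equation $F(p,q,\psi)=0$ with $F:\R^4\times\R^4\times\cH^{-k,-s}\to\R^8$ of the form $F_j=\la e^{q^lJA_l}A_j\eta_p,\psi\ra-\la A_j\eta_p,\eta_p\ra$ (and analogously for $F_{j+4}$).

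The crucial point, and the only real obstacle, is regularity. The operators $JA_j$ for $j=1,2,3$ (translations) act continuously but \emph{not} smoothly on $\cH^{-k,-s}$, so $q\mapsto e^{-q^lJA_l}\psi$ is not differentiable. However, having moved the translation onto the Schwartz test functions $A_j\eta_p$ and $E\partial_{p_k}\eta_p$, which depend smoothly on $p$ by (H1), the map $F$ becomes jointly $C^\infty$ in all its arguments on a neighborhood of $(0,0,\eta_0)$ in $\R^8\times\cH^{-k,-s}$. This observation is what makes the IFT applicable in the distributional scale $\cH^{-k,-s}$.

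One trivially checks $F(0,0,\eta_0)=0$. To invoke the IFT it remains to show that the $8\times 8$ Jacobian $M:=\partial F/\partial(p,q)$ at $(0,0,\eta_0)$ is invertible. A direct computation using the isotropy identities \eqref{isotropic.cond} shows that the diagonal blocks of $M$ vanish, so $M$ is block anti-diagonal, and its off-diagonal blocks are, up to signs, the matrix $B_{jm}:=\la A_j\eta_0,\partial_{p_m}\eta_0\ra$. Differentiating the identities $\cP_j(\eta_p)=p_j$ ($j=1,2,3$) and $\cP_4(\eta_p)=2m+p_4$ in $p$ at $p=0$ shows that $B$ is (a rescaling of) the identity, hence $M$ is invertible.

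The implicit function theorem then yields, for each $(k,s)$, an open neighborhood $\U^{-k,-s}\subset\cH^{-k,-s}$ of $\eta_0$ and a $C^\infty$ map $\vphi=(p,q):\U^{-k,-s}\to\R^8$ such that $F(\vphi(\psi),\psi)=0$, which is property (1). Property (2) is then just the restatement of $F(\vphi(\psi),\psi)=0$ via the characterization of $\Pi_p\cV^{-k,-s}=T_{\eta_p}^{\angle}\Tr\cap\cH^{-k,-s}$ in \eqref{symplectic.ort}. Everything besides the regularity trick above is a standard application of IFT.
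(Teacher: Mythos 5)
Your proposal takes exactly the route of the paper's Appendix A proof: recast condition 2) as eight scalar equations, move the unbounded group action $e^{q^lJA_l}$ onto the Schwartz functions $A_j\eta_p$ and $E\,\partial_{p_k}\eta_p$ so that the resulting map $F$ becomes jointly $C^\infty$ in $(p,q,\psi)$ on $\R^8\times\cH^{-k,-s}$, and invoke the implicit function theorem at $(0,0,\eta_0)$ after checking invertibility of the $8\times8$ Jacobian. One bookkeeping correction, however: with $F_j=\la e^{q^lJA_l}A_j\eta_p,\psi\ra-\la A_j\eta_p,\eta_p\ra$ ($j\le4$), the analogous $F_{j+4}$, and variables ordered $(p,q)$, the Jacobian at the base point is block \emph{diagonal}, not anti-diagonal. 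The isotropy identities \eqref{isotropic.cond} make the two \emph{off-diagonal} $4\times4$ blocks vanish --- namely $\partial F_j/\partial q^k=\la\eta_0,JA_jA_k\eta_0\ra=0$ and $\partial F_{j+4}/\partial p_k=-\la E\,\partial_{p_j}\eta_0,\partial_{p_k}\eta_0\ra=0$ --- while the \emph{diagonal} blocks are $\mp B$ with $B_{jk}=\la A_j\eta_0,\partial_{p_k}\eta_0\ra=\delta_j^k$ (indeed the identity exactly, not merely a rescaling, by your differentiation of $\cP_j(\eta_p)=p_j$). Since either arrangement yields $|\det M|=(\det B)^2\neq0$, the invertibility and therefore your conclusion remain correct, but the structural claim should be fixed.
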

\proof Consider the condition 2). It is equivalent to the couple of
equations
\begin{align}
\label{e.con.coo}
0=f_l(p,q,\psi ):=\langle e^{-q^jJA_j}\psi -\eta_p;A_l\eta_p\rangle\equiv
\langle \psi ;e^{q^jJA_j}A_l\eta_p \rangle-2p_l=0\ ,
\\
0=g^l(p,q,\psi ):= \langle e^{-q^jJA_j}\psi -\eta_p;E\depv{\eta_p}l
\rangle\equiv \langle \psi ; e^{q^jJA_j}E\depv{\eta_p}l  \rangle
-\langle\eta_p ; E\depv{\eta_p}l\rangle
\end{align}
Both the functions $f$ and $g$ are smoothing, so we apply the
implicit function theorem in order to define the functions $q(\psi )$,
$p(\psi )$. First remark that the equations are fulfilled at
$(p,q,\psi )=(0,0,\eta_{0})$, then we compute the derivatives of such
functions with respect to $q^j,p_j$ and show that they are
invertible. We have
\begin{eqnarray*}
\depv{f_j}k\big\vert_{(0,0,\eta_{0})}=\left[\langle \psi ;e^{q^l
    JA_l}A_j\depv{\eta
    _{p}}k\rangle-2\delta_j^k\right]_{(0,0,\eta_{0})}=
-\delta_j^k\ ,
\end{eqnarray*}
where we used
\begin{equation}
\label{e.con.1}
\delta_j^k=\depv{\null}k\frac{1}{2}\langle \eta_{p};A_j\eta_{p}\rangle=
\langle \eta_{p}; A_j \depv{\eta_p}k\rangle\ .
\end{equation}
Then we have
\begin{equation}
\label{e.eq.q}
\frac{\partial f_j}{\partial
  q^k}\big\vert_{(0,0,\eta_{0})}=\langle\eta_{0};
JA_jA_k\eta_{0} \rangle=0
\end{equation}
by the skew-symmetry of $J$.

We come to $g$.
\begin{equation}
\label{e.eq.1}
\depv{g^j}k\big\vert_{(0,0,\eta_{0})}=\langle\eta_{0};
E\frac{\partial^2\eta_{0}
}{\partial p_j\partial p_k}\rangle-\langle\depv {\eta_{0}}k;E\depv{\eta_{0}}j
\rangle   -\langle\eta_{0};
E\frac{\partial^2\eta_{0}
}{\partial p_j\partial p_k}\rangle
\end{equation}
which vanishes. Finally we have
$$
\frac{\partial g^j}{\partial q^k}\big\vert_{(0,0,\eta_{0})}=\langle
A_k\eta_{0};\depv {\eta_{0}}j \rangle=\delta_k^j\ .
$$
Therefore the implicit function theorem applies and gives the result.
\qed

As a corollary one gets that close to $\eta_0$ one can define the map
$$\coS^{-1}(\psi):=(p(\psi),q(\psi),\widetilde{\Pi}_{p(\psi)}^{-1}(
e^{-q^j(\psi )JA_j}\psi -\eta_{p(\psi )} )\ , $$ where the inverse
$\widetilde{\Pi}_{p}$ of $\Pip$ is defined in Remark
\ref{smoothing.proj2}.

Repeating the argument of Lemma \ref{l.con.coo} at an arbitrary point
$e^{q^jJA_j}\eta_{p}$ one gets that the map $\coS$ is a local
homeomorphism (in anyone of the spaces of the scale), close to any
point of $\Tr_0$ and furthermore the size of the ball over which this
holds does not depend on the point of $\Tr_0$. In order to transform it into a
global homeomorphism we have just to identify points with the same
image, which of course are points in which the coordinate $q^4$
differs by $2\pi$. \qed

\section{Proof of Lemma \ref{l.6.1} }
\label{flow.chi}
In order to solve the system \eqref{flt.1}, we introduce some
auxiliary independent variables. In particular we will introduce the
$N$'s as auxiliary variables and we make the change of variables
$
\phi=e^{\alpha^jJA_j}u \ ,
$
and ask the quantities $\alpha^j$ to fulfill the equation $\dot
\alpha^j=s^j$. To get the equation for $N_b$ simply compute
\begin{align*}
\dot N_b= \langle A_b\phi;\dot \phi\rangle= s^j\langle
A_b\phi;JA_j\phi\rangle+
+s^j \langle
A_b\phi;(\Pipz-\id)JA_j\phi\rangle+\langle
A_b\phi;X\phi\rangle= R^a_{i+1}\ .
\end{align*}
Thus the original system turns out to be equivalent to
\begin{align}
\label{flt.5}
\dot p=P(\mu,N,p,q,e^{\alpha^jJA_j}u)\ ,\ \dot
q=Q(\mu,N,p,q,e^{\alpha^jJA_j}u)\ ,\quad \dot \alpha^l=s^l
(\mu,N,p,q,e^{\alpha^jJA_j}u)
\\
\label{flt.6}
\dot N= R^a_{i+1}(\mu,N,p,q,e^{\alpha^jJA_j}u)\ ,\ \dot u
=e^{-\alpha^jJA_j}s^l
(\Pipz-\id)JA_le^{\alpha^jJA_j}u+e^{-\alpha^jJA_j}X\ ,
\end{align}
which is a smooth system in all the spaces of the scale. Thus, by
standard contraction mapping principle, it admits a solution. To
obtain the estimate on the domain, and the fact that the solution
belongs to the wanted classes, just remark that on the domain
$\cU_\rho$ (extended by addition of the auxiliary variables), the
vector field \eqref{flt.5}--\eqref{flt.6} is dominated by a constant
times $\mu^a$ and take into account the degree of homogeneity in
$\phi$ of the various components.\qed

\addcontentsline{toc}{chapter}{Bibliography}
\nocite{*}


\end{document}